\newcommand{\E}{\mathbb{E}}
\renewcommand{\P}{\mathbb{P}}
\newcommand{\R}{\mathbb{R}}
\newcommand{\cB}{\mathcal{B}}
\newcommand{\cE}{\mathcal{E}}
\newcommand{\cF}{\mathcal{F}}
\newcommand{\cK}{\mathcal{K}}
\newcommand{\cM}{\mathcal{M}}
\newcommand{\cP}{\mathcal{P}}
\newcommand{\cS}{\mathcal{S}}
\newcommand{\cW}{\mathcal{W}}
\newcommand{\cX}{\mathcal{X}}
\newcommand{\one}{\bm 1}
\newcommand{\bH}{\bm{H}}
\newcommand{\bZ}{\bm{Z}}
\newcommand{\bz}{\bm{z}}
\newcommand{\sX}{\mathscr{X}}
\newcommand{\sY}{\mathscr{Y}}
\newcommand{\sfH}{\mathsf{H}}
\newcommand{\sfK}{\mathsf{K}}
\renewcommand{\d}{\mathrm{d}}
\newcommand{\MMD}{\mathrm{MMD}}
\newcommand{\ra}{\rightarrow}
\newcommand{\dto}{\overset{D}{\rightarrow}}
\newcommand{\bsfK}{\bar\sfK}
\newcommand{\vep}{\varepsilon}
\newcommand{\Var}{\mathrm{Var}}
\newtheorem{assumption}{Assumption}[section]
\numberwithin{equation}{section}
\title[A Martingale Kernel Two-Sample Test]{A Martingale Kernel Two-Sample Test}
\begin{document}

\maketitle

\begin{abstract}%
  The Maximum Mean Discrepancy (MMD) is a widely used multivariate distance metric for two-sample testing. The standard MMD test statistic has an intractable null distribution typically requiring costly resampling or permutation approaches for calibration. In this work we leverage a martingale interpretation of the estimated squared MMD to propose martingale MMD (mMMD), a quadratic-time statistic which has a limiting standard Gaussian distribution under the null. Moreover we show that the test is consistent against any fixed alternative and for large sample sizes, mMMD offers substantial computational savings over the standard MMD test, with only a minor loss in power.
\end{abstract}



\begingroup
\setcounter{tocdepth}{2} 
\endgroup

\section{Introduction}

Given two distributions $P$ and $Q$ over a metric space $\cX$, the two-sample problem is to test
\begin{align}\label{eq:H01}
\bH_0: P = Q\text{ versus }\bH_1: P \neq Q
\end{align}
based on independent samples $\sX_n = \{X_1,\ldots, X_n\} \sim P$ and $\sY_n = \{Y_1,\ldots, Y_n\} \sim Q$. This classical problem has been widely studied in the parametric setting, assuming low-dimensional structure. However, parametric methods often break down under model misspecification or non-Euclidean data, motivating flexible non-parametric approaches with minimal assumptions. For univariate data, classical non-parametric tests include the Kolmogorov–Smirnov test \citep{smirnov1948table}, Wald–Wolfowitz runs test \citep{wald1940test}, rank-sum test \citep{mann1947test, wilcoxon1945individual}, and Cramér–von Mises test \citep{anderson1962distribution}. Early multivariate extensions by \citet{weiss1960two} and \citet{bickel1969distribution} inspired modern methods based on geometric graphs \citep{friedman1979multivariate, rosenbaum2005exact, bhattacharya2019general}, energy distance \citep{szekely2003statistics, szekely2013energy}, kernel MMD \citep{gretton2009fast, gretton2012kernel, ramdas2015decreasing, shekhar2022permutation, chatterjee2025boosting}, ball divergence \citep{pan2018ball, banerjee2025high}, classifier-based tests \citep{lopez2017revisiting, kim2021classification}, and others.

Among the tests mentioned, kernel-based methods have recently emerged as powerful tools for detecting distributional differences in general domains. These methods use the Maximum Mean Discrepancy (MMD) (see \eqref{eq:def_MMD}) to compare distributions $P$ and $Q$ by embedding them into a reproducing kernel Hilbert space (RKHS) $\cK$ associated with a positive definite kernel $\sfK$. \citet{gretton2012kernel} proposed a quadratic-time estimator of the squared MMD, forming the basis of the kernel two-sample test (referred to as the quadratic-time $\MMD$ test) which rejects the null $\bH_0$ when the statistic exceeds a threshold $\tau_\alpha$ that controls the Type I error at level $\alpha$.

Despite its popularity, a key practical challenge in the $\MMD$ test is determining the threshold $\tau_\alpha$. Under $\bH_0$, the test statistic weakly converges to an infinite weighted sum of centered $\chi^2$ variables, with weights depending on the unknown distribution $P = Q$~\citep{gretton2009fast, gretton2012kernel}. As a result, estimating $\tau_\alpha$ typically requires either computationally intensive resampling \citep{gretton2009fast, chatterjee2025boosting} or moment-based parametric approximations (for example using Gamma, Pearson curves) \citep{gretton2012kernel}, which often lack consistency or accuracy guarantees. Calibration via concentration bounds is another option but tends to be overly conservative. Recent work has proposed more efficient alternatives \citep[Section~6]{gretton2012kernel}; \citep{zaremba2013b, jitkrittum2017linear}, but these often overlook the pairwise kernel structure and can suffer from low power. We review these methods in detail in Section~\ref{sec:related_works}.

In this work, we propose a simple and novel variant of the kernel-based two-sample test, where the test statistic follows an asymptotic standard normal distribution under $\bH_0$, making it easy to implement in practice. It remains valid under both fixed kernels (independent of sample size) and varying kernels (dependent on sample size) settings. We show that the test is consistent and  establish asymptotic normality under fixed alternatives as well. Below, we provide a brief overview of our result, with full details deferred to Sections~\ref{sec:derive_mMMD_test} and~\ref{sec:null_dist_section}.

\subsection{Overview of the Proposed Method}

\begin{figure}[!h]
    \centering
    \subfigure[]{\includegraphics[width=0.3\textwidth]{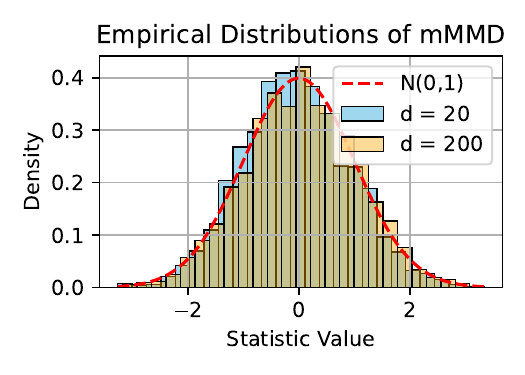}} 
    \subfigure[]{\includegraphics[width=0.3\textwidth]{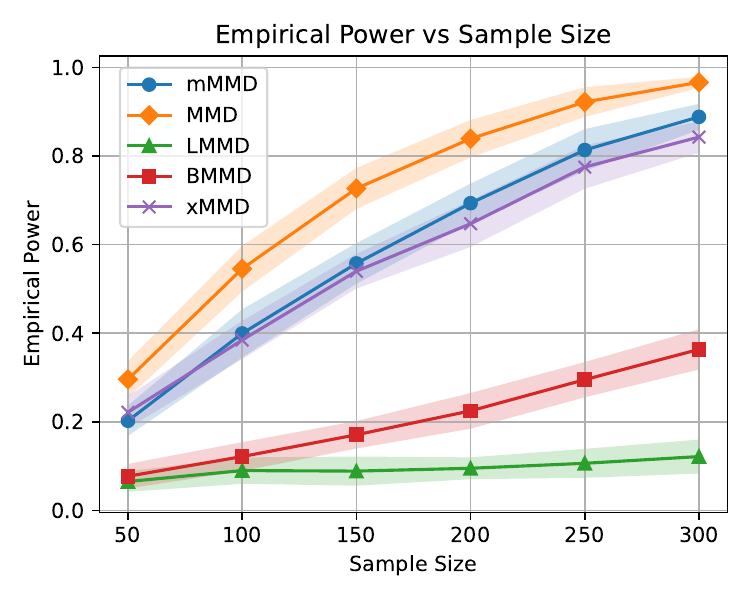}} 
    \subfigure[]{\includegraphics[width=0.37\textwidth]{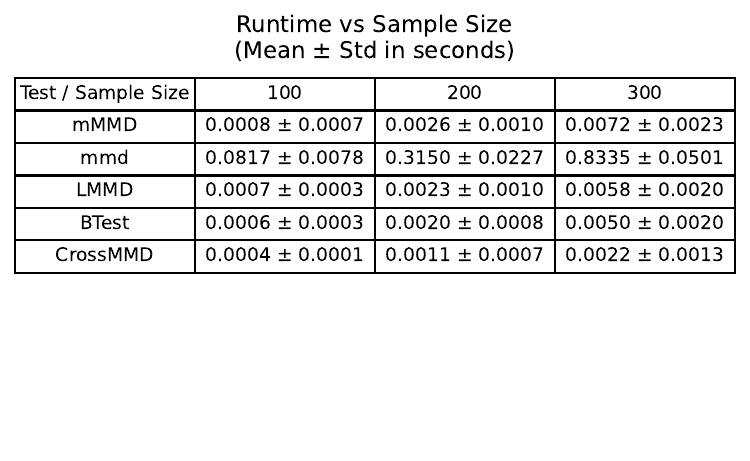}}
    \caption{The $m\MMD$-test: (a) Empirical distribution of the $m\MMD$-test statistic (see \eqref{eq:def_eta_n} for a formal definition) under $\bH_0$ with $d = 20,200$ and $P = Q$ are $d$-dimensional standard Gaussian distributions. (b) The second figure compares power of the $m\MMD$ test with the quadratic time MMD test (implemented with $200$ permutations) from \cite{gretton2012kernel} and computationally efficient variants from \cite{gretton2012kernel}, \cite{zaremba2013b} and \cite{shekhar2022permutation}. 
    The main takeaway is that the MMD is the most powerful but most computationally inefficient. The $x\MMD$ and $m\MMD$ are more powerful than the others, but the latter avoids sample splitting.
    (c) The third figure shows the computational efficiency of the proposed test against the permutation-based quadratic time MMD test. All tests use the Gaussian kernel with median bandwidth.}
    \label{fig:overview}
\end{figure}
\noindent
We propose a variant of the quadratic-time $\MMD$-test (see \cite{gretton2012kernel}) motivated by the key observation that a modified form of the estimator of the squared $\MMD$ admits a martingale structure. Specifically, we introduce the \emph{martingale-$\MMD$} $(m\MMD)$ statistic, defined as
\begin{align*}
    T_n = \frac{1}{n}\sum_{i=2}^{n}\left\langle \hat f_i, \sfK(X_i,\cdot) - \sfK(Y_i, \cdot)\right\rangle_{\cK}, \quad \text{where} \quad \hat f_i = \frac{1}{i}\sum_{j=1}^{i-1}\sfK(X_j,\cdot) - \sfK(Y_j,\cdot).
\end{align*}
In other words, for each $2 \leq i \leq n$, our method estimates the \emph{witness function} (see Section~\ref{sec:derive_mMMD_test} and \cite{kubler2022witness}) using the empirical mean embeddings computed from past data, and then averages the resulting estimates of the squared $\MMD$. The name arises because $nT_n$ is a (zero-mean) martingale under the null.

Our final test statistic (for \eqref{eq:H01}) is then defined as $\eta_n = T_n/\sigma_n$ where $\sigma_n$ (see \eqref{eq:sigma_n_exp}) is an estimate of the asymptotic variance under $\bH_0$. In Section~\ref{sec:null_dist_section}, we establish general results showing that $\eta_n$ converges in distribution to a standard Gaussian under the null. Theorem~\ref{thm:null_clt_fixed} shows this under mild moment conditions, when $\sfK, P$ and the data distribution remains fixed with respect to $n$. Theorem~\ref{thm:null_clt_general} then established conditions under which a Berry--Esseen-type convergence to $\mathrm N(0,1)$ holds when the kernel and both data distribution and data dimension are allowed to change with $n$.

Leveraging this asymptotic Gaussian behavior, our main methodological contribution is the \emph{mMMD-test}, which rejects the null hypothesis if
$\eta_n > z_{1 - \alpha}$, where $z_{1 - \alpha}$ is the $(1 - \alpha)$ quantile of the standard normal distribution. Formally, we define the test as
\begin{align}\label{eq:mMMD_test}
    \Psi_n := \mathbf{1} \left\{ \eta_n > z_{1 - \alpha} \right\}.
\end{align}

\noindent
Turning to performance under alternatives, in Theorem~\ref{thm:gen_consistency} we identify general conditions under which the test remains uniformly consistent over a class of alternatives, even in the challenging setting where $\mathsf{K},P, Q$ and the data dimension may vary with the sample size $n$. Furthermore, Theorem~\ref{thm:general_alt_dist} shows that under the alternative hypothesis $P_n \neq Q_n$, the test statistic (properly centered and scaled) converges in distribution to a standard Gaussian. This result directly implies (see Theorem~\ref{thm:fixed_consistency}) that the test defined in \eqref{eq:mMMD_test} is consistent in the simpler case when the parameters do not depend on $n$.

In Section~\ref{sec:experiments}, we empirically validate our method and compare it against the quadratic-time $\MMD$ test and other computationally efficient variants, demonstrating strong performance on both simulated and real data. Section~\ref{sec:broad_scope} explores extensions to multiple-kernel settings and introduces a broader class of MMD-based statistics built on the same principles as the $m\MMD$. Theorem~\ref{thm:berry_essen_H0} shows that, under suitable conditions, this class also admits an asymptotic standard Gaussian distribution. Furthermore, in Section~\ref{sec:minimax}, we show that certain statistics in this class remain consistent even when the $L_2$ distance between the densities decays at the minimax rate.

\subsection{Related Works}\label{sec:related_works}
The literature on kernel-based two-sample tests is quite extensive. In this section, we succinctly review the most relevant works. The first approach to obtaining a distribution-free kernel two-sample test relied on concentration-inequality-based large deviation bounds for the quadratic-time MMD statistic \citep{gretton2006kernel, kim2021comparing}. However, these bounds tend to be overly conservative and result in tests with low  \citep{wolfer2025variance}. \cite{balsubramani2016sequential} proposed a sequential two-sample test based on the Maximum Mean Discrepancy (MMD) with a linear kernel, using martingale concentration inequalities. More recently, \cite{shekhar2023nonparametric} extended this approach by introducing a more general sequential MMD-based test that leverages Ville's inequality, resulting in a less conservative procedure. \citet{gretton2012kernel} proposed moment-based parametric approximations such as Pearson curves and Gamma distributions, but these remain heuristic in nature and lack rigorous guarantees. \citet{gretton2009fast} introduced a spectral method for approximating the asymptotic distribution, which required strong assumptions and needed additional resampling (with external randomness) for implementation.

To address computational inefficiency, later works proposed modifying the test statistic itself to yield a tractable null distribution. One such method partitions the data into disjoint blocks, computes the kernel MMD statistic within each block, and averages the results. When block sizes are fixed, this leads to the linear-time MMD test \citep{gretton2012kernel}. \citet{zaremba2013b,ramdas2015adaptivity,reddi2015high} extended this idea to block sizes that grow with the sample size. The computational complexity of the block-MMD statistic then ranges from linear (for constant block size) to quadratic (when block size is $\Omega(n)$). If the block size  $B=o(n)$, the test statistic admits a Gaussian limit under the null. These approaches fall under the umbrella of incomplete U-statistcs based approaches. We refer to \cite{schrab2022efficient} for a comprehensive overview of incomplete U-statistic-based kernel tests. We empirically compare our method with both linear-time and block-MMD tests in Section~\ref{sec:power_sims}.

Recent works have also explored sample-splitting strategies. Notably, \citet{kubler2022witness} used one half of the data to estimate the witness function and the other half to compute the test statistic. However, their analysis assumes the witness function is fixed and relies on permutation testing for calibration, which limits practical efficiency. In contrast, our test leverages asymptotic Gaussianity to avoid resampling, making it more straightforward to implement. 

\citet{shekhar2022permutation} proposed a sample-splitting approach, where in each split they compute the difference between the mean embeddings and apply RKHS inner product between the two estimated embedding differences. Their dimension-agnostic Gaussian null distribution is something that we also achieve, but we also prove that our statistic has an alternative distribution that is Gaussian. We show empirically that our approach performs comparable to theirs, but ours avoids sample splitting and is based on different principles.

Beyond methods aimed at efficient threshold selection, several works have addressed the computational cost of the test statistic itself. The standard quadratic-time MMD statistic scales as $O(n^2)$. To reduce this, methods based on Random Fourier Features \citep{zhao2015fastmmd, zhao2022comparing, choi2024computational, mukherjee2025minimax}, coreset-based approximations \citep{domingo2023compress}, and Nyström approximations \citep{chatalic2025efficient} have been proposed. However, most of these still rely on permutation or resampling techniques to calibrate the test statistic, which remains the main focus of this work.

\section{The martingale Maximum Mean Discrepancy $(m\MMD)$ Test}

\subsection{Preliminaries}
In this section, we review the fundamentals of kernel two-sample testing as introduced by~\cite{gretton2012kernel}, and present the key concepts required for our proposed test. We begin by establishing the necessary formalism.\\

\noindent
Let $\cX$ be a Polish space, that is a complete seperable metric space, equipped with the corresponding Borel sigma algebra $\cB_\cX$. Let $\cP(\cX)$ be the collection of all probability measures defined on $(\cX, \cB_\cX)$. Let $\sfK$ be a positive definite kernel with features maps $\phi_x, x\in \cX$ such that $\sfK(x,\cdot) = \phi_x$ and $\sfK(x_1,x_2) = \langle \phi_{x_1},\phi_{x_2}\rangle_{\cK}$ where $\cK$ is the associated RKHS. The notion of feature maps can be extended to embed any distribution $P\in \cP(\cX)$ into $\cK$. In particular, for any $P\in \cP(\cX)$ define the kernel mean embedding $\nu_P$ as $\langle f, \nu_P\rangle = \E_{X\sim P}[f(X)]$ for all $f\in \cK$. Moreover, by the cannonical form of the feature maps, for all $t\in \cX, \nu_P(t) = \E_{X\sim P}[\sfK(X, t)]$. Throughout the rest of the article we make the following assumption on the kernel $\sfK$.
\begin{assumption}\label{assumption:K}
    Consider the positive definite kernel $\sfK:\cX\times \cX\ra \R$ and let $\cM_\sfK^{\theta} = \{P\in \cP(\cX): \E_{X\sim P}[\sfK(X,X)^\theta]<\infty\}$. Then $\sfK$ is characteristic with respect to $\cM_{\sfK}^{1/2}$, that is $\nu:\cM_{\sfK}^{1/2}\ra\cK$ is a one-to-one (injective) mapping.
\end{assumption}
Assumption~\ref{assumption:K} ensures that $\mu_P\in \cK$ for all $P\in \cM_{\sfK}^{1/2}$ (see \citet[Lemma 3]{gretton2012kernel} and \citet[Lemma 2.1]{park2020measure}). With the above definitions we can now define the Maximum Mean Discrepancy (MMD) between two distribution $P,Q\in \cM_{\sfK}^{1/2}$ as,
\begin{align}\label{eq:def_MMD}
    \MMD[P,Q, \sfK] = \sup_{f\in\cF_\cK}\E_{X\sim P}[f(X)] - \E_{Y\sim Q}[f(Y)],
\end{align}
where $\cF_\cK$ is the unit-ball in the RKHS $\cK$. Assumption~\ref{assumption:K} also ensures that $\MMD$ forms a metric on $\cM_{\sfK}^{1/2}$. Moreover, using the notion of mean embeddings the $\MMD$ can be equivalently expressed as $\MMD[P,Q, \sfK] = \|\nu_P - \nu_Q\|_\cK$ where $\|\cdot\|_\cK$ is the norm induced by the inner product $\langle\cdot, \cdot\rangle_{\cK}$. 

\subsection{Deriving the $m\MMD$ test}\label{sec:derive_mMMD_test}
The standard quadratic-time $\MMD$ test statistic is obtained by substituting the empirical mean embeddings into the alternative expression for squared $\MMD[P, Q]$ derived earlier. In contrast, our approach returns to the original definition of $\MMD[P, Q]$ given in~\eqref{eq:def_MMD}.

\begin{figure}[htbp]
    \centering
    \begin{minipage}[b]{0.24\linewidth}
        \centering
        \includegraphics[width=\linewidth]{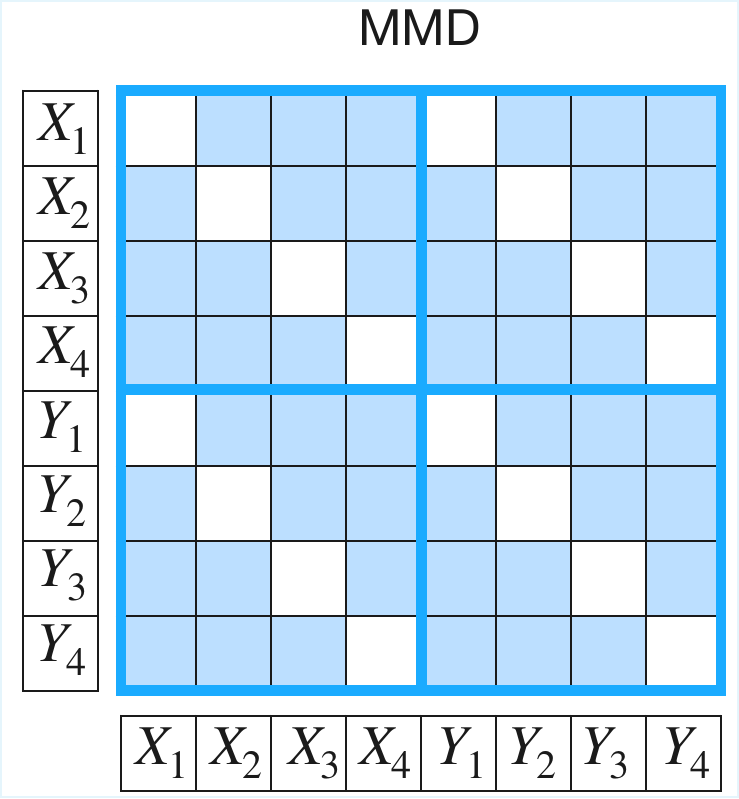}
    \end{minipage}
    \hfill
    \begin{minipage}[b]{0.24\linewidth}
        \centering
        \includegraphics[width=\linewidth]{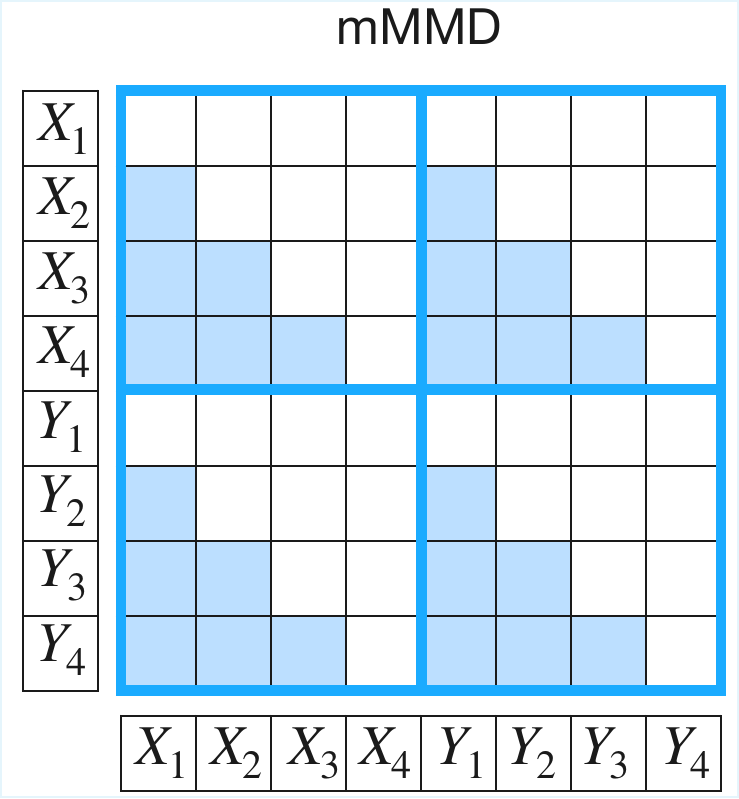}
    \end{minipage}
    \hfill
    \begin{minipage}[b]{0.24\linewidth}
        \centering
        \includegraphics[width=\linewidth]{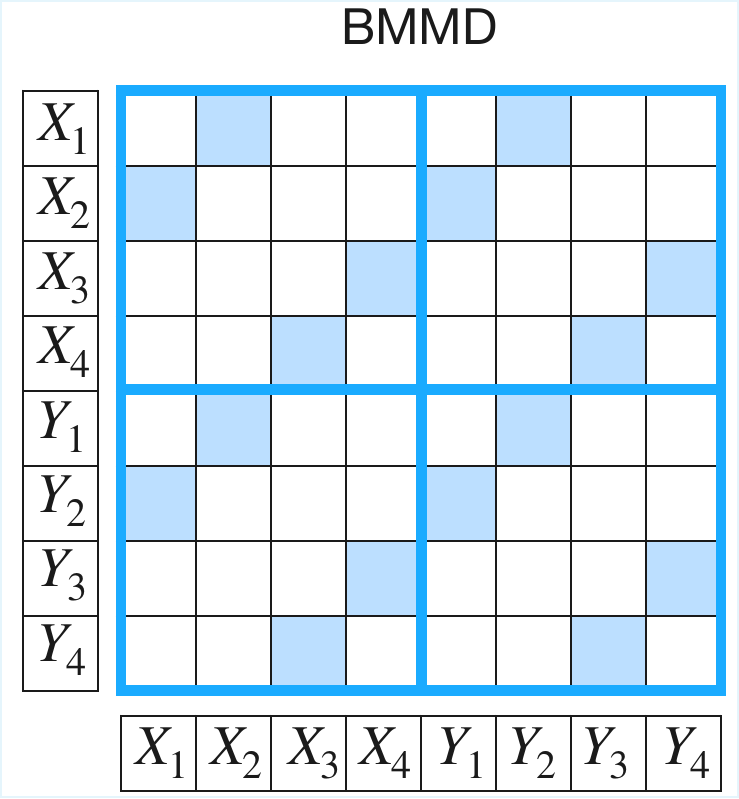}
    \end{minipage}
    \hfill
    \begin{minipage}[b]{0.24\linewidth}
        \centering
        \includegraphics[width=\linewidth]{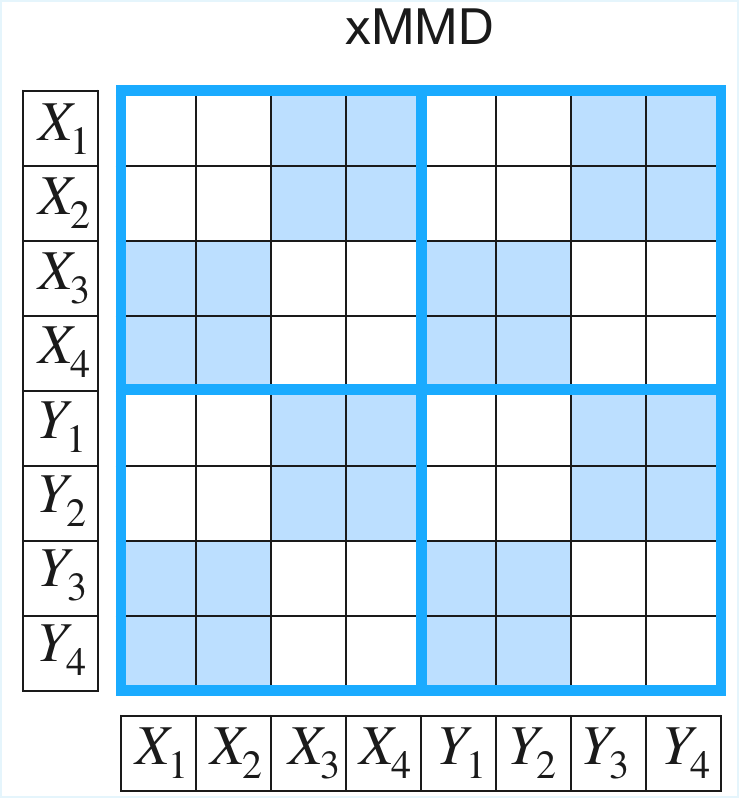}
    \end{minipage}
    
    \caption{A visual illustration of the main difference in computing the quadratic-time MMD (\texttt{MMD}), our proposed \texttt{mMMD}, the block-MMD (\texttt{BMMD}) from \citet{zaremba2013b}, and the cross-MMD (\texttt{xMMD}) from \citet{shekhar2022permutation}. The \textcolor{cyan}{highlighted} regions indicate the sample pairs used in the computation. The figure shows that the quadratic-time MMD considers all off-diagonal pairwise kernel evaluations between the combined samples. \texttt{BMMD} partitions the data into blocks and averages the pairwise kernel evaluations within each block. \texttt{xMMD} splits the samples into two halves and evaluates kernels only across the splits. In contrast, our proposed \texttt{mMMD} computes pairwise kernel values by taking the lower triangular part of within-sample and between-sample blocks. However, we emphasize that, due to the symmetry of the kernel, the terms used by \texttt{MMD} and \texttt{mMMD} are equivalent. The principal distinction between the two lies in the normalization procedure: \texttt{mMMD} computes a row-wise mean followed by an average across rows (see \eqref{eq:Tn_alt}), whereas \texttt{MMD} applies a global mean over all elements.}
    \label{fig:mmd_variants}
\end{figure}

\noindent
Note that if the witness function $f_0$, that is, $f_0 = \arg\sup_{f \in \cF} \left\{ \E_{X \sim P}[f(X)] - \E_{Y \sim Q}[f(Y)] \right\}$, were known, then one could estimate $\MMD[P, Q]$ using the statistic $\frac{1}{n} \sum_{i=1}^n \left\langle f_0, \sfK(X_i, \cdot) - \sfK(Y_i, \cdot) \right\rangle_{\cK},$ based on samples $\sX_n = \{X_1, \ldots, X_n\} \sim P$ and $\sY_n = \{Y_1, \ldots, Y_n\} \sim Q$. However, the witness function $f_0$ is unknown. Nevertheless, as shown by \citet[Section~2.3]{gretton2012kernel} it can be expressed as $f_0 = (\nu_P - \nu_Q)/\|\nu_P - \nu_Q\|_{\cK}$, where $\|\cdot\|_{\cK}$ denotes the RKHS norm. Then for each $2\leq i\leq n$ we can plug in the empirical mean embeddings to approximate the unnormalized version of $f_0$ by $\hat f_i = \frac{1}{i} \sum_{j=1}^{i-1} \left[ \sfK(X_j, \cdot) - \sfK(Y_j, \cdot) \right]$, which leads to our proposed statistic:
\begin{align*}
    T_n := \frac{1}{n} \sum_{i=2}^n \left\langle \hat f_i, \sfK(X_i, \cdot) - \sfK(Y_i, \cdot) \right\rangle_{\cK}.
\end{align*}
\noindent
To further motivate this statistic, observe that by using only past data (notice that we only use indices less than $i$ for the estimate $\hat f_i$) to estimate the unnormalized witness function $f_0$, we induce an underlying martingale structure. In particular, under $\bH_0$, the collection 
\begin{align}\label{eq:martingale_seq_Tn}
    \left\{ \left\langle \hat f_i, \sfK(X_i, \cdot) - \sfK(Y_i, \cdot) \right\rangle_{\cK} : i \geq 2 \right\}
\end{align}
forms a martingale difference sequence with respect to the natural filtration $\cF_i = \sigma(\bZ_1, \ldots, \bZ_i)$, where $\bZ_i = (X_i, Y_i), 2\leq i\leq n$. With this martingale formulation one can expect to apply the martingale CLT \citep{hall2014martingale} on $T_n$ (with proper normalisation) to have asymptotic Gaussianity. However, directly applying a martingale central limit theorem (CLT) to $T_n$ may yield an asymptotic Gaussian distribution with unknown variance. Therefore, to implement a test for~\eqref{eq:H01} based on $T_n$, we must appropriately normalize the statistic. To that end, define
\begin{align}\label{eq:sigma_n_exp}
    \sigma_n^2 := \frac{1}{n^2}\sum_{i=2}^{n} \left( \frac{1}{i} \sum_{j=1}^{i-1} \sfK(X_i, X_j) - \sfK(X_i, Y_j) - \sfK(X_j, Y_i) + \sfK(Y_i, Y_j) \right)^2.
\end{align}
With this notation, we define the normalized test statistic as
\begin{align}\label{eq:def_eta_n}
    \eta_n = T_n/\sigma_n,
\end{align}
and use it to construct the $m\MMD$ test as described in~\eqref{eq:mMMD_test}.

\begin{remark}
    Note that the computational complexity of computing $\eta_n$ is $O(n^2)$. To see this, observe that using the kernel trick, the statistic $T_n$ can be simplified as
    \begin{align}\label{eq:Tn_alt}
        T_n = \frac{1}{n} \sum_{i=2}^{n} \frac{1}{i} \sum_{j=1}^{i-1} \sfK(X_i, X_j) - \sfK(X_i, Y_j) - \sfK(X_j, Y_i) + \sfK(Y_i, Y_j).
    \end{align}
    It is evident from this expression that computing $T_n$ requires $O(n^2)$ operations. Moreover, from the definition of $\sigma_n$ in~\eqref{eq:sigma_n_exp}, it follows that computing $\sigma_n$ also requires $O(n^2)$ time. Hence, the total cost of computing the test statistic $\eta_n$ is $O(n^2)$. This matches the computational complexity of the original quadratic-time $\MMD$ statistic proposed by~\citet{gretton2012kernel}, as well as the more recent cross-MMD statistic introduced by~\citet{shekhar2022permutation}. In contrast, the alternative approaches proposed by~\citet{zaremba2013b} and~\citet{jitkrittum2017linear} can achieve lower computational complexity, at the expense of highly suboptimal power.

\end{remark}

\section{Asymptotic Null Distribution of $m\MMD$ Test Statistic}\label{sec:null_dist_section}
In this section, we derive the limiting distribution of the statistic $\eta_n$ (as defined in \eqref{eq:def_eta_n}) under the null hypothesis $\bH_0$. Specifically, we establish the asymptotic normality of $\eta_n$, a phenomenon that is empirically suggested by Figure~\ref{fig:overview} (a).\\

\noindent
For a kernel $\sfK$, and define its centered version as $\bsfK(x, y) = \left\langle \sfK(x, \cdot) - \nu_P,\ \sfK(y, \cdot) - \nu_P \right\rangle_{\cK}$, where $\nu_P$ denote the kernel mean embedding of $P$.

\begin{theorem}\label{thm:null_clt_fixed}
    Take $\cX = \R^d$ for some $d\geq 1$. Suppose the kernel $\sfK$ satisfies Assumption~\ref{assumption:K}, and let $P\in \cM_{\sfK}^{1/2}$. Moreover assume $\E\left[\bsfK\left(X_1,X_2\right)^4\right]<\infty$ for $X_1,X_2$ generated independently from $P$. Then under $\bH_0$,
    \begin{align*}
        \sup_{x\in \R}\left|\P\left(\eta_n\leq x\right) - \Phi(x)\right|\ra 0.
    \end{align*}
\end{theorem}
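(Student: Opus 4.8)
\emph{Step 1: reduce $\eta_n$ to a self-normalized martingale sum.} Under $\bH_0$, $P=Q$; put $\nu:=\nu_P=\nu_Q$ and $U_i:=\sfK(X_i,\cdot)-\sfK(Y_i,\cdot)=\big(\sfK(X_i,\cdot)-\nu\big)-\big(\sfK(Y_i,\cdot)-\nu\big)\in\cK$, so the $U_i$ are i.i.d., mean zero, with covariance operator $\Sigma_U:=\E[U_1\otimes U_1]$. For $i\neq j$, $\langle U_i,U_j\rangle_\cK=\bsfK(X_i,X_j)-\bsfK(X_i,Y_j)-\bsfK(Y_i,X_j)+\bsfK(Y_i,Y_j)$ is a fixed linear combination of four $\bsfK$-evaluations at mutually independent arguments, so $\E[\langle U_1,U_2\rangle_\cK^4]<\infty$ by Minkowski's inequality and the hypothesis; in particular $\mathrm{tr}(\Sigma_U^2)=\E[\langle U_1,U_2\rangle_\cK^2]\in(0,\infty)$, where I assume $P$ is non-degenerate (otherwise $U_i\equiv 0$ a.s.\ and $\eta_n$ is the indeterminate $0/0$). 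With $D_i:=\langle\hat f_i,U_i\rangle_\cK=\tfrac{1}{i}\sum_{j=1}^{i-1}\langle U_j,U_i\rangle_\cK$, the sequence in \eqref{eq:martingale_seq_Tn} is a martingale difference sequence for $\cF_i=\sigma(\bZ_1,\dots,\bZ_i)$, since $\E[D_i\mid\cF_{i-1}]=\langle\hat f_i,\E U_i\rangle_\cK=0$; and using $\sfK(X_i,X_j)-\sfK(X_i,Y_j)-\sfK(X_j,Y_i)+\sfK(Y_i,Y_j)=\langle U_i,U_j\rangle_\cK$ in \eqref{eq:sigma_n_exp} gives $n^2\sigma_n^2=\sum_{i=2}^n D_i^2$, whence
\[
\eta_n=\frac{\sum_{i=2}^n D_i}{\bigl(\sum_{i=2}^n D_i^2\bigr)^{1/2}}.
\]
So $\eta_n$ is exactly the self-normalized sum of the martingale $nT_n=\sum_{i=2}^n D_i$. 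The plan is to (a) prove a martingale CLT for $\sum_{i=2}^n D_i$ with a \emph{deterministic} scale $s_n$, (b) show $\sum_{i=2}^n D_i^2/s_n^2\pto 1$, and (c) combine via Slutsky and Pólya's theorem (the latter turning $\eta_n\dto\rmN(0,1)$ into the stated uniform convergence, since $\Phi$ is continuous).

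\emph{Step 2: quadratic variation.} Since $U_i$ is independent of $\cF_{i-1}$, $\E[D_i^2\mid\cF_{i-1}]=\langle\hat f_i,\Sigma_U\hat f_i\rangle_\cK$ (using $\E[\langle u,U_i\rangle_\cK^2]=\langle u,\Sigma_U u\rangle_\cK$), so the predictable quadratic variation is $A_n:=\sum_{i=2}^n\langle\hat f_i,\Sigma_U\hat f_i\rangle_\cK$, with $s_n^2:=\E[A_n]=\mathrm{tr}(\Sigma_U^2)\sum_{i=2}^n\tfrac{i-1}{i^2}=\mathrm{tr}(\Sigma_U^2)(\log n+O(1))\to\infty$. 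Writing $A_n=\sum_{i=2}^n i^{-2}\sum_{j,k<i}\langle U_j,\Sigma_U U_k\rangle_\cK$ and splitting into diagonal ($j=k$) and off-diagonal terms, I would compute $\Var(A_n)$ directly; the bounds $\E[\langle U_1,\Sigma_U U_1\rangle_\cK^2]\le\E[\langle U_1,U_2\rangle_\cK^4]<\infty$ (conditional Jensen, since $\langle U_1,\Sigma_U U_1\rangle_\cK=\E[\langle U_1,U_2\rangle_\cK^2\mid U_1]$) and $\E[\langle U_1,\Sigma_U U_2\rangle_\cK^2]=\mathrm{tr}(\Sigma_U^4)<\infty$ yield $\Var(A_n)=o(s_n^4)$, hence $A_n/s_n^2\pto 1$. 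Then, with $B_n:=\sum_{i=2}^n D_i^2=n^2\sigma_n^2$: since $\{D_i^2-\E[D_i^2\mid\cF_{i-1}]\}_{i\ge2}$ is a martingale difference sequence, $\E[(B_n-A_n)^2]=\sum_{i=2}^n\E\big[(D_i^2-\E[D_i^2\mid\cF_{i-1}])^2\big]\le\sum_{i=2}^n\E[D_i^4]$, and expanding $D_i^4$ as a degree-four form in $U_1,\dots,U_{i-1}$, conditioning on $U_i$, and keeping only the surviving pairings (each of the $O(i^2)$ of them contributing $O(1)$ by the fourth-moment hypothesis) gives $\E[D_i^4]=O(i^{-2})$; hence $\E[(B_n-A_n)^2]=O(1)=o(s_n^4)$, $(B_n-A_n)/s_n^2\pto 0$, and so $B_n/s_n^2\pto 1$.

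\emph{Step 3: martingale CLT and conclusion.} Apply the martingale central limit theorem \citep{hall2014martingale} to the square-integrable martingale difference array $X_{n,i}:=D_i/s_n$, $2\le i\le n$ (the filtration $\cF_i$ does not depend on $n$, so the nesting condition is automatic). The conditional-variance condition is $\sum_{i=2}^n\E[X_{n,i}^2\mid\cF_{i-1}]=A_n/s_n^2\pto 1$ from Step 2, and the Lindeberg condition follows from the Lyapunov bound $\sum_{i=2}^n\E[X_{n,i}^4]=s_n^{-4}\sum_{i=2}^n\E[D_i^4]=O\big((\log n)^{-2}\big)\to 0$. Thus $nT_n/s_n=\sum_{i=2}^n X_{n,i}\dto\rmN(0,1)$. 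Finally $\eta_n=(nT_n/s_n)\big/\sqrt{B_n/s_n^2}$, and since $B_n/s_n^2\pto 1$, Slutsky's lemma gives $\eta_n\dto\rmN(0,1)$; Pólya's theorem then upgrades convergence in distribution to $\sup_{x\in\R}|\P(\eta_n\le x)-\Phi(x)|\to 0$.

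\emph{Main obstacle.} The only non-routine part is the two moment estimates $\Var(A_n)=o(s_n^4)$ and $\sum_i\E[D_i^4]=O(1)$, each of which requires carefully enumerating the non-vanishing terms in a high-order expansion of a quadratic form in the i.i.d.\ $\cK$-valued summands $U_j$ and checking that every surviving expectation reduces to a mixed moment of $\bsfK(X_1,X_2)$ controlled by the fourth-moment hypothesis (with conditional Jensen used to control the operator-quadratic terms $\langle U_j,\Sigma_U U_j\rangle_\cK$). It is worth noting that, because $\hat f_i$ uses only indices $j<i$, no on-diagonal quantity such as $\bsfK(X,X)$ or $\|U\|_\cK^4$ ever enters, which is exactly why the off-diagonal fourth-moment condition is enough. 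Everything else is the standard martingale-CLT-plus-Slutsky machinery.
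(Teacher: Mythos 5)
Your proposal is correct, and it rests on exactly the same structural observations as the paper: the identity $\sfH(\bZ_i,\bZ_j)=\langle U_i,U_j\rangle_\cK$, the martingale difference property of $D_i=\langle\hat f_i,U_i\rangle_\cK$ under $\bH_0$, the representation $\eta_n=\sum_i D_i/(\sum_i D_i^2)^{1/2}$, and the two moment estimates (fourth moments of $D_i$ of order $i^{-2}$, hence summable, and second-moment concentration of the normalizer around the deterministic scale $s_n^2\asymp \E[\sfH(\bZ_1,\bZ_2)^2]\log n$), all reduced to moments of $\bsfK(X_1,X_2)$ via conditional centering and Cauchy--Schwarz. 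Where you differ is the concluding limit theorem: the paper treats the fixed-kernel statement as a corollary of a general quantitative result (Theorem~\ref{thm:berry_essen_H0}, specialized to $\gamma=1$), obtained by applying the Berry--Esseen bound for \emph{self-normalized} martingales of \cite{fan2018berry}, which absorbs the random normalization $\sigma_n$ directly and yields an explicit rate of the form $(R_{n,1,1}+R_{n,1,2})^{1/5}$ with $R$-terms of order $(\log n)^{-1}$; this is what lets the paper handle $n$-dependent kernels, distributions, and dimensions in Theorem~\ref{thm:null_clt_general}. You instead use the classical Hall--Heyde martingale CLT with the deterministic scale $s_n$, pass from the predictable variation $A_n$ to $B_n=n^2\sigma_n^2$ via the martingale-difference bound $\E[(B_n-A_n)^2]\le\sum_i\E[D_i^4]$, and finish with Slutsky and P\'olya. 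This route is more elementary and entirely sufficient for the fixed-$(\sfK,P,d)$ statement (where all moment ratios are constants and only $(\log n)^{-2}\to 0$ matters), but it gives no rate and would not by itself deliver the paper's uniform Berry--Esseen-type statements in the varying-kernel regime. Your caveat about the degenerate case ($P$ a point mass, $\eta_n=0/0$) is apt; the paper implicitly excludes it via Lemma~\ref{lemma:E_barK_sq_positive}, which guarantees $\E[\bsfK(X_1,X_2)^2]>0$, and your variance computations for $A_n$ (diagonal part $O(1)$, off-diagonal part $O(\log n)$, cross terms vanishing by conditional centering) do close as claimed, so I see no gap.
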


The above theorem confirms that, under $\bH_0$, the asymptotic distribution of the $m\MMD$ test statistic $\eta_n$ is standard Gaussian. Consequently, the test $\Psi_n$ defined in~\eqref{eq:mMMD_test} asymptotically controls the type-I error at level $\alpha$.

\begin{remark}[Martingale Energy Two-Sample Test]
Take $\cX = \R^d$ and consider the distance induced kernel
$\sfK(x,y) = \frac{1}{2}\big(\|x\|_2 + \|y\|_2 - \|x-y\|_2\big)$, so that the squared maximum mean discrepancy satisfies $\MMD^2[P,Q;\sfK] = \frac{1}{2}\rho(P,Q)$, where $\rho(P,Q)$ denotes the classical energy distance between $P$ and $Q$ \citep{szekely2003statistics,szekely2013energy, sejdinovic2013equivalence}. This kernel is characteristic with respect to $\cM_{\sfK}^1$, providing an RKHS representation of the energy distance. With this choice of kernel, we can define a \emph{martingale energy two-sample test} using the normalised martingale energy statistic as,
\begin{align*}
\eta_{n,\rho} = \frac{
\frac{1}{n} \sum_{i=2}^{n} \frac{1}{i} \sum_{j=1}^{i-1} 
\|X_i - Y_j\|_2 + \|X_j - Y_i\|_2 - \|X_i - X_j\|_2 - \|Y_i - Y_j\|_2
}{
\sqrt{
\frac{1}{n^2} \sum_{i=2}^{n} 
\Big( \frac{1}{i} \sum_{j=1}^{i-1} 
\|X_i - Y_j\|_2 + \|X_j - Y_i\|_2 - \|X_i - X_j\|_2 - \|Y_i - Y_j\|_2
\Big)^2 }
}.
\end{align*}
Note that assumptions of Theorem \ref{thm:null_clt_fixed} are satisfied whenever $\mathbb{E}\left[\|X\|_2^4\right] < \infty$, and under the null hypothesis it is asymptotically standard normal. This allows us to define the martingale energy test $\Psi_{n,\rho} = \mathbf{1}\{\eta_{n,\rho} > z_{1-\alpha}\}$, where (as before) $z_{1-\alpha}$ denotes the $(1-\alpha)$-quantile of the standard normal distribution.
\end{remark}

\noindent
The above result can be significantly extended to accommodate more general dependencies of the kernel, the distribution, and even the sample space on the sample size $n$. Specifically, we now allow the kernel $\mathsf{K}_n$, the null distribution $P_n (= Q_n)$, and the dimension $d_n$ to vary with $n$\footnote{To denote dependence of a parameter $\theta$ on $n$ we use the notation $\theta_n$.}. In this more general setting, the next result shows that, under suitable conditions, the test statistic $\eta_n$ continues to converge in distribution to a standard Gaussian under the null.

\begin{theorem}\label{thm:null_clt_general}
    Take $n\geq 1$ and $\cX_n = \R^{d_n}$. Suppose the kernel $\sfK_n$ satisfies Assumption~\ref{assumption:K} on $\cX_n$ and $P_n\in \cM_{\sfK_n}^{1/2}$. With $X_1,X_2,X_3$ generated independently from $P_n$ assume that $\E\left[\bsfK_n(X_1, X_2)^4\right]<\infty$. Furthermore assume the following holds,
    \begin{align}\label{eq:null_clt_cond2}
        \lim_{n\rightarrow \infty}\frac{\E\left[\bsfK_n(X_1, X_2)^4\right]}{(\log n)^2\E\left[\bsfK_n(X_1, X_2)^2\right]^2} = 0\text{ and }\lim_{n\rightarrow \infty}\frac{\E\left[\bsfK_n(X_1, X_2)^2\bsfK_n(X_1, X_3)^2\right]}{(\log n)^2\E\left[\bsfK_n(X_1, X_2)^2\right]^2} = 0.
    \end{align}
    Then under $\bH_0$,
    \begin{align*}
        \sup_{x\in \R}\left|\P\left(\eta_n\leq x\right) - \Phi(x)\right|\ra 0.
    \end{align*}
\end{theorem}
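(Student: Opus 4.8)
The plan is to apply a martingale central limit theorem with Berry--Esseen-type control to the sequence in~\eqref{eq:martingale_seq_Tn}, exactly as in the proof of Theorem~\ref{thm:null_clt_fixed}, but now tracking the dependence of every moment bound on $n$ and verifying that the conditions~\eqref{eq:null_clt_cond2} are exactly what is needed to make those bounds vanish. Write $D_i = \langle \hat f_i, \sfK_n(X_i,\cdot) - \sfK_n(Y_i,\cdot)\rangle_{\cK}$, so that $nT_n = \sum_{i=2}^n D_i$ is a martingale with respect to $\cF_i = \sigma(\bZ_1,\dots,\bZ_i)$ under $\bH_0$. Under the null $P_n = Q_n$, a direct computation gives $\E[D_i^2 \mid \cF_{i-1}] = \frac{1}{i^2}\sum_{j,k < i} \langle g_j, g_k\rangle_{\cK} \cdot (\text{conditional second moment factor})$ where $g_j = \sfK_n(X_j,\cdot) - \sfK_n(Y_j,\cdot)$; more precisely, since $X_i, Y_i$ are independent of $\cF_{i-1}$ with common law $P_n$, one obtains $\E[D_i^2\mid \cF_{i-1}] = \frac{2}{i^2} \langle \hat f_i', \,C_n \hat f_i'\rangle$-type expressions, which after taking expectations telescope to a deterministic normalizing constant $s_n^2 := \sum_{i=2}^n \E[D_i^2]$. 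The first task is to compute $s_n^2$ in closed form in terms of $\sigma_K^2 := \E[\bsfK_n(X_1,X_2)^2]$; because $\E\|g_j\|^2 = 2\,\E[\bsfK_n(X_1,X_1)] - \dots$ and the cross terms vanish under the null, I expect $s_n^2 \asymp \sigma_K^2 \log n$ (the $\sum 1/i$ coming from the $1/i$ weights and the $i-1$ surviving diagonal terms), and correspondingly $\E[\sigma_n^2] \asymp \sigma_K^2 (\log n)/n^2$.

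Next I would verify the two hypotheses of the martingale CLT (say, the version in~\cite{hall2014martingale} or a Heyde--Brown Berry--Esseen bound). (i) \emph{Conditional variance stability:} show $\sum_{i=2}^n \E[D_i^2\mid\cF_{i-1}] / s_n^2 \pto 1$. Writing the numerator out, the leading ``diagonal'' part equals its expectation plus a fluctuation; the fluctuation's variance involves $\E[\bsfK_n(X_1,X_2)^2 \bsfK_n(X_1,X_3)^2]$ and $\E[\bsfK_n(X_1,X_2)^4]$ divided by $s_n^4 \asymp \sigma_K^4 (\log n)^2$, so it is $\to 0$ precisely under~\eqref{eq:null_clt_cond2}. (ii) \emph{Lindeberg / Lyapunov condition:} bound $\sum_{i=2}^n \E[D_i^4]/s_n^4$. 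Here $\E[D_i^4] \lesssim \frac{1}{i^4}\E\big[(\sum_{j<i}\langle g_j, g_i\rangle)^4 \mid X_j\text{'s}\big]$-type terms; expanding and using independence, the dominant contributions are of order $\frac{1}{i^2}\E[\bsfK_n^4]$ and $\frac{1}{i^2}\E[\bsfK_n(X_1,X_2)^2\bsfK_n(X_1,X_3)^2]$, so $\sum_i \E[D_i^4] \lesssim (\log n)(\E[\bsfK_n^4] + \E[\bsfK_n(X_1,X_2)^2\bsfK_n(X_1,X_3)^2])$, and dividing by $s_n^4 \asymp \sigma_K^4(\log n)^2$ again yields a bound that $\to 0$ under~\eqref{eq:null_clt_cond2}. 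This step simultaneously controls the Berry--Esseen rate, giving $\sup_x |\P(nT_n/s_n \le x) - \Phi(x)| \to 0$.

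Finally, I would upgrade from $nT_n/s_n$ to $\eta_n = T_n/\sigma_n$ by showing $n^2\sigma_n^2 / s_n^2 \pto 1$. Note $n^2\sigma_n^2 = \sum_{i=2}^n \langle \hat f_i, g_i\rangle^2 \cdot i^2/i^2$ -- actually $n^2 \sigma_n^2 = \sum_{i=2}^n \big(\tfrac1i\sum_{j<i}[\cdots]\big)^2 = \sum_i D_i^2$ by the kernel-trick identity~\eqref{eq:Tn_alt}, so $n^2\sigma_n^2 = \sum_i D_i^2$ is simply the realized quadratic variation. Thus $n^2\sigma_n^2/s_n^2 = \big(\sum_i D_i^2\big)/s_n^2$, and $\E[\sum_i D_i^2] = s_n^2$, while $\Var(\sum_i D_i^2) \le \sum_i \E[D_i^4] + 2\sum_{i<\ell}\mathrm{Cov}(D_i^2, D_\ell^2)$; the variance term is handled exactly as in step (ii)/(i) above, giving $n^2\sigma_n^2/s_n^2 \pto 1$. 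Then Slutsky combines the Berry--Esseen statement for $nT_n/s_n$ with this ratio convergence to conclude $\sup_x|\P(\eta_n \le x) - \Phi(x)| \to 0$. The main obstacle I anticipate is the fourth-moment bookkeeping in step (ii): expanding $\E[D_i^4]$ produces many cross terms indexed by up to four past indices, and one must carefully identify which index-coincidence patterns survive under the null (the fully-distinct pattern vanishes by independence and mean-zero-ness of $g_j$ given $\cF_{i-1}$), ensuring the only survivors are controlled by the two moment ratios in~\eqref{eq:null_clt_cond2}; getting the $(\log n)^{-2}$ savings to land exactly where the hypothesis provides them is the delicate part.
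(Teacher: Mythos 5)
Your overall architecture is the same as the paper's: under $\bH_0$ the increments $D_i=\tfrac1i\sum_{j<i}\sfH_n(\bZ_i,\bZ_j)$ form a martingale difference array, $n^2\sigma_n^2=\sum_i D_i^2$ is its quadratic variation, and the paper (via Theorem~\ref{thm:berry_essen_H0} with $\gamma=1$) applies a Berry--Esseen bound for \emph{self-normalized} martingales from Fan et al.\ (2018), which collapses your steps (i)--(iii) into bounding exactly the two quantities you identify: $\sum_i\E[W_i^4]$ and $\E\big[\big|\sum_i\E[W_i^2\mid\cF_{i-1}]-1\big|^2\big]$; your normalization $s_n^2\asymp M_n\log n$ with $M_n=\E[\bsfK_n(X_1,X_2)^2]$ is also correct. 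The gap is in the moment bookkeeping, and in one place it is substantive rather than cosmetic. First, a counting slip: in $\E[D_i^4]=i^{-4}\big[(i-1)\E[\sfH_n(\bZ_1,\bZ_2)^4]+O(i^2)\,\E[\sfH_n(\bZ_1,\bZ_2)^2\sfH_n(\bZ_1,\bZ_3)^2]\big]$ the all-equal pattern contributes $i^{-3}\E[\sfH_n^4]$, not $i^{-2}$, and both $\sum_i i^{-3}$ and $\sum_i i^{-2}$ are $O(1)$, so $\sum_i\E[D_i^4]$ carries \emph{no} $\log n$ factor; with the extra $\log n$ you wrote, dividing by $s_n^4\asymp M_n^2(\log n)^2$ and invoking \eqref{eq:null_clt_cond2} only yields $o(\log n)$, which does not vanish. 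The correct count (as in the paper's Lemma~\ref{lemma:4thmmntbdd}) is what makes the $(\log n)^{-2}$ in \eqref{eq:null_clt_cond2} exactly sufficient.

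Second, and more importantly, your assertion that in the conditional-variance/quadratic-variation analysis ``the only survivors are controlled by the two moment ratios in \eqref{eq:null_clt_cond2}'' is false as stated, and this is where the plan would fail. Expanding $\Var\big(\sum_i\E[D_i^2\mid\cF_{i-1}]\big)$ (equivalently $\Var(\sum_iD_i^2)$ in your step (iii)), the cross terms $i_1<i_2$ with two \emph{distinct} past indices survive and produce the four-cycle moment $\E\big[\bsfK_n(X_1,X_2)\bsfK_n(X_2,X_3)\bsfK_n(X_3,X_4)\bsfK_n(X_4,X_1)\big]=\E\big[\E[\bsfK_n(X_1,X_2)\bsfK_n(X_1,X_3)\mid X_2,X_3]^2\big]$, with aggregate weight of order $\log n$, i.e.\ a contribution of order $\E[\mathrm{cycle}]/(M_n^2\log n)$ after dividing by $s_n^4$. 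Bounding the cycle by Jensen with $\E[\bsfK_n(X_1,X_2)^2\bsfK_n(X_1,X_3)^2]$ and using \eqref{eq:null_clt_cond2} only gives $o(\log n)$, which is not enough; one needs the sharper bound $\E\big[\E[\bsfK_n(X_1,X_2)\bsfK_n(X_1,X_3)\mid X_2,X_3]^2\big]\le \E[\bsfK_n(X_1,X_2)^2]^2=M_n^2$ (write the left side as $\E_{X_1,X_4}\big[\big(\E_{X_2}[\bsfK_n(X_1,X_2)\bsfK_n(X_4,X_2)]\big)^2\big]$ and apply Cauchy--Schwarz over $X_2$), which makes this contribution $O(1/\log n)$ with no additional assumption. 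This is precisely the paper's $R_{n,\gamma,2}$ term (see \eqref{eq:def_Rn2}, \eqref{eq:H4Kbar4} and the bound \eqref{eq:L2_diff_bdd}), and it is the reason the $\gamma=1$ case escapes the extra Assumption~\ref{assumption:kernelconvg} that is needed for $\gamma<1$. Once you isolate the cycle term and handle it with this inequality, and fix the fourth-moment count, your route (deterministic normalization plus quadratic-variation concentration plus Slutsky/Polya, or directly the self-normalized bound) goes through and is essentially the paper's proof.
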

\noindent
Note that the condition in \eqref{eq:null_clt_cond2} is well-defined by Lemma~\ref{lemma:E_barK_sq_positive}. Moreover, Theorem~\ref{thm:null_clt_fixed} follows as an immediate corollary of Theorem~\ref{thm:null_clt_general}, which in turn can be viewed as a special case of Corollary~\ref{cor:NullCLT} by setting $\gamma = 1$. The proof of Theorem~\ref{thm:null_clt_general} (and more generally Theorem~\ref{thm:berry_essen_H0}) proceeds by recalling the martingale difference sequence in the construction of $\eta_n$ from \eqref{eq:martingale_seq_Tn}, and applying Berry--Esseen bounds for self-normalized martingales from \cite{fan2018berry}.

\begin{remark}
    In the case where $d_n = d$ is fixed, the condition \eqref{eq:null_clt_cond2} is satisfied for uniformly bounded kernels, including commonly used examples such as Gaussian and Laplace kernels, when appropriate $n$-dependent bandwidths are used (see Section~\ref{sec:proof_minimax} and \cite{li2024optimality}). Similar conditions have been considered in the context of kernel two-sample tests by \cite{shekhar2022permutation} and \cite{li2024optimality}. Furthermore, a condition akin to \eqref{eq:null_clt_cond2} has appeared in the broader literature on studentized central limit theorems, notably in \cite{bentkus1996berry}, in the setting of triangular arrays of i.i.d.\ random variables. While their assumption is phrased in terms of third moments, our condition is slightly stronger, due to the fact that the summands in our case form a martingale difference sequence and thus exhibit dependencies not present in the i.i.d.\ setting.
\end{remark}

\begin{remark}\label{remark:non_median_band}
    Note that Theorem~\ref{thm:null_clt_general} permits the kernel to depend on the sample size $n$, but not on the observed samples $\sX_n$ and $\sY_n$. This is particularly relevant as it excludes the common practice of selecting kernel bandwidths, such as for Gaussian or Laplace kernels, based on the median of pairwise distances from the pooled samples (the median heuristic). Nevertheless, we find in our experiments that using the median heuristic retains the same null distribution, so it is possible that the above theoretical guarantees can be extended to certain data-dependent kernels as well.
\end{remark}

\section{Consistency of the $m\MMD$ Test}
In this section, we show that the $m\MMD$ test defined in \eqref{eq:mMMD_test} is consistent, that is, it achieves asymptotic power one, against fixed alternatives. Furthermore, we establish conditions under which the test achieves \emph{uniform consistency} against a class of alternatives in the more challenging setting where the kernel, distributions, and sample space may all vary with the sample size $n$.\\

\noindent
We first show that the $m\MMD$ test is consistent against fixed alternatives.
\begin{theorem}\label{thm:fixed_consistency}
    Take $\cX = \R^d$ for some $d\geq 1$ and for a kernel $\sfK$ adopt Assumption~\ref{assumption:K}. Let $P,Q\in \cM_{\sfK}^{1}$ such that $P\neq Q$. Then $\E[1-\Psi_n]\ra 1$, that is the test $\Psi_n$ from \eqref{eq:mMMD_test} is consistent.
\end{theorem}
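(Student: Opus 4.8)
The plan is to prove consistency directly, i.e.\ to show that the rejection probability $\P(\eta_n > z_{1-\alpha})$ tends to one. Since $\eta_n = T_n/\sigma_n$, it suffices to establish (a) $T_n \pto c$ where $c := \MMD^2[P,Q,\sfK] > 0$, and (b) $n\sigma_n^2 = O_P(1)$, with $n\sigma_n^2$ eventually bounded away from $0$. Granting these, write $\eta_n = \sqrt{n}\,T_n/\sqrt{n\sigma_n^2}$: by (a) the numerator diverges to $+\infty$ in probability while by (b) the denominator is $O_P(1)$ and eventually positive, so $\eta_n \pto +\infty$ and hence $\P(\eta_n > z_{1-\alpha}) \to 1$. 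It is convenient to set $g_i := \sfK(X_i,\cdot) - \sfK(Y_i,\cdot) \in \cK$, so that $\hat f_i = \frac{1}{i}\sum_{j=1}^{i-1} g_j$ and, via $\langle g_i, g_j\rangle_\cK = \sfK(X_i,X_j) - \sfK(X_i,Y_j) - \sfK(X_j,Y_i) + \sfK(Y_i,Y_j)$, one has the compact forms $T_n = \frac1n\sum_{i=2}^n \langle \hat f_i, g_i\rangle_\cK$ and $\sigma_n^2 = \frac1{n^2}\sum_{i=2}^n \langle \hat f_i, g_i\rangle_\cK^2$. Writing $\delta := \E[g_1] = \nu_P - \nu_Q$, Assumption~\ref{assumption:K} together with $P \neq Q$ and $P,Q\in\cM_\sfK^1\subseteq\cM_\sfK^{1/2}$ gives $\nu_P\neq\nu_Q$, so $c = \|\delta\|_\cK^2 > 0$; moreover $\E\|g_1\|_\cK^2 \le 2\,\E_P[\sfK(X,X)] + 2\,\E_Q[\sfK(Y,Y)] < \infty$ because $P,Q\in\cM_\sfK^1$.

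For step (a) I would use the strong law of large numbers for i.i.d.\ $\cK$-valued summands to get $\hat f_i \to \delta$ almost surely in $\cK$ (valid since $\E\|g_1\|_\cK < \infty$). Decompose $\langle \hat f_i, g_i\rangle_\cK = \langle \hat f_i, \delta\rangle_\cK + \langle \hat f_i, g_i - \delta\rangle_\cK$. The Ces\`aro average of the first term converges a.s.\ to $\langle \delta,\delta\rangle_\cK = c$. The second term is a martingale difference sequence for the filtration $\cF_i = \sigma(\bZ_1,\dots,\bZ_i)$, since $\hat f_i$ is $\cF_{i-1}$-measurable while $g_i$ is independent of $\cF_{i-1}$ with mean $\delta$; conditioning on $\cF_{i-1}$ and using Cauchy--Schwarz gives $\E[\langle \hat f_i, g_i - \delta\rangle_\cK^2 \mid \cF_{i-1}] \le \E\|g_1 - \delta\|_\cK^2\,\|\hat f_i\|_\cK^2$, and $\E\|\hat f_i\|_\cK^2 \le \|\delta\|_\cK^2 + \E\|g_1\|_\cK^2$ is bounded uniformly in $i$, so $\sum_i i^{-2}\,\E[\langle \hat f_i, g_i - \delta\rangle_\cK^2] < \infty$. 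The martingale strong law (Kronecker's lemma applied to the $L^2$-bounded martingale $\sum_i i^{-1}\langle \hat f_i, g_i - \delta\rangle_\cK$) then yields $\frac1n\sum_{i=2}^n \langle \hat f_i, g_i - \delta\rangle_\cK \to 0$ a.s. Adding the two pieces, $T_n \to c$ a.s., hence in probability.

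For step (b), Cauchy--Schwarz gives $\langle \hat f_i, g_i\rangle_\cK^2 \le \|\hat f_i\|_\cK^2\|g_i\|_\cK^2$, and $\|\hat f_i\|_\cK \le \frac1{i-1}\sum_{j=1}^{i-1}\|g_j\|_\cK$, which converges a.s.\ to $\E\|g_1\|_\cK < \infty$; hence $M := \sup_{i\ge 2}\|\hat f_i\|_\cK < \infty$ a.s. Combined with $\frac1n\sum_{i=2}^n\|g_i\|_\cK^2 \to \E\|g_1\|_\cK^2 < \infty$ a.s.\ (SLLN again), this gives $n\sigma_n^2 = \frac1n\sum_{i=2}^n\langle \hat f_i, g_i\rangle_\cK^2 \le M^2\cdot\frac1n\sum_{i=2}^n\|g_i\|_\cK^2$, which has an a.s.\ finite limit, so $n\sigma_n^2 = O_P(1)$. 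Eventual positivity of $n\sigma_n^2$ follows from the reverse bound $n\sigma_n^2 \ge \frac{n}{n-1}T_n^2$ (Cauchy--Schwarz applied to the average defining $T_n$) together with $T_n^2 \to c^2 > 0$. Assembling (a) and (b) as in the first paragraph completes the argument.

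I do not expect a genuine obstacle here, as all the bounds are routine consequences of $P,Q\in\cM_\sfK^1$; the one point requiring attention is that the exact (zero-mean) martingale structure of $nT_n$ that held under $\bH_0$ no longer holds under $\bH_1$ and must be recovered by recentering each summand by $\delta$, after which the recentered remainder is controlled by the uniform $L^2$ bound on $\hat f_i$ — the rest is bookkeeping. A fully in-probability alternative, establishing (a) and (b) by direct first- and second-moment computations ($\E[T_n]\to c$, $\Var(T_n)\to 0$, $\E[n\sigma_n^2] = O(1)$), bypasses the Hilbert-space SLLN and is perhaps the cleanest route to write up.
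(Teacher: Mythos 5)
Your proposal is correct, and it takes a genuinely different route from the paper. The paper proves Theorem~\ref{thm:fixed_consistency} by verifying the hypotheses of the general uniform-consistency result (Theorem~\ref{thm:gen_consistency}, i.e.\ Theorem~\ref{thm:general_consistency} with $\gamma=1$): it computes $\E[\sigma_n^2]=O(1/n)$ and shows $\Var(T_n)\to 0$ by expanding the second moment of $T_n$ (the cross terms contribute $\delta^4/2(1+o(1))$), after which the general theorem's Markov/Chebyshev argument delivers consistency. You instead give a self-contained almost-sure argument: recenter $g_i$ by $\delta=\nu_P-\nu_Q$, handle the recentered part with the martingale strong law (square-summability of $i^{-2}\E[\langle \hat f_i, g_i-\delta\rangle^2]$ plus Kronecker), get $T_n\to \MMD^2[P,Q,\sfK]$ a.s., and bound $n\sigma_n^2$ above by Cauchy--Schwarz and the SLLN, so that $\eta_n=\sqrt{n}\,T_n/\sqrt{n\sigma_n^2}\to\infty$. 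Your bookkeeping is right (including $\E\|\hat f_i\|_\cK^2\le \E\|g_1\|_\cK^2+\|\delta\|_\cK^2$, the reverse bound $n\sigma_n^2\ge \tfrac{n}{n-1}T_n^2$, and the use of $\cM_\sfK^1$ for $\E\|g_1\|_\cK^2<\infty$), and note that the theorem's displayed conclusion ``$\E[1-\Psi_n]\to1$'' is evidently a typo for $\E[\Psi_n]\to1$, which is what you prove. Trade-offs: the paper's moment-based route (essentially the ``fully in-probability alternative'' you mention at the end) plugs into machinery that also covers $n$-dependent kernels, distributions, and dimensions, and yields uniformity over classes of alternatives; your route is more elementary for a fixed alternative and gives the slightly stronger conclusion $\eta_n\to\infty$ almost surely, at the cost of invoking a Hilbert-space SLLN, which requires (essential) separability of $\cK$ or strong measurability of $x\mapsto\sfK(x,\cdot)$ --- a mild technical point you could sidestep entirely, since the only quantities you actually need, $\langle\hat f_i,\delta\rangle_\cK$, $\|g_j\|_\cK$ and $\|g_j\|_\cK^2$, are real-valued i.i.d.\ averages amenable to the scalar SLLN.
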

\noindent
The above theorem requires only mild assumptions, which are trivially satisfied by commonly used kernels such as the Gaussian and Laplace kernels. Analogous to the setting discussed in Theorem~\ref{thm:berry_essen_H0}, the following result establishes conditions under which consistency continues to hold even when the kernel, distributions, and sample space are allowed to vary with the sample size $n$.

\begin{theorem}\label{thm:gen_consistency}
    Take $n\geq 1$ and $\cX_n = \R^{d_n}$. Consider a positive definite kernel $\sfK_n$ satisfying Assumption~\ref{assumption:K} on $\cX_n$ and take $\cP_n\subseteq\cM_{\sfK_n}^{1/2}\times \cM_{\sfK_n}^{1/2}$. For $(P_n,Q_n)\in \cP_n$ assume that $\delta_n := \MMD[P_n, Q_n,\sfK_n]>0$. If,
    \small
    \begin{align*}
        \lim_{n\ra\infty} \sup_{(P_n, Q_n)\in \cP_n} \frac{\E[\sigma_n^2]}{\delta_n^4} + \frac{\Var\left(T_n\right)}{\delta_n^4} = 0,
    \end{align*}
    \normalsize
    then $\lim_{n\ra\infty}\sup_{(P_n, Q_n)\in \cP_n}\E\left[1-\Psi_n\right] = 0$, that is $\Psi_n$ is consistent.
\end{theorem}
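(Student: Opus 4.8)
The plan is to show that the rejection probability $\E[\Psi_n]=\P(T_n>z_{1-\alpha}\sigma_n)$ tends to $1$ uniformly over $\cP_n$, by establishing that $T_n$ concentrates at a level of order $\delta_n^2$ while the random threshold $z_{1-\alpha}\sigma_n$ is of strictly smaller order; the two hypotheses are precisely what control these two scales.

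First I would compute $\E[T_n]$ under $P_n\neq Q_n$. Since $\hat f_i$ is $\cF_{i-1}$-measurable while $(X_i,Y_i)$ is independent of $\cF_{i-1}$ with $\E[\sfK_n(X_i,\cdot)]=\nu_{P_n}$ and $\E[\sfK_n(Y_i,\cdot)]=\nu_{Q_n}$, the tower rule gives
\[
\E\bigl\langle \hat f_i,\ \sfK_n(X_i,\cdot)-\sfK_n(Y_i,\cdot)\bigr\rangle_{\cK}
=\E\bigl\langle \hat f_i,\ \nu_{P_n}-\nu_{Q_n}\bigr\rangle_{\cK}
=\frac{i-1}{i}\,\|\nu_{P_n}-\nu_{Q_n}\|_{\cK}^2
=\frac{i-1}{i}\,\delta_n^2,
\]
where I used $\E[\hat f_i]=\tfrac{i-1}{i}(\nu_{P_n}-\nu_{Q_n})$ and $\delta_n=\|\nu_{P_n}-\nu_{Q_n}\|_{\cK}$. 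Summing over $i=2,\dots,n$ yields
\[
\E[T_n]=\frac{\delta_n^2}{n}\sum_{i=2}^n\frac{i-1}{i}=\delta_n^2\Bigl(1-\frac{H_n}{n}\Bigr),\qquad H_n:=\sum_{i=1}^n\frac1i .
\]
Because $H_n/n\to0$ and this correction is distribution-free, there is an absolute integer $n_0$ (independent of $(P_n,Q_n)$) with $\E[T_n]\ge\tfrac12\delta_n^2$ for all $n\ge n_0$; finiteness of $\E[T_n]$ is guaranteed by the hypothesis $\Var(T_n)<\infty$.

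Next I would reduce to concentration bounds via a two-event decomposition. We may assume $z_{1-\alpha}>0$ (when $z_{1-\alpha}\le0$, $\{T_n>z_{1-\alpha}\sigma_n\}\supseteq\{T_n>0\}$ since $\sigma_n\ge0$, and the argument only simplifies). Fix a constant $s>0$ with $z_{1-\alpha}\sqrt s\le\tfrac14$; then for $n\ge n_0$, on $\{\sigma_n^2\le s\delta_n^4\}$ we have $z_{1-\alpha}\sigma_n\le\tfrac14\delta_n^2\le\E[T_n]-\tfrac14\delta_n^2$, so
\[
\P\bigl(T_n\le z_{1-\alpha}\sigma_n\bigr)\ \le\ \P\bigl(\sigma_n^2>s\delta_n^4\bigr)+\P\Bigl(|T_n-\E[T_n]|\ge\tfrac14\delta_n^2\Bigr).
\]
Markov's inequality bounds the first term by $\E[\sigma_n^2]/(s\delta_n^4)$ and Chebyshev's inequality bounds the second by $16\,\Var(T_n)/\delta_n^4$; hence, taking the supremum over $\cP_n$, for all $n\ge n_0$,
\[
\sup_{(P_n,Q_n)\in\cP_n}\E[1-\Psi_n]\ \le\ \Bigl(16+\tfrac1s\Bigr)\sup_{(P_n,Q_n)\in\cP_n}\Bigl(\frac{\Var(T_n)}{\delta_n^4}+\frac{\E[\sigma_n^2]}{\delta_n^4}\Bigr)\ \longrightarrow\ 0
\]
by the assumed condition, which is the claim. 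I expect the only point needing genuine care to be the mean computation in the second step: justifying the interchange of expectation with the RKHS inner product (a Bochner-integrability matter, covered by the moment assumptions implicit in $\Var(T_n)<\infty$) and, crucially, noting that the $H_n/n$ correction — hence the cutoff $n_0$ — does not depend on the underlying distributions, so that the final bound is genuinely uniform over $\cP_n$. Everything else is a routine Markov/Chebyshev argument.
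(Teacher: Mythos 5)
Your proof is correct and follows essentially the same route as the paper, which proves this via the general Theorem~\ref{thm:general_consistency}: bound $\E[1-\Psi_n]$ by a Markov-inequality event controlling $\sigma_n^2$ through $\E[\sigma_n^2]/\delta_n^4$ plus a Chebyshev bound on the deviation of the statistic from its mean ($\approx\delta_n^2$) through $\Var(T_n)/\delta_n^4$. The only cosmetic differences are that the paper renormalizes by $s_{n,\gamma}=\sum_{i=2}^n (i-1)/i^\gamma$ so the mean is exactly $\delta_n^2$ (you instead keep $T_n$ and absorb the distribution-free $1-H_n/n$ factor), and the paper runs an $\vep$-then-$n$ limiting argument where you fix the constant $s$ and get an explicit uniform bound.
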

\noindent
Theorem~\ref{thm:gen_consistency} is an immediate consequence of Theorem~\ref{thm:general_consistency}. Theorem~\ref{thm:fixed_consistency} indeed follows by checking the conditions of Theorem~\ref{thm:gen_consistency}, which we detail in Section~\ref{sec:proofof_fixed_consistency}. We note that the conditions required for the consistency of the test $\Psi_n$ are identical to those needed for the consistency of the cross-MMD test proposed by \cite{shekhar2022permutation}.

\section{Distribution under the alternative}
In this section, we show that the $m$-MMD statistic $T_n$ converges to a Gaussian distribution even under the alternative hypothesis, that is, when $P \neq Q$. To state the result, we consider the setting, allowing the distributions $P_n$, $Q_n$, and the kernel $\sfK_n$ to vary with the sample size $n$. However note that in contrast to the previous general settings here $\cX = \R^{d}$ remains fixed with respect to $n$.\\

\noindent
We begin by introducing the following notation:
\begin{align}\label{eq:def_Hn_main}
    \sfH_n(\bz_1, \bz_2) := \sfK_n(x_1, x_2) - \sfK_n(x_1, y_2) - \sfK_n(x_2, y_1) + \sfK_n(y_1, y_2),
\end{align}
where $\bz_i = (x_i, y_i)$ for $i = 1, 2$. Furthermore, for $\bZ = (X, Y) \sim P_n \times Q_n$, define
\begin{align}\label{eq:def_h_n}
    h_n(\bz) := \E\left[\sfH_n(\bz, \bZ)\right] - \MMD[P_n, Q_n, \sfK_n]^2.
\end{align}

With this setup in place, we now state the result characterizing the asymptotic distribution of $T_n$ under the alternative $P_n \neq Q_n$.

\begin{theorem}\label{thm:general_alt_dist}
    Adopt Assumption~\ref{assumption:K} with $\sfK_n$. Let $P_n, Q_n\in \cM_{\sfK_n}^{3/2}$ such that $P_n\neq Q_n$ and assume that for $\bZ,\bZ_1,\bZ_2\sim P_n\times Q_n$,
    \begin{align*}
        \frac{\E\left[\left|h_{n}(\bZ)\right|^3\right]}{\Var\left(h_n(\bZ)\right)^{3/2}}\ll \frac{n^{3/2}}{(\log n)^3},\ \frac{\E\left[g_n(\bZ_1,\bZ_2)^2\right]}{\Var\left(h_{n}(\bZ)\right)}\ll \frac{n}{\log n},\ \frac{\MMD[P_n, Q_n, \sfK_n]^2}{\sqrt{\Var\left(h_{n}(\bZ)^2\right)}}\ll\frac{\sqrt{n}}{\log n},
    \end{align*}
    where $g_n(\bz_1,\bz_2) = \sfH_n(\bz_1,\bz_2) - h_n(\bz_1) - h_n(\bz_2) + \MMD[P_n,Q_n,\sfK_n]^2$. Then,
    \begin{align*}
        \sqrt{\frac{n}{\Var\left(h_{n}(\bZ)\right)}}\left(T_{n}-\MMD[P_n,Q_n,\sfK_n]^2\right)\dto \mathrm N(0,5).
    \end{align*}
\end{theorem}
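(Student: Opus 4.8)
The plan is to decompose $T_n - \MMD[P_n,Q_n,\sfK_n]^2$ into a sum of a leading martingale term, a quadratic "remainder" term arising from the incomplete-U-statistic structure, and a negligible bias. Recall from \eqref{eq:Tn_alt} that $nT_n = \sum_{i=2}^n \frac{1}{i}\sum_{j=1}^{i-1}\sfH_n(\bZ_i,\bZ_j)$, so $T_n$ is (up to the $1/i$ weighting) an incomplete/forward U-statistic of the kernel $\sfH_n$. Using the Hoeffding-type decomposition $\sfH_n(\bz_1,\bz_2) = \MMD^2 + h_n(\bz_1) + h_n(\bz_2) + g_n(\bz_1,\bz_2)$ (with $\MMD^2$ shorthand for $\MMD[P_n,Q_n,\sfK_n]^2$, and noting $\E[g_n(\bz_1,\bZ_2)]=\E[g_n(\bZ_1,\bz_2)]=0$), I would write
\begin{align*}
    n\bigl(T_n - \MMD^2\bigr) = \underbrace{\sum_{i=2}^n\Bigl(\tfrac{i-1}{i}h_n(\bZ_i) + \tfrac{1}{i}\sum_{j=1}^{i-1}h_n(\bZ_j)\Bigr)}_{=:A_n} \;+\; \underbrace{\sum_{i=2}^n\tfrac{1}{i}\sum_{j=1}^{i-1}g_n(\bZ_i,\bZ_j)}_{=:B_n} \;-\; n\MMD^2 + \sum_{i=2}^n\MMD^2 ,
\end{align*}
where the last two deterministic terms cancel up to an $O(\log n)$ bias that the third moment condition on $\MMD^2$ is designed to kill after dividing by $\sqrt{n\,\Var(h_n(\bZ))}$.

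The heart of the argument is the linear term $A_n$. Rewriting $A_n = \sum_{k=2}^n c_{n,k} h_n(\bZ_k)$ by collecting coefficients, one finds $c_{n,k} = \frac{k-1}{k} + \sum_{i=k+1}^n \frac{1}{i}$, so $c_{n,k} \approx 1 + \log(n/k)$ for $k$ not too small. Thus $A_n$ is a sum of i.i.d.\ (across $k$, since $\bZ_k\sim P_n\times Q_n$ i.i.d.) mean-zero terms $c_{n,k}h_n(\bZ_k)$, and $\Var(A_n) = \Var(h_n(\bZ))\sum_{k=2}^n c_{n,k}^2$. A direct computation of $\sum_{k=2}^n c_{n,k}^2$ — using $c_{n,k}\approx 1+\log(n/k)$ and $\frac{1}{n}\sum_{k}(1+\log(n/k))^2 \to \int_0^1 (1-\log u)^2\,\d u = 5$ — gives $\sum_{k=2}^n c_{n,k}^2 \sim 5n$, which is exactly where the variance constant $5$ in the statement comes from. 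I would then apply a Lyapunov/Berry--Esseen bound for sums of independent non-identically-distributed random variables (e.g.\ the classical non-uniform Berry--Esseen, or Bentkus's version cited in the paper's remark) to $A_n/\sqrt{\Var(A_n)}$; the Lyapunov ratio is controlled by $\frac{(\max_k c_{n,k}^3)\sum_k \E|h_n(\bZ_k)|^3}{(\sum_k c_{n,k}^2)^{3/2}} \asymp \frac{(\log n)^3 \cdot n \E|h_n(\bZ)|^3}{n^{3/2}\Var(h_n(\bZ))^{3/2}}$, which tends to $0$ precisely under the first hypothesis $\E|h_n(\bZ)|^3/\Var(h_n(\bZ))^{3/2} \ll n^{3/2}/(\log n)^3$.

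The next step is to show $B_n$ is asymptotically negligible relative to $\sqrt{n\,\Var(h_n(\bZ))}$. Here $B_n = \sum_{i=2}^n \frac{1}{i}\sum_{j<i} g_n(\bZ_i,\bZ_j)$ is a (mean-zero) martingale in $i$ with respect to $\cF_i$, because $\E[g_n(\bZ_i,\bZ_j)\mid \cF_{i-1}]=0$ for $j<i$ by the degeneracy of $g_n$. Its conditional variance is $\sum_{i=2}^n \frac{1}{i^2}\E\bigl[(\sum_{j<i}g_n(\bZ_i,\bZ_j))^2\mid\cF_{i-1}\bigr]$, and since the cross terms $g_n(\bZ_i,\bZ_j)g_n(\bZ_i,\bZ_{j'})$ for $j\neq j'$ have conditional expectation zero (again by degeneracy), $\E[B_n^2] = \sum_{i=2}^n \frac{i-1}{i^2}\E[g_n(\bZ_1,\bZ_2)^2] = O(\log n)\,\E[g_n(\bZ_1,\bZ_2)^2]$. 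Dividing by $n\,\Var(h_n(\bZ))$ gives a bound of order $\frac{\log n \cdot \E[g_n^2]}{n\Var(h_n(\bZ))}$, which goes to $0$ exactly under the second hypothesis $\E[g_n(\bZ_1,\bZ_2)^2]/\Var(h_n(\bZ)) \ll n/\log n$; Chebyshev then makes $B_n/\sqrt{n\Var(h_n(\bZ))}\pto 0$. Combining via Slutsky with the CLT for $A_n$ and the vanishing bias yields the claim. \textbf{The main obstacle} I anticipate is the bookkeeping for the linear term: one must carefully verify the coefficient asymptotics $c_{n,k} = 1 + \log(n/k) + o(1)$ uniformly enough to get both $\sum_k c_{n,k}^2 \sim 5n$ and the Lyapunov bound $\max_k c_{n,k} = O(\log n)$, and then confirm the Berry--Esseen rate for non-identical summands interacts correctly with the stated third-moment condition — in particular making sure the contribution of small indices $k$ (where $c_{n,k}$ is largest, of size $\sim\log n$) does not dominate either the variance or the Lyapunov sum. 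A secondary subtlety is handling the bias: the target is centered at $\MMD^2$ but $\E[T_n] = \MMD^2$ exactly (each summand $\sfH_n(\bZ_i,\bZ_j)$, $j<i$, has mean $\MMD^2$), so in fact the two deterministic terms cancel identically and no third hypothesis is needed for centering — that hypothesis must instead be entering through a more refined place, most plausibly in controlling a higher-order interaction between $A_n$ and $B_n$ or in a Berry--Esseen statement for the \emph{joint} martingale $A_n + B_n$ rather than treating them separately; I would revisit whether a cleaner route is to treat $A_n + B_n$ as a single martingale difference sequence and invoke \cite{fan2018berry} directly, as the paper does for the null case.
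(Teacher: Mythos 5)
Your route is essentially the paper's: the Hoeffding decomposition $\sfH_n = \MMD^2 + h_n(\bz_1)+h_n(\bz_2)+g_n$, a CLT for the weighted i.i.d.\ linear part with coefficients $c_{n,k}=\frac{k-1}{k}+\sum_{i=k+1}^n\frac{1}{i}\approx 1+\log(n/k)$ (so that $\frac{1}{n}\sum_k c_{n,k}^2\to\int_0^1(1-\log u)^2\,\d u=5$, which is exactly the paper's $S_n\to 5$ computation), and a Chebyshev bound for the degenerate part driven by the second hypothesis. The genuine error is in your final paragraph, where you retract your own (correct) first-paragraph treatment of the bias. It is not true that $\E[T_n]=\MMD^2$ exactly: each row of $nT_n$ carries total weight $\frac{i-1}{i}$, so $\E[nT_n]=\sum_{i=2}^n\frac{i-1}{i}\MMD^2=(n-H_n)\MMD^2$ with $H_n$ the harmonic number; your displayed identity already drops this factor (you wrote $\sum_{i=2}^n\MMD^2$ in place of $\sum_{i=2}^n\frac{i-1}{i}\MMD^2$), and the later claim of exact cancellation compounds it. Thus $n(T_n-\MMD^2)$ contains a deterministic term $-H_n\MMD^2\asymp -(\log n)\MMD^2$, and after multiplying by $\sqrt{n/\Var(h_n(\bZ))}$ it is of order $\log n\,\MMD^2/\sqrt{n\,\Var(h_n(\bZ))}$ — this is precisely what the third hypothesis is there to kill, and precisely how the paper's proof uses it (its $O(\log n/n)$ remainder). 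The speculation that the third condition must enter "somewhere more refined," or through a joint martingale treatment of $A_n+B_n$ via the self-normalized bound used under the null, is unnecessary and wrong: the paper's alternative-distribution proof uses exactly your split (Lyapunov CLT for the independent linear part, Chebyshev for the degenerate part, plus the bias term).

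A second place where your write-up does not close is the Lyapunov step. Your displayed ratio, with $\sum_k\E|h_n(\bZ_k)|^3=n\,\E|h_n(\bZ)|^3$ and $(\sum_k c_{n,k}^2)^{3/2}\asymp n^{3/2}$, equals $\asymp (\log n)^3\,\E|h_n(\bZ)|^3/\bigl(\sqrt{n}\,\Var(h_n(\bZ))^{3/2}\bigr)$, and this does \emph{not} tend to $0$ under the first hypothesis as you assert — that hypothesis only caps $\E|h_n|^3/\Var(h_n)^{3/2}$ at $n^{3/2}/(\log n)^3$, leaving your bound as large as $n$. The paper arrives at the smaller rate $(\log n)^3\E|h_n|^3/(n^{3/2}\Var^{3/2})$ through its bound on $\sum_i(w_i^{(n)})^3$; keeping your (correct) factor of $n$, the best you can do is sharpen $\sum_k c_{n,k}^3$ to $\Theta(n)$ (only a vanishing fraction of indices have $c_{n,k}\asymp\log n$; indeed $\frac{1}{n}\sum_k c_{n,k}^3\to\int_0^1(1-\log u)^3\,\d u=16$), which gives a Lyapunov ratio $\asymp \E|h_n|^3/(\sqrt{n}\,\Var(h_n)^{3/2})$ and hence requires a third-moment condition of order $o(\sqrt n)$ rather than the stated $o(n^{3/2}/(\log n)^3)$. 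So, as written, your CLT step is not justified by the stated hypothesis; you need to either reconcile your weight bookkeeping with the rate the paper claims or record the stronger moment requirement your bound actually needs.
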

Using the above theorem, when $P,Q$ and $\sfK$ does not depend on $n$ we can easily conclude that,
\begin{align}\label{eq:alt_dist}
    \sqrt{n}\left(T_n - \MMD[P,Q,\sfK]^2\right)\dto \mathrm N\left(0, 5\Var\left(h(\bZ)\right)\right)\text{ under }\bH_1,
\end{align}
where $h(\bZ)$ is defined in \eqref{eq:def_h_n} (without the dependence on $n$). We note that the asymptotic distribution in \eqref{eq:alt_dist} closely resembles that of the original quadratic-time MMD estimator from \citet{gretton2012kernel}, having a constant factor 5 in the asymptotic variance in place of 4. While the results in Section~\ref{sec:null_dist_section} are similar to those for the cross-MMD test by \citet{shekhar2022permutation}, the latter do not establish distributional convergence under the alternative.

\section{Experiments}\label{sec:experiments}
We empirically evaluate the proposed $m\MMD$-test against the linear-time MMD test \citep{gretton2012kernel}, Block MMD with $B = \lfloor\sqrt{n}\rfloor$ \citep{zaremba2013b}, and cross-MMD \citep{shekhar2022permutation}, collectively referred to as computationally efficient MMD tests. Section~\ref{sec:empirical_null} validates the null distribution from Section~\ref{sec:null_dist_section}, while Sections~\ref{sec:power_sims} and~\ref{sec:real_data} compare performance on simulated and real data. In all experiments, we use the median heuristic for bandwidth selection. Although this choice is not supported by our theory (see Remark~\ref{remark:non_median_band}), nor by the theoretical guarantees in most prior works, it remains the standard in empirical studies due to its strong practical performance. Code for all experiments in this work is available at  \href{https://github.com/anirbanc96/mMMD}{https://github.com/anirbanc96/mMMD}. 

\subsection{Limiting Null Distribution of $\eta_n$}\label{sec:empirical_null}
In Theorem~\ref{thm:null_clt_fixed}, we established that the asymptotic null distribution of the $m$-MMD test statistic $\eta_n$ is standard Gaussian under mild conditions. In this section, we empirically validate this result across a range of settings, examining the impact of dimensionality, data distribution, and kernel choice. Specifically, we consider dimensions $d \in \{10, 100, 250, 500\}$; data distributions $\mathrm N_d(\mathbf{0}_d, \mathbf{I}_d)$ and $t_d(10)$; and both Gaussian and Laplace kernels. All experiments in this section use a sample size of $n = 200$ and $2000$ resamples for the empirical distributions.

\begin{figure}[h!]
  \centering

  \begin{minipage}[b]{0.24\textwidth}
    \centering
    \includegraphics[width=\linewidth]{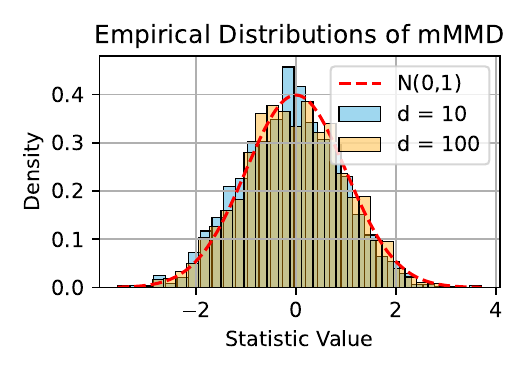}
  \end{minipage}
  \hfill
  \begin{minipage}[b]{0.24\textwidth}
    \centering
    \includegraphics[width=\linewidth]{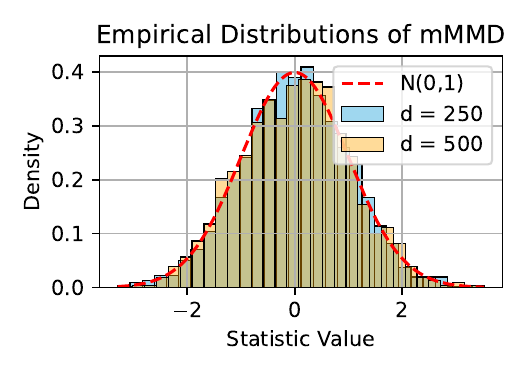}
  \end{minipage}
  \hfill
  \begin{minipage}[b]{0.24\textwidth}
    \centering
    \includegraphics[width=\linewidth]{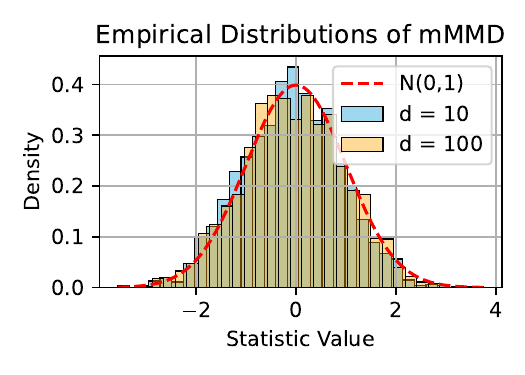}
  \end{minipage}
  \hfill
  \begin{minipage}[b]{0.24\textwidth}
    \centering
    \includegraphics[width=\linewidth]{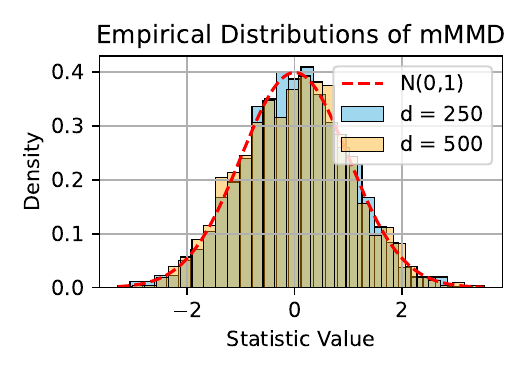}
  \end{minipage}

  \caption{Empirical null distribution of the $m$-MMD test statistic $\eta_n$ from equation~\eqref{eq:def_eta_n}. The left two figures use a Gaussian kernel with the median heuristic, while the right two figures use a Laplace kernel, also with the median heuristic. The underlying data distribution is $\mathrm{N}(\mathbf{0}_d, \mathbf{I}_d)$.}
  \label{fig:null_dist_gauss}
\end{figure}

\begin{figure}[h!]
  \centering

  \begin{minipage}[b]{0.24\textwidth}
    \centering
    \includegraphics[width=\linewidth]{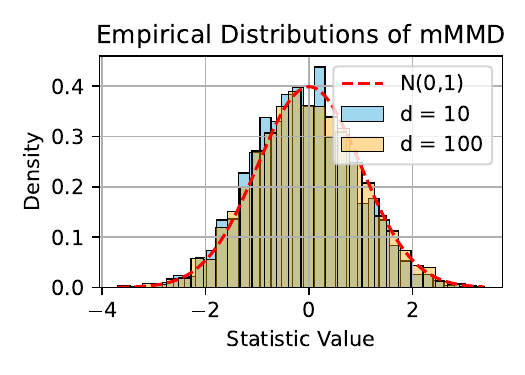}
  \end{minipage}
  \hfill
  \begin{minipage}[b]{0.24\textwidth}
    \centering
    \includegraphics[width=\linewidth]{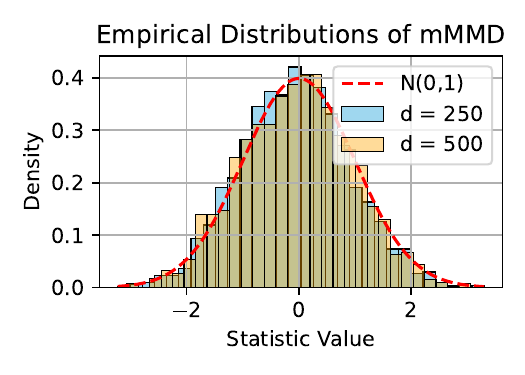}
  \end{minipage}
  \hfill
  \begin{minipage}[b]{0.24\textwidth}
    \centering
    \includegraphics[width=\linewidth]{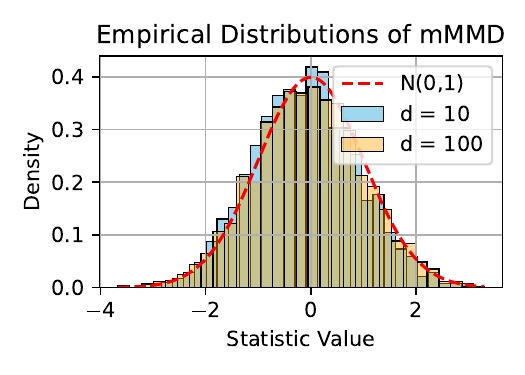}
  \end{minipage}
  \hfill
  \begin{minipage}[b]{0.24\textwidth}
    \centering
    \includegraphics[width=\linewidth]{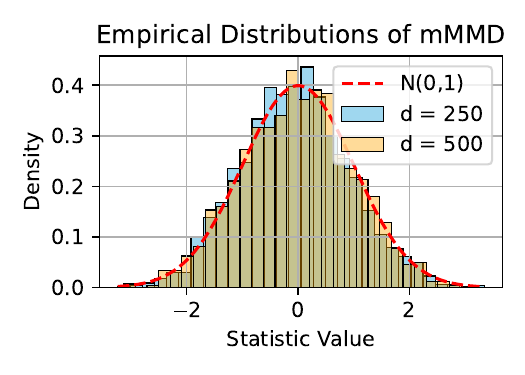}
  \end{minipage}

  \caption{Empirical null distribution of the $m$-MMD test statistic $\eta_n$ from equation~\eqref{eq:def_eta_n}. The left two figures use a Gaussian kernel with the median heuristic, while the right two use a Laplace kernel, also with the median heuristic. The underlying data distribution is $t_d(10)$.}
  \label{fig:null_dist_t}
\end{figure}

In Figure~\ref{fig:null_dist_gauss}, we plot the empirical null distribution of $\eta_n$ from equation~\eqref{eq:def_eta_n}, where the underlying data distribution is $\mathrm{N}_d(\bm{0}_d, \bm{I}_d)$. In Figure~\ref{fig:null_dist_t}, we consider the same setup but with multivariate $t$-distribution having degrees of freedom 10 as the underlying data.

Across all these settings, the empirical distribution of $\eta_n$ closely matches the standard Gaussian, demonstrating robustness to the data distribution, kernel choice, and dimensionality. This stands in sharp contrast to the null distribution of the quadratic-time MMD statistic from \cite{gretton2012kernel}, which varies significantly, particularly with increasing dimension  \citep{shekhar2022permutation}.

\subsection{Power Experiments}\label{sec:power_sims}
In this section, we compare the performance of our proposed test with the quadratic-time MMD test with 200 permutations, the sequential MMD test from \cite{shekhar2023nonparametric} (referred as \texttt{BetMMD}) and the computationally efficient MMD tests.

For the experimental setup, let $d \geq 1$, and define $\bm\mu_{d,j,\varepsilon} \in \mathbb{R}^d$ such that the first $j$ coordinates are equal to $\varepsilon$ and the remaining coordinates are zero. We consider a two-sample testing problem with distributions $P = \mathrm{N}_d(\mathbf{0}_d, \mathbf{I}_d)$ and $Q = \mathrm{N}_d(\bm\mu_{d,j,\varepsilon}, \mathbf{I}_d)$. The specific configurations we evaluate are $(d, j, \varepsilon) = (10, 5, 0.3)$, $(50, 5, 0.3)$, and $(100, 5, 0.5)$. This experimental setting has also been considered in \cite{shekhar2022permutation}. All experiments in this section use the Gaussian kernel.

\begin{figure}[h!]
  \centering

  \begin{minipage}[b]{0.32\textwidth}
    \centering
    \includegraphics[width=\linewidth]{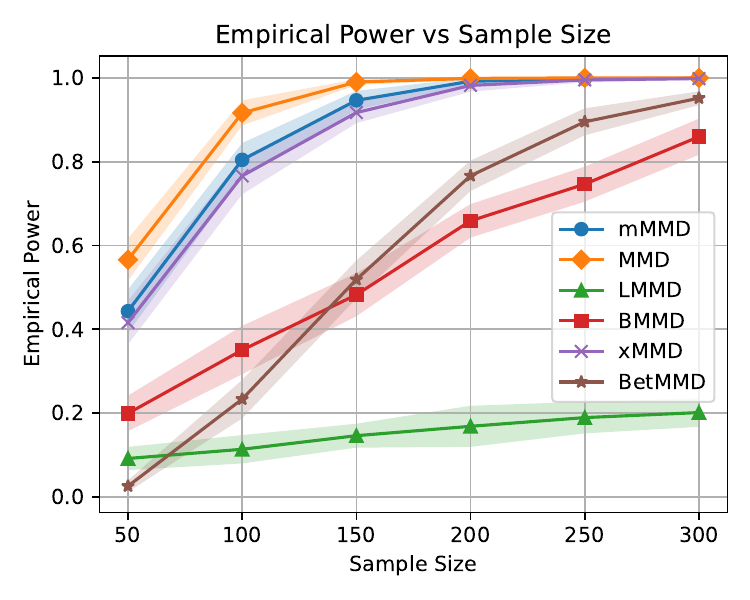}
  \end{minipage}
  \hfill
  \begin{minipage}[b]{0.32\textwidth}
    \centering
    \includegraphics[width=\linewidth]{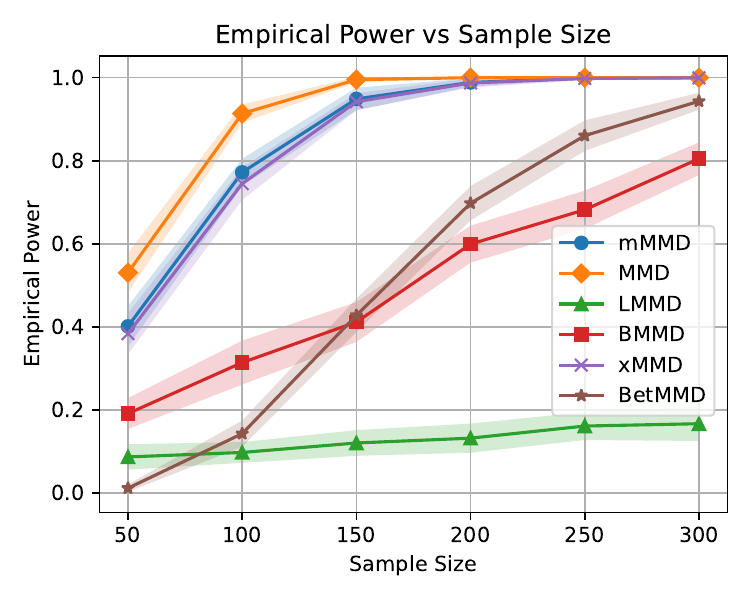}
  \end{minipage}
  \hfill
  \begin{minipage}[b]{0.32\textwidth}
    \centering
    \includegraphics[width=\linewidth]{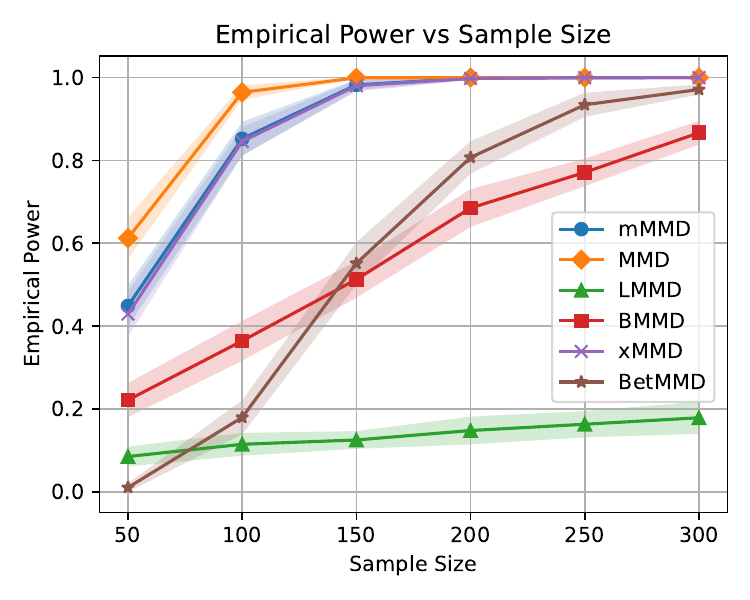}
  \end{minipage}

  \caption{Empirical power of the $m$-MMD test compared to the original quadratic-time \texttt{MMD}, the linear-time \texttt{LMMD}, the block-based \texttt{BMMD}, and the cross-MMD (\texttt{xMMD}) tests. From left to right, the figures correspond to $(d, j, \varepsilon) = (10, 5, 0.3)$, $(50, 5, 0.4)$, and $(100, 5, 0.5)$.}
  \label{fig:power_comparison}
\end{figure}

We present the performance comparison in Figure~\ref{fig:power_comparison}. Across all settings, the $m$-MMD test consistently outperforms both the linear-time MMD and Block MMD tests, and achieves performance comparable to the cross-MMD test. While its performance is slightly below that of the quadratic-time MMD test, this is expected due to the trade-off between statistical  and computational efficiency.

\subsection{Real Data Experiment}\label{sec:real_data}

In this section we compare the performance of our proposed test with computationally efficient MMD tests on the MNIST \citep{lecun1998mnist} data set. We evaluate the performance of tests for the test from \eqref{eq:H01} with the following $5$ pairs of sets of digits: \texttt{Group 1}: $P = \{2, 4, 8, 9\}$ and $Q = \{3, 4, 7, 9\}$, \texttt{Group 2}: $P = \{1, 2, 4, 8, 9\}$ and $Q = \{1, 3, 4, 7, 9\}$, \texttt{Group 3}: $P = \{0, 1, 2, 4, 8, 9\}$ and $Q = \{0, 1, 3, 4, 7, 9\}$, \texttt{Group 4}: $P = \{0, 1, 2, 4, 5, 8, 9\}$ and\\ $Q = \{0, 1, 3, 4, 5, 7, 9\}$, and \texttt{Group 5}: $P = \{0, 1, 2, 4, 5, 6, 8, 9\}$ and $Q = \{0, 1, 3, 4, 5, 6, 7, 9\}$. Similar experimental settings have been considered in \cite{schrab2023mmd} and \cite{chatterjee2025boosting}. 

\begin{figure}[H]
  \centering

  \begin{minipage}[t]{0.5\textwidth}
  The entire procedure is repeated 100 times, and the resulting empirical power is reported in Figure~\ref{fig:mnist_asymptotic}.

\noindent
For each of the five methods, we draw $100$ samples with replacement from the sets $P$ and $Q$ and use the asymptotic Gaussian distribution to calibrate the test under $\bm{H}_0$ at a significance level of $0.05$. Notice that by construction, with increasing group number it becomes harder to detect difference between the distributions. Indeed in Figure~\ref{fig:mnist_asymptotic} we note that the power of all methods decreases with increasing group index. However for the proposed $m\MMD$-test and the cross-MMD tests the power decrement is much slower and overall they perform significantly better than other computationally efficient variants.
    
  \end{minipage}
  \hfill
  \begin{minipage}[t]{0.45\textwidth}
    \vspace{0pt}
    \centering
    \includegraphics[width=\linewidth]{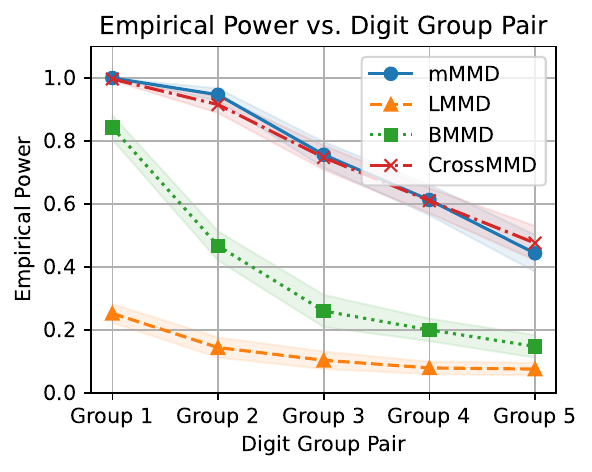}
    \caption{Empirical power comparison on MNIST.}
    \label{fig:mnist_asymptotic}
  \end{minipage}
\end{figure}
\vspace{-15pt}
\noindent

\section{Extensions}\label{sec:broad_scope}
In this section, we explore two broad applications of the \emph{mMMD}-test. In Section~\ref{sec:multi_kernel}, we propose combining multiple kernels with the $m\MMD$-test to potentially enhance power, and establish the null distribution of a corresponding Mahalanobis-distance-type test statistic. In Section~\ref{sec:general_mMMD}, we introduce a general family of MMD-based test statistics that interpolate between the quadratic-time MMD and the $m\MMD$ statistic. We study this family in more detail in Section~\ref{sec:Tngamma}.

\subsection{\texttt{mmMMD:} Multi-kernel test using $m\MMD$}\label{sec:multi_kernel}
We explore the broader applicability of the \emph{mMMD} test within two-sample testing, particularly as a complement to recent multi-kernel approaches for boosting power. As noted in Section~\ref{sec:experiments}, we used the widely adopted \emph{median heuristic} to select bandwidths for Gaussian and Laplace kernels. Though common, this heuristic lacks theoretical support and can underperform when the distributional difference occurs at scales far from the median pairwise distance. An alternative is to split the data and tune kernel parameters to maximize test power on a held-out set \citep{gretton2012optimal,liu2020learning}. More recent works improve robustness and power by aggregating over multiple kernels or bandwidths \citep{schrab2023mmd,schrab2022efficient,biggs2023mmd,chatterjee2025boosting}, enabling detection of both local and global differences. However, these methods still rely on the quadratic-time MMD and face significant computational costs due to the need for resampling or permutation to calibrate the test statistic.\\

\noindent
This limitation presents a promising opportunity to combine multiple kernels with the $m\MMD$ test introduced in \eqref{eq:mMMD_test}. To that end in the following we establish the asymptotic distribution of $m\MMD$ test with multiple kernels. Consider $\bm \cK_r = \{\sfK_1,\ldots,\sfK_r\}$ to be a collection of bounded positive definite kernels on $\cX = \R^d$ for some $d\geq 1$, such that each $\sfK\in \bm\cK_r$ is characteristic for $\cP(\cX)$. Let $T_{n,\ell}$ (recall \eqref{eq:Tn_alt}) be the (un-normalised) $m\MMD$ statistic computed using the kernel $\sfK_\ell$. With the notation in place we now establish the asymptotic null distribution of $\bm T_{n,r} = (T_{n,1},\ldots, T_{n,r})$.
\begin{theorem}\label{thm:multi_kernel_mMMD}
    Let $\Sigma_{n,r} = (\sigma_{n}(a,b))_{a,b=1}^{r}$ be the non-negative definite matrix with entries $\sigma_{n}(a,b) := \frac{1}{n^2}\sum_{i=2}^{n}\left(\frac{1}{i}\sum_{j=1}^{i-1}\sfH_a(\bZ_i,\bZ_j)\right)\left(\frac{1}{i}\sum_{j=1}^{i-1}\sfH_b(\bZ_i,\bZ_j)\right)$ where $\sfH_\ell$ is defined as in \eqref{eq:def_Hn_main} using the kernel $\sfK_\ell$ and $\bZ_i = (X_i, Y_i), 1\leq i\leq n$. Then under $\bH_0$ from \eqref{eq:H01}, $T_{n,r}^{\top}\Sigma_{n,r}^{-1}T_{n,r}\dto \chi_r^2$.
\end{theorem}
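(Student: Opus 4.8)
The plan is to recognize $\bm T_{n,r}^\top\Sigma_{n,r}^{-1}\bm T_{n,r}$ as the self-normalized quadratic form of a vector-valued martingale, and to reduce the claim to a multivariate martingale CLT combined with a law of large numbers for the observed quadratic variation. For $2\le i\le n$ and $1\le\ell\le r$ put $D_{i,\ell}=\frac1i\sum_{j=1}^{i-1}\sfH_\ell(\bZ_i,\bZ_j)$ and $\bm D_i=(D_{i,1},\dots,D_{i,r})^\top$. Exactly as in the single-kernel observation \eqref{eq:martingale_seq_Tn} --- using that under $\bH_0$ (so $P=Q$) one has $\E_{\bZ_i}[\sfH_\ell(\bZ_i,\bz)]=0$ for every $\bz$ and every $\ell$ --- the sequence $\{\bm D_i\}$ is a vector martingale difference sequence with respect to $\cF_i=\sigma(\bZ_1,\dots,\bZ_i)$. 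Writing $S_n=\sum_{i=2}^n\bm D_i$ and $V_n=\sum_{i=2}^n\bm D_i\bm D_i^\top$, we have $\bm T_{n,r}=S_n/n$ and $\Sigma_{n,r}=V_n/n^2$, so that $\bm T_{n,r}^\top\Sigma_{n,r}^{-1}\bm T_{n,r}=S_n^\top V_n^{-1}S_n$, the self-normalized quadratic form of the martingale $S_n$ against its observed quadratic variation. Thus it suffices to establish $V_n^{-1/2}S_n\dto\rmN(0,I_r)$, since then the continuous mapping theorem gives $S_n^\top V_n^{-1}S_n\dto\|\rmN(0,I_r)\|^2=\chi_r^2$. (On the probability-$o(1)$ event that $V_n$ is singular we set $V_n^{-1}$ arbitrarily, which does not affect convergence in distribution.)

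The natural scaling is $\kappa_n^2:=\sum_{i=2}^n\frac{i-1}{i^2}\asymp\log n$, and the limiting object is the matrix $C:=\big(\E[\sfH_a(\bZ_1,\bZ_2)\sfH_b(\bZ_1,\bZ_2)]\big)_{a,b=1}^r$, where $\bZ_1,\bZ_2$ are independent copies drawn under $\bH_0$. I would prove the two statements
\[
\kappa_n^{-2}V_n\;\pto\;C,\qquad \kappa_n^{-1}S_n\;\dto\;\rmN(0,C),
\]
from which Slutsky's lemma and the continuous mapping theorem yield $S_n^\top V_n^{-1}S_n=(\kappa_n^{-1}S_n)^\top(\kappa_n^{-2}V_n)^{-1}(\kappa_n^{-1}S_n)\dto\rmN(0,C)^\top C^{-1}\rmN(0,C)=\chi_r^2$, provided $C$ is non-degenerate (this is what makes the limit a full-rank chi-square; $C\succ0$ holds when $\sfK_1,\dots,\sfK_r$ are distinct characteristic kernels, so that $\sfH_1,\dots,\sfH_r$ are linearly independent in $L^2(P\times P)$, and in general one obtains $\chi^2_{\operatorname{rank} C}$ with a Moore--Penrose inverse). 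For the CLT I would use the Cram\'er--Wold device: for fixed $\bm\lambda\in\R^r$ the scalars $\bm\lambda^\top\bm D_i$ form a bounded martingale difference sequence (boundedness of each $\sfK_\ell$ forces $\|\bm D_i\|=O(1)$ deterministically while $\kappa_n\to\infty$, giving the conditional Lindeberg condition for free), with predictable variance $\sum_{i=2}^n\E[(\bm\lambda^\top\bm D_i)^2\mid\cF_{i-1}]=\bm\lambda^\top W_n\bm\lambda$ where $W_n:=\sum_{i=2}^n\E[\bm D_i\bm D_i^\top\mid\cF_{i-1}]$; the same self-normalized martingale Berry--Esseen / Hall--Heyde machinery invoked in the proof of Theorem~\ref{thm:null_clt_general} then gives $\bm\lambda^\top S_n/\kappa_n\dto\rmN(0,\bm\lambda^\top C\bm\lambda)$ as soon as $\kappa_n^{-2}W_n\pto C$.

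The remaining --- and genuinely technical --- step is the law of large numbers $\kappa_n^{-2}W_n\pto C$, from which $\kappa_n^{-2}V_n\pto C$ follows because $V_n-W_n=\sum_{i=2}^n(\bm D_i\bm D_i^\top-\E[\bm D_i\bm D_i^\top\mid\cF_{i-1}])$ is a martingale with bounded increments, so an $L^2$ bound gives $\kappa_n^{-2}(V_n-W_n)\pto0$. Here $\E[\kappa_n^{-2}W_n]\to C$ is immediate: $\E[\bm D_i\bm D_i^\top]=\frac{i-1}{i^2}C$, the off-diagonal terms $j\ne k$ in the double sum vanishing under $\bH_0$ because $\E_{\bZ_i}[\sfH_\ell(\bZ_i,\bz)]=0$. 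What requires care is the variance bound $\Var\big((\kappa_n^{-2}W_n)_{ab}\big)\to0$: $W_n$ is a $1/i^2$-weighted double sum of the once-integrated kernels $\psi_{ab}(\bz,\bz'):=\E[\sfH_a(\bZ,\bz)\sfH_b(\bZ,\bz')]$, and expanding its variance produces fourth-order kernel moments of the type appearing in \eqref{eq:null_clt_cond2}; boundedness of the $\sfK_\ell$ keeps these finite and the $1/i^2$ weighting makes the resulting sum over index quadruples convergent, but the combinatorial bookkeeping is the main obstacle --- it is precisely the multivariate lift of the variance control already carried out for $\sigma_n^2=\Sigma_{n,1}$ in the proof of Theorem~\ref{thm:null_clt_general}. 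Assembling the CLT and the LLN completes the argument.
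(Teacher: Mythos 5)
Your proposal is correct and takes essentially the same route as the paper's (omitted) proof: the paper likewise argues via the Cramér--Wold device and the martingale CLT of Hall and Heyde, reusing the fourth-moment and quadratic-variation controls from Sections~\ref{sec:proof_of_4thmmntbdd} and~\ref{sec:proofof_second_term} to get the law of large numbers for the normalizing matrix, exactly as you outline. Your explicit caveat that the limiting covariance $C$ must be nonsingular (otherwise the limit is $\chi^2_{\operatorname{rank} C}$ with a pseudoinverse) is a point the paper's statement leaves implicit and is worth retaining.
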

The proof of Theorem~\ref{thm:multi_kernel_mMMD} follows from the Cramér–Wold device and the Martingale Central Limit Theorem \citep[Theorem 3.2]{hall2014martingale}, along with the arguments in Sections~\ref{sec:proof_of_4thmmntbdd} and~\ref{sec:proofof_second_term}, and is therefore omitted. Theorem~\ref{thm:multi_kernel_mMMD} enables a test of $\bH_0$ by comparing the statistic to the appropriate quantile of the $\chi^2_r$ distribution. Importantly, this avoids the need for resampling or permutation, making the test simple to implement. For brevity, Theorem~\ref{thm:multi_kernel_mMMD} is stated under specific assumptions. A more general version, analogous to Theorem~\ref{thm:null_clt_general}, is possible but involves significantly more notation which are omitted.

We empirically assess the performance of the Multi-kernel mMMD (hereafter referred to as \texttt{mmMMD}) in Figure~\ref{fig:mmmmd_power_comparison}. In our experimental setup, we consider two distributions: $P = \mathcal{N}_{10}(\mathbf{0}, \boldsymbol{\Sigma})$ and $Q = \mathcal{N}_{10}(\mathbf{0}, 1.3\boldsymbol{\Sigma})$, where $\boldsymbol{\Sigma}$ denotes a covariance matrix with entries defined by $\Sigma_{ij} = 0.5^{|i-j|}$. This setting is adapted from \cite{chatterjee2025boosting}.

To evaluate statistical power, we consider three kernel configurations. For \texttt{mmMMD Gauss}, we use a combination of three Gaussian kernels with bandwidths $(1, 2, 4)\lambda_{\text{med}}$, where $\lambda_{\text{med}}$ is the median heuristic bandwidth. Similarly, for \texttt{mmMMD Laplace}, we employ three Laplace kernels with the same set of bandwidths. The third configuration, \texttt{mmMMD Mixed}, combines both Gaussian and Laplace kernels, each with the median bandwidth $\lambda_{\text{med}}$. For estimation of the empirical null distribution, we adopt the same Gaussian kernel configuration as used in \texttt{mmMMD Gauss}.

\begin{figure}[!t]
  \centering

  \begin{minipage}[b]{0.45\textwidth}
    \centering
    \includegraphics[width=1\linewidth]{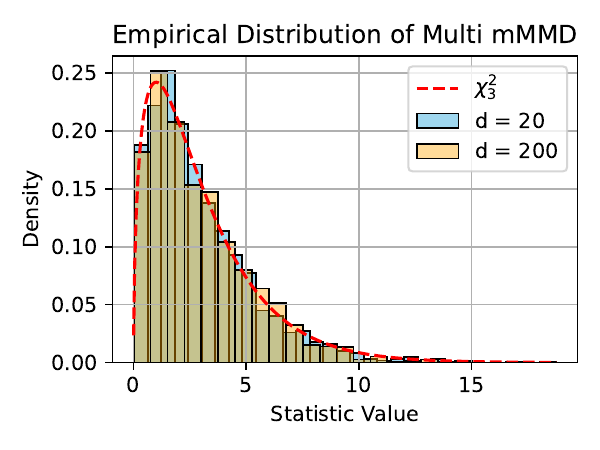}
  \end{minipage}
  \hfill
  \begin{minipage}[b]{0.45\textwidth}
    \centering
    \includegraphics[width=0.95\linewidth]{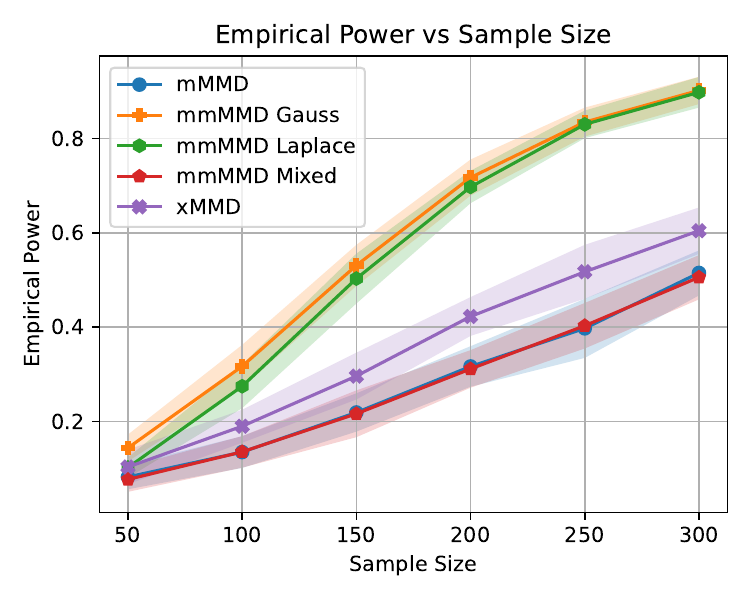}
  \end{minipage}

  \caption{(Left) Empirical null distribution of \texttt{mmMMD} and (Right) Empirical power of \texttt{mmMMD}.}
  \label{fig:mmmmd_power_comparison}
\end{figure}

Figure~\ref{fig:mmmmd_power_comparison} validates the empirical null distribution derived in Theorem~\ref{thm:multi_kernel_mMMD}, and further suggests that a dimension-agnostic result similar to Theorem~\ref{thm:null_clt_general} may also hold for \texttt{mmMMD}. Moreover, the power comparison indicates that combining multiple kernels can lead to significant improvements in test power. However, it also suggests that arbitrary kernel combinations do not necessarily yield such improvements, as evidenced by the performance of \texttt{mmMMD Mixed} in the power plot. This observation highlights the need for a more principled strategy for selecting and combining kernels, pointing to an interesting future research direction in the study of optimal kernel combinations.

\subsection{Generalizing the $m\MMD$ Statistic}\label{sec:general_mMMD}

In this section, we briefly introduce a generalization of the $m\MMD$ statistic, defining a family of statistics that interpolate between the original quadratic-time MMD statistic from \cite{gretton2012kernel} and our proposed linear-time $m\MMD$. Recalling the alternate form of $T_n$ from \eqref{eq:Tn_alt}, we define the generalized statistic as
\begin{align*}
T_{n,\gamma} = \frac{1}{n^{2-\gamma}}\sum_{i=2}^{n}\frac{1}{i^{\gamma}}\sum_{j=1}^{i-1}\sfH_n(\bZ_i,\bZ_j), \quad \text{for } 0 \leq \gamma \leq 1,
\end{align*}
where $\sfH_n$ is defined in \eqref{eq:def_Hn_main}. This family recovers the quadratic-time statistic from \cite{gretton2012kernel} when $\gamma = 0$ and the $m\MMD$ statistic when $\gamma = 1$, thus forming a continuous interpolation between the two.\\

\noindent
We study this family in detail in Section~\ref{sec:Tngamma}, where we establish the asymptotic null distribution under $\bH_0$ as standard Gaussian (Theorem~\ref{thm:berry_essen_H0}) and derive conditions for consistency under $\bH_1$ (Theorem~\ref{thm:general_consistency}). These results generalize the corresponding theorems for the special case $\gamma = 1$ (the $m\MMD$-test). Moreover, we show that a broad subclass of this family is minimax optimal for detecting distributional differences when the $L_2$ distance between the densities vanishes at the minimax rate \citep{li2024optimality}.

\begin{remark}\label{remark:non_minimax}
    Although $T_{n,\gamma}$ defines a general family of statistics encompassing the $m\MMD$ test, a key limitation for $\gamma < 1$ is that our proof technique does not establish asymptotic Gaussianity under the null when the kernel is fixed (i.e., does not vary with $n$), unlike the $\gamma=1$ case. In particular, for $\gamma = 0$, corresponding to the quadratic-time $\MMD$ test, the null distribution is known to be non-Gaussian in this setting. Similar behavior might hold for $\gamma < 1$ cases. On the other hand, in Section~\ref{sec:minimax}, we show that $T_{n,\gamma}$ yields a minimax rate-optimal test (similar to \cite{li2024optimality}) whenever $\gamma \leq 1/2$. However, this result does not apply to the $m\MMD$ test. In fact, our current analysis does not establish that the $m\MMD$ test is consistent when the $L_2$ distance between the underlying densities decays at the minimax rate. The current proof technique appears to be inadequate for verifying the validity of the $m\MMD$ test with respect to Type-I error control. Nonetheless, it can be shown that, up to logarithmic factors, the consistency part of the proof still holds. Whether or not $\gamma > 1/2$ is minimax optimal remains an open problem. Nonetheless, given the close connection to the cross-MMD test from \citet{shekhar2022permutation}, which is provably minimax optimal, and the empirically similar performance of the $m\MMD$ test, a promising future direction is to characterize the class of alternatives for which the $m\MMD$ test is rate-optimal. 
\end{remark}

\section{Conclusion and Future Work}
In this work, we proposed a novel variant of the kernel MMD test, called the $m$MMD test, by identifying a modified form of the squared MMD estimator as a martingale. We showed that this formulation admits a standard Gaussian distribution under the null, allowing for a resampling-free, computationally efficient, and consistent kernel test for the two-sample problem. Experiments demonstrate that our test achieves power close to the quadratic-time MMD test and often matches or surpasses the performance of other computationally efficient alternatives.

The established equivalence between kernel-based and distance-based tests \citep{sejdinovic2013equivalence} suggests that our method extends naturally to distance-based settings. Furthermore, building on the connection between kernel-based two-sample and independence testing \citep{gretton2007kernel}, an interesting future direction is to adapt our approach for computationally efficient independence testing. Additional computational gains might also be achieved by integrating our method with recent advances in kernel approximation techniques \citep{choi2024computational, chatalic2025efficient, domingo2023compress}.

\acks{We thank Ilmun Kim for some useful technical discussions.}

\bibliography{references}

@article{smirnov1948table,
  title={Table for estimating the goodness of fit of empirical distributions},
  author={Smirnov, Nickolay},
  journal={The annals of mathematical statistics},
  volume={19},
  number={2},
  pages={279--281},
  year={1948},
  publisher={Institute of Mathematical Statistics}
}

@article{wald1940test,
  title={On a test whether two samples are from the same population},
  author={Wald, Abraham and Wolfowitz, Jacob},
  journal={The Annals of Mathematical Statistics},
  volume={11},
  number={2},
  pages={147--162},
  year={1940},
  publisher={JSTOR}
}

@article{mann1947test,
  title={On a test of whether one of two random variables is stochastically larger than the other},
  author={Mann, Henry B and Whitney, Donald R},
  journal={The annals of mathematical statistics},
  pages={50--60},
  year={1947},
  publisher={JSTOR}
}

@article{wilcoxon1945individual,
  title={Individual comparisons by ranking methods},
  author={Wilcoxon, Frank},
  journal={Biometrics bulletin},
  volume={1},
  number={6},
  pages={80--83},
  year={1945},
  publisher={JSTOR}
}

@article{anderson1962distribution,
  title={On the distribution of the two-sample {Cramer-von Mises} criterion},
  author={Anderson, Theodore W},
  journal={The Annals of Mathematical Statistics},
  pages={1148--1159},
  year={1962},
  publisher={JSTOR}
}

@article{weiss1960two,
  title={Two-sample tests for multivariate distributions},
  author={Weiss, Lionel},
  journal={The Annals of Mathematical Statistics},
  volume={31},
  number={1},
  pages={159--164},
  year={1960},
  publisher={Institute of Mathematical Statistics}
}

@article{bickel1969distribution,
  title={A distribution free version of the {S}mirnov two sample test in the p-variate case},
  author={Bickel, Peter J},
  journal={The Annals of Mathematical Statistics},
  volume={40},
  number={1},
  pages={1--23},
  year={1969},
  publisher={JSTOR}
}

@article{bhattacharya2019general,
  title={A general asymptotic framework for distribution-free graph-based two-sample tests},
  author={Bhattacharya, Bhaswar B},
  journal={Journal of the Royal Statistical Society Series B: Statistical Methodology},
  volume={81},
  number={3},
  pages={575--602},
  year={2019},
  publisher={Oxford University Press}
}

@article{friedman1979multivariate,
  title={Multivariate generalizations of the {Wald-Wolfowitz and Smirnov} two-sample tests},
  author={Friedman, Jerome H and Rafsky, Lawrence C},
  journal={The Annals of statistics},
  pages={697--717},
  year={1979},
  publisher={JSTOR}
}

@article{rosenbaum2005exact,
  title={An exact distribution-free test comparing two multivariate distributions based on adjacency},
  author={Rosenbaum, Paul R},
  journal={Journal of the Royal Statistical Society Series B: Statistical Methodology},
  volume={67},
  number={4},
  pages={515--530},
  year={2005},
  publisher={Oxford University Press}
}

@article{szekely2013energy,
  title={Energy statistics: A class of statistics based on distances},
  author={Sz{\'e}kely, G{\'a}bor J and Rizzo, Maria L},
  journal={Journal of statistical planning and inference},
  volume={143},
  number={8},
  pages={1249--1272},
  year={2013},
  publisher={Elsevier}
}

@article{szekely2003statistics,
  title={E-statistics: The energy of statistical samples},
  author={Sz{\'e}kely, G{\'a}bor J},
  journal={Bowling Green State University, Department of Mathematics and Statistics Technical Report},
  volume={3},
  number={05},
  pages={1--18},
  year={2003}
}

@article{gretton2012kernel,
  title={A kernel two-sample test},
  author={Gretton, Arthur and Borgwardt, Karsten M and Rasch, Malte J and Sch{\"o}lkopf, Bernhard and Smola, Alexander},
  journal={The journal of machine learning research},
  volume={13},
  number={1},
  pages={723--773},
  year={2012},
  publisher={JMLR. org}
}

@article{gretton2009fast,
  title={A fast, consistent kernel two-sample test},
  author={Gretton, Arthur and Fukumizu, Kenji and Harchaoui, Zaid and Sriperumbudur, Bharath K},
  journal={Advances in neural information processing systems},
  volume={22},
  year={2009}
}

@inproceedings{ramdas2015decreasing,
  title={On the decreasing power of kernel and distance based nonparametric hypothesis tests in high dimensions},
  author={Ramdas, Aaditya and Reddi, Sashank Jakkam and P{\'o}czos, Barnab{\'a}s and Singh, Aarti and Wasserman, Larry},
  booktitle={Proceedings of the AAAI Conference on Artificial Intelligence},
  volume={29},
  number={1},
  year={2015}
}

@article{shekhar2022permutation,
  title={A permutation-free kernel two-sample test},
  author={Shekhar, Shubhanshu and Kim, Ilmun and Ramdas, Aaditya},
  journal={Advances in Neural Information Processing Systems},
  volume={35},
  pages={18168--18180},
  year={2022}
}

@article{chatterjee2025boosting,
  title={Boosting the power of kernel two-sample tests},
  author={Chatterjee, Anirban and Bhattacharya, Bhaswar B},
  journal={Biometrika},
  volume={112},
  number={1},
  pages={asae048},
  year={2025},
  publisher={Oxford University Press}
}

@article{pan2018ball,
  title={Ball divergence: nonparametric two sample test},
  author={Pan, Wenliang and Tian, Yuan and Wang, Xueqin and Zhang, Heping},
  journal={Annals of statistics},
  volume={46},
  number={3},
  pages={1109},
  year={2018}
}

@article{banerjee2025high,
  title={HIGH DIMENSIONAL BEHAVIOUR OF SOME TWO-SAMPLE TESTS BASED ON BALL DIVERGENCE},
  author={Banerjee, Bilol and Ghosh, Anil K},
  journal={Statistica Sinica},
  volume={35},
  pages={1--22},
  year={2025}
}

@article{kim2021classification,
  title={Classification accuracy as a proxy for two-sample testing},
  author={Kim, Ilmun and Ramdas, Aaditya and Singh, Aarti and Wasserman, Larry},
  journal={Annals of Statistics},
  volume={49},
  number={1},
  pages={411--434},
  year={2021},
  publisher={Institute of Mathematical Statistics}
}

@inproceedings{lopez2017revisiting,
  title={Revisiting Classifier Two-Sample Tests},
  author={Lopez-Paz, David and Oquab, Maxime},
  booktitle={International Conference on Learning Representations},
  year={2017}
}

@article{jitkrittum2017linear,
  title={A linear-time kernel goodness-of-fit test},
  author={Jitkrittum, Wittawat and Xu, Wenkai and Szab{\'o}, Zolt{\'a}n and Fukumizu, Kenji and Gretton, Arthur},
  journal={Advances in neural information processing systems},
  volume={30},
  year={2017}
}

@article{zaremba2013b,
  title={B-test: A non-parametric, low variance kernel two-sample test},
  author={Zaremba, Wojciech and Gretton, Arthur and Blaschko, Matthew},
  journal={Advances in Neural Information Processing Systems},
  volume={26},
  year={2013}
}

@book{berlinet2011reproducing,
  title={Reproducing kernel Hilbert spaces in probability and statistics},
  author={Berlinet, Alain and Thomas-Agnan, Christine},
  year={2011},
  publisher={Springer Science \& Business Media}
}

@article{li2024optimality,
  title={On the optimality of {G}aussian kernel based nonparametric tests against smooth alternatives},
  author={Li, Tong and Yuan, Ming},
  journal={Journal of Machine Learning Research},
  volume={25},
  number={334},
  pages={1--62},
  year={2024}
}

@article{fan2018berry,
  title={Berry--{E}sseen bounds for self-normalized martingales},
  author={Fan, Xiequan and Shao, Qi-Man},
  journal={Communications in Mathematics and Statistics},
  volume={6},
  number={1},
  pages={13--27},
  year={2018},
  publisher={Springer}
}

@article{apostol1999elementary,
  title={An elementary view of {E}uler's summation formula},
  author={Apostol, Tom M},
  journal={The American Mathematical Monthly},
  volume={106},
  number={5},
  pages={409--418},
  year={1999},
  publisher={Taylor \& Francis}
}

@inproceedings{kubler2022witness,
  title={A witness two-sample test},
  author={K{\"u}bler, Jonas M and Jitkrittum, Wittawat and Sch{\"o}lkopf, Bernhard and Muandet, Krikamol},
  booktitle={International Conference on Artificial Intelligence and Statistics},
  pages={1403--1419},
  year={2022},
  organization={PMLR}
}

@article{park2020measure,
  title={A measure-theoretic approach to kernel conditional mean embeddings},
  author={Park, Junhyung and Muandet, Krikamol},
  journal={Advances in neural information processing systems},
  volume={33},
  pages={21247--21259},
  year={2020}
}

@book{hall2014martingale,
  title={Martingale limit theory and its application},
  author={Hall, Peter and Heyde, Christopher C},
  year={2014},
  publisher={Academic press}
}

@article{bentkus1996berry,
  title={The {Berry-Esseen bound for Student's} statistic},
  author={Bentkus, Vidmantas and G{\"o}tze, Friedrich},
  journal={The Annals of Probability},
  volume={24},
  number={1},
  pages={491--503},
  year={1996},
  publisher={Institute of Mathematical Statistics}
}

@article{schrab2023mmd,
  title={{MMD} aggregated two-sample test},
  author={Schrab, Antonin and Kim, Ilmun and Albert, M{\'e}lisande and Laurent, B{\'e}atrice and Guedj, Benjamin and Gretton, Arthur},
  journal={Journal of Machine Learning Research},
  volume={24},
  number={194},
  pages={1--81},
  year={2023}
}

@article{schrab2022efficient,
  title={Efficient aggregated kernel tests using incomplete {$U$}-statistics},
  author={Schrab, Antonin and Kim, Ilmun and Guedj, Benjamin and Gretton, Arthur},
  journal={Advances in Neural Information Processing Systems},
  volume={35},
  pages={18793--18807},
  year={2022}
}

@article{biggs2023mmd,
  title={{MMD-FUSE}: Learning and combining kernels for two-sample testing without data splitting},
  author={Biggs, Felix and Schrab, Antonin and Gretton, Arthur},
  journal={Advances in Neural Information Processing Systems},
  volume={36},
  pages={75151--75188},
  year={2023}
}

@article{gretton2012optimal,
  title={Optimal kernel choice for large-scale two-sample tests},
  author={Gretton, Arthur and Sejdinovic, Dino and Strathmann, Heiko and Balakrishnan, Sivaraman and Pontil, Massimiliano and Fukumizu, Kenji and Sriperumbudur, Bharath K},
  journal={Advances in neural information processing systems},
  volume={25},
  year={2012}
}

@inproceedings{liu2020learning,
  title={Learning deep kernels for non-parametric two-sample tests},
  author={Liu, Feng and Xu, Wenkai and Lu, Jie and Zhang, Guangquan and Gretton, Arthur and Sutherland, Danica J},
  booktitle={International conference on machine learning},
  pages={6316--6326},
  year={2020},
  organization={PMLR}
}

@article{lecun1998mnist,
  title={The {MNIST} database of handwritten digits},
  author={LeCun, Yann},
  journal={http://yann. lecun. com/exdb/mnist/},
  year={1998}
}

@article{shekhar2023nonparametric,
  title={Nonparametric two-sample testing by betting},
  author={Shekhar, Shubhanshu and Ramdas, Aaditya},
  journal={IEEE Transactions on Information Theory},
  volume={70},
  number={2},
  pages={1178--1203},
  year={2023},
  publisher={IEEE}
}

@article{kim2021comparing,
  title={Comparing a large number of multivariate distributions},
  author={Kim, Ilmun},
  journal={Bernoulli},
  volume={27},
  number={1},
  pages={419--441},
  year={2021}
}

@article{gretton2006kernel,
  title={A kernel method for the two-sample-problem},
  author={Gretton, Arthur and Borgwardt, Karsten and Rasch, Malte and Sch{\"o}lkopf, Bernhard and Smola, Alex},
  journal={Advances in neural information processing systems},
  volume={19},
  year={2006}
}

@article{ramdas2015adaptivity,
  title={Adaptivity and computation-statistics tradeoffs for kernel and distance based high dimensional two sample testing},
  author={Ramdas, Aaditya and Reddi, Sashank J and Poczos, Barnabas and Singh, Aarti and Wasserman, Larry},
  journal={arXiv preprint arXiv:1508.00655},
  year={2015}
}

@inproceedings{reddi2015high,
  title={On the high dimensional power of a linear-time two sample test under mean-shift alternatives},
  author={Reddi, Sashank and Ramdas, Aaditya and P{\'o}czos, Barnab{\'a}s and Singh, Aarti and Wasserman, Larry},
  booktitle={Artificial Intelligence and Statistics},
  pages={772--780},
  year={2015},
  organization={PMLR}
}

@article{zhao2015fastmmd,
  title={Fast{MMD}: Ensemble of circular discrepancy for efficient two-sample test},
  author={Zhao, Ji and Meng, Deyu},
  journal={Neural computation},
  volume={27},
  number={6},
  pages={1345--1372},
  year={2015},
  publisher={MIT Press}
}

@article{choi2024computational,
  title={Computational-statistical trade-off in kernel two-sample testing with random fourier features},
  author={Choi, Ikjun and Kim, Ilmun},
  journal={arXiv preprint arXiv:2407.08976},
  year={2024}
}

@inproceedings{zhao2022comparing,
  title={Comparing distributions by measuring differences that affect decision making},
  author={Zhao, Shengjia and Sinha, Abhishek and He, Yutong and Perreault, Aidan and Song, Jiaming and Ermon, Stefano},
  booktitle={International Conference on Learning Representations},
  year={2022}
}

@inproceedings{domingo2023compress,
  title={Compress Then Test: Powerful Kernel Testing in Near-linear Time},
  author={Domingo-Enrich, Carles and Dwivedi, Raaz and Mackey, Lester},
  booktitle={International Conference on Artificial Intelligence and Statistics},
  pages={1174--1218},
  year={2023},
  organization={PMLR}
}

@article{chatalic2025efficient,
  title={An Efficient Permutation-Based Kernel Two-Sample Test},
  author={Chatalic, Antoine and Letizia, Marco and Schreuder, Nicolas and Rosasco, Lorenzo},
  journal={arXiv preprint arXiv:2502.13570},
  year={2025}
}

@article{mukherjee2025minimax,
  title={Minimax Optimal Kernel Two-Sample Tests with Random Features},
  author={Mukherjee, Soumya and Sriperumbudur, Bharath K},
  journal={arXiv preprint arXiv:2502.20755},
  year={2025}
}

@article{sejdinovic2013equivalence,
  title={Equivalence of distance-based and {RKHS}-based statistics in hypothesis testing},
  author={Sejdinovic, Dino and Sriperumbudur, Bharath and Gretton, Arthur and Fukumizu, Kenji},
  journal={The annals of statistics},
  pages={2263--2291},
  year={2013},
  publisher={JSTOR}
}

@article{gretton2007kernel,
  title={A kernel statistical test of independence},
  author={Gretton, Arthur and Fukumizu, Kenji and Teo, Choon and Song, Le and Sch{\"o}lkopf, Bernhard and Smola, Alex},
  journal={Advances in neural information processing systems},
  volume={20},
  year={2007}
}

@inproceedings{balsubramani2016sequential,
  title={Sequential nonparametric testing with the law of the iterated logarithm},
  author={Balsubramani, Akshay and Ramdas, Aaditya},
  booktitle={Proceedings of the Thirty-Second Conference on Uncertainty in Artificial Intelligence},
  pages={42--51},
  year={2016}
}

@article{wolfer2025variance,
  title={Variance-aware estimation of kernel mean embedding},
  author={Wolfer, Geoffrey and Alquier, Pierre},
  journal={Journal of Machine Learning Research},
  volume={26},
  number={57},
  pages={1--48},
  year={2025}
}

\newpage

\appendix



\part*{Supplementary Materials} 

\begingroup
\setcounter{tocdepth}{2} 
\tableofcontents
\endgroup


\section{A general class of Martingale based MMD two-sample test}\label{sec:Tngamma}
In this section, we introduce a general class of statistics using the squared Maximum Mean Discrepancy (MMD) (recall \eqref{eq:def_MMD}). The \emph{mMMD} estimator defined in \eqref{eq:Tn_alt} is a special case within this broader family. Moreover, we show that the original quadratic-time MMD estimator proposed by \citet{gretton2012kernel} also belongs to this class.\\

\noindent
In Section~\ref{sec:null_dist_gen}, we show that in the most general setting, where the kernel $\mathsf{K}_n$, the null distribution $P_n (= Q_n)$, and the sample space $\cX_n = \R^{d_n}$ may all vary with the sample size $n$, the entire family of test statistics converges in distribution to a standard Gaussian under the null. Then, in Section~\ref{sec:consistency_gen}, we provide general conditions under which this family of tests achieves uniform consistency for the two-sample testing problem (see Eq.~\eqref{eq:H01}). Finally, in Section~\ref{sec:minimax}, we focus on a specific subclass of tests based on the Gaussian kernel, and show that it can consistently detect differences between $P_n$ and $Q_n$ when they admit densities and the $L_2$ distance between those densities converges to zero at the minimax optimal rate.\\

\noindent
We begin by recalling and setting up the notation. Let $n\geq 1$ and take the sample space $\cX_n = \R^{d_n}$ (we allow the dimension to vary with $n$). Suppose we observe independent samples $\sX_n := \{X_1,\ldots, X_n\}\sim P_n$ and $\sY_n := \{Y_1,\ldots, Y_n\}\sim Q_n$. Take $\sfK_n$ to be a positive definite kernel with associated RKHS $\cK_n$ satisfying Assumption~\ref{assumption:K} and moreover suppose $P_n,Q_n\in \cM_{\sfK_n}^{1/2}$. For $\bZ_i = (X_i, Y_i), 1\leq i\leq n$ define,
\begin{align}\label{eq:def_H_n}
  \sfH_n\left(\bZ_i, \bZ_j\right) := \sfK_n(X_i, X_j) - \sfK_n(X_i, Y_j) - \sfK_n(X_j, Y_i) + \sfK_n(Y_i, Y_j).
\end{align}
With the notation in place we now define the family of estimators,
\begin{align}\label{eq:def_T_n_gamma}
  T_{n,\gamma} := \frac{1}{n^{2-\gamma}}\sum_{i=2}^{n}\frac{1}{i^\gamma}\sum_{j=1}^{i-1}\sfH_n\left(\bZ_i,\bZ_j\right),\text{ for all }0\leq \gamma\leq 1.
\end{align}

\begin{remark}
Recall that the standard quadratic-time estimator of $\mathrm{MMD}^2[\sfK_n, P_n, Q_n]$, introduced by \citet{gretton2012kernel}, is given by
$
\frac{1}{n^2} \sum_{i \neq j} \mathsf{H}_n(\mathbf{Z}_i, \mathbf{Z}_j).
$
By definition, and using the symmetry of the kernel function $\mathsf{H}_n$, this expression is equivalent to $2T_{n,0}$. Hence, the classical quadratic-time MMD estimator is a special case of our proposed family.
Moreover, the $m\mathrm{MMD}$ statistic introduced in \eqref{eq:def_eta_n} (recall the equivalent form from \eqref{eq:Tn_alt}) can be recovered by setting $\gamma = 1$ in our framework. Specifically, observe that $T_{n,1}$ corresponds exactly to the $m\mathrm{MMD}$ statistic.
\end{remark}

\subsection{Asymptotic Null Distribution of $T_{n,\gamma}$}\label{sec:null_dist_gen}

In this section, we establish the asymptotic null distribution of $T_{n,\gamma}$, that is, its distribution under the null hypothesis $\bH_0$ where $P_n = Q_n$. To this end, we first define the centered kernel $\bar\sfK_n$ as,
\begin{align}\label{eq:defbarK}
  \bar\sfK_n(x,y) := \left\langle\sfK_n(x,\cdot) - \nu_{P_n}, \sfK_n(y,\cdot) - \nu_{P_n}\right\rangle_{\cK_n}.
\end{align}

\noindent
Next, we define the terms,
    \begin{align}\label{eq:tnalpha}
        t_{n,\gamma,1} = 
        \begin{cases}
        n^{-2} & \text{ if }0\leq \gamma<\frac{1}{2}\\
        \frac{\log n}{n^2} & \text{ if }\gamma = \frac{1}{2}\\
        n^{-4(1-\gamma)} & \text{ if }\frac{1}{2}<\gamma< 1\\
        \frac{1}{(\log n)^2} & \text{ if }\gamma = 1,
        \end{cases}
        \text{ and }
        t_{n,\gamma,2} = 
        \begin{cases}
        n^{-1} & \text{ if }0\leq \gamma<\frac{3}{4}\\
        \frac{\log n}{n} & \text{ if }\gamma = \frac{3}{4}\\
        n^{-4(1-\gamma)} & \text{ if }\frac{3}{4}<\gamma< 1\\
        \frac{1}{(\log n)^2} & \text{ if }\gamma = 1.
        \end{cases}
    \end{align}

\noindent
With the notation in place we now establish a Berry-Essen type distributional convergence results for the family $T_{n,\gamma}$ when $P_n = Q_n$.

\begin{theorem}\label{thm:berry_essen_H0}
  Generate $X_1,X_2,X_3$ independently from $P_n$ and suppose $\E\left[\bsfK_n(X_1,X_2)^4\right]<\infty$. Then define, 
    \begin{align}\label{eq:def_Rn1}
        R_{n,\gamma,1} = t_{n,\gamma,1}\frac{\E\left[\bsfK_n(X_1, X_2)^4\right]}{\E\left[\bsfK_n(X_1, X_2)^2\right]^2} + t_{n,\gamma,2}\left(\frac{\E\left[\bsfK_n(X_1, X_2)^2\bsfK_n(X_1, X_3)^2\right]}{\E\left[\bsfK_n(X_1, X_2)^2\right]^2} + 1\right) ,
    \end{align}
    where $t_{n,\gamma,1}$ and $t_{n,\gamma,2}$ are defined in \eqref{eq:tnalpha} and also let,
    \begin{align}\label{eq:def_Rn2}
        R_{n,\gamma,2} = 
        \begin{cases}
        \frac{\E\left[\E\left[\bsfK(X_1, X_2)\bsfK(X_1,X_3)\mid X_2, X_3\right]^2\right]}{\E\left[\bsfK_n(X_1, X_2)^2\right]^2} & \text{ if }0\leq \gamma <1\\
        \frac{1}{\log n}& \text{ if }\gamma = 1.
        \end{cases}
    \end{align}
    Then under $\bm H_0$,
    \begin{align*}
        \sup_{x\in \R}\left|\P\left(n^{2-\gamma}\frac{T_{n,\gamma}}{\sigma_{n,\gamma}}\leq x\right) - \Phi(x)\right|\lesssim (R_{n,\gamma,1} + R_{n,\gamma,2})^{1/5} ,
    \end{align*}
    where $\sigma_{n,\gamma}^2 = \sum_{i=2}^{n}\left(\frac{1}{i^\gamma}\sum_{j=1}^{i-1}\sfH_n(\bZ_i, \bZ_j)\right)^2$ and $\Phi(\cdot)$ is the CDF of $\mathrm{N}(0,1)$.
\end{theorem}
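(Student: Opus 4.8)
The plan is to realize $n^{2-\gamma}T_{n,\gamma}$ as a sum of a martingale difference sequence and then invoke a Berry--Esseen bound for self-normalized martingales. Concretely, write $D_i := \frac{1}{i^\gamma}\sum_{j=1}^{i-1}\sfH_n(\bZ_i,\bZ_j)$, so that $n^{2-\gamma}T_{n,\gamma} = \sum_{i=2}^n D_i$ and $\sigma_{n,\gamma}^2 = \sum_{i=2}^n D_i^2$. Under $\bH_0$ we have $P_n = Q_n$, hence $\E[\sfH_n(\bZ_i,\bZ_j)\mid\cF_{i-1}] = 0$ for $j<i$ (each of the four kernel terms has conditional mean $\nu_{P_n}$ paired against a centered quantity, and the cross/diagonal cancellation is exactly the point of $\sfH_n$), so $\{D_i\}$ is a martingale difference sequence with respect to $\cF_i = \sigma(\bZ_1,\ldots,\bZ_i)$. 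The target is then a self-normalized CLT: $\sum D_i / (\sum D_i^2)^{1/2} \dto \mathrm N(0,1)$, which is precisely the setting of the Berry--Esseen bound of Fan~\cite{fan2018berry} as referenced after Theorem~\ref{thm:null_clt_general}.

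First I would recall the precise hypotheses of the Fan bound: it controls $\sup_x|\P(\sum D_i/\sqrt{\sum D_i^2}\le x) - \Phi(x)|$ by a power (here the exponent $1/5$) of a quantity built from (i) a normalized Lyapunov-type ratio $\sum_i \E[|D_i|^{2+\delta}] / V_n^{1+\delta/2}$ where $V_n := \sum_i \E[D_i^2\mid\cF_{i-1}]$ is the predictable quadratic variation, and (ii) a concentration term measuring how close $\sum_i D_i^2$ (or $V_n$) is to its mean. So the bulk of the work is to show that $R_{n,\gamma,1}+R_{n,\gamma,2}$ dominates both pieces. For the denominator, I would first compute $\E[D_i^2] = i^{-2\gamma}\sum_{j<i}\E[\sfH_n(\bZ_i,\bZ_j)^2] = i^{-2\gamma}(i-1)\cdot 4\,\E[\bsfK_n(X_1,X_2)^2]$ using independence across $j$ under $\bH_0$ and the identity $\E[\sfH_n(\bZ_1,\bZ_2)^2] = 4\E[\bsfK_n(X_1,X_2)^2]$ (expand the square, use $P_n=Q_n$, and centering). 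Summing over $i$ gives $\E[\sigma_{n,\gamma}^2] \asymp s_{n,\gamma}\cdot\E[\bsfK_n(X_1,X_2)^2]$ for an explicit deterministic constant $s_{n,\gamma} = \sum_{i=2}^n i^{-2\gamma}(i-1)$, which behaves like $n^{2-2\gamma}$ for $\gamma<1$, like $\log n$ for $\gamma=1$, etc. The fourth-moment bound $\E[\bsfK_n(X_1,X_2)^4]<\infty$ then lets me bound $\sum_i\E[D_i^4]$: expanding $D_i^4$ produces "diagonal" terms $\sim i^{-4\gamma}\sum_j \E[\sfH_n^4]$ contributing the $t_{n,\gamma,1}\cdot\E[\bsfK^4]/\E[\bsfK^2]^2$ piece, and "pair-overlap" terms $\sim i^{-4\gamma}\sum_{j\ne k}\E[\sfH_n(\bZ_i,\bZ_j)^2\sfH_n(\bZ_i,\bZ_k)^2]$ contributing the term involving $\E[\bsfK_n(X_1,X_2)^2\bsfK_n(X_1,X_3)^2]$, after conditioning on $\bZ_i$. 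Dividing by $\E[\sigma_{n,\gamma}^2]^2 \asymp s_{n,\gamma}^2\E[\bsfK^2]^2$ reproduces the definitions of $t_{n,\gamma,1}$ and $t_{n,\gamma,2}$ (one checks case by case that $s_{n,\gamma}^{-2}\sum_i i^{-4\gamma}\cdot i \asymp t_{n,\gamma,2}$ and $s_{n,\gamma}^{-2}\sum_i i^{-4\gamma} \asymp t_{n,\gamma,1}$, including the boundary logs at $\gamma=1/2,3/4$ and the $\gamma=1$ replacement of $n^0$ by $(\log n)^{-2}$ coming from the $s_{n,1}\asymp\log n$ normalization). For the concentration piece, I would bound $\Var(\sigma_{n,\gamma}^2)$ and, separately, $\E[(V_n - \E[V_n])^2]$ or $\E|\sigma_{n,\gamma}^2 - V_n|$, again by expanding and tracking which kernel moments appear; the $\E[\E[\bsfK(X_1,X_2)\bsfK(X_1,X_3)\mid X_2,X_3]^2]$ appearing in $R_{n,\gamma,2}$ for $\gamma<1$ is exactly the covariance structure that governs fluctuations of $\sum_i D_i^2$ across different $i$ (the "row sums" share the index-$1$ variable), while the $\gamma=1$ case degrades to $(\log n)^{-1}$ because the normalizing factor is only logarithmic.

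The main obstacle I anticipate is the bookkeeping in the fourth-moment expansion of $D_i = i^{-\gamma}\sum_{j<i}\sfH_n(\bZ_i,\bZ_j)$: the summand $\sfH_n(\bZ_i,\bZ_j)$ is itself a degenerate two-sample kernel, so $\E[D_i^4]$ is a sum over quadruples $(j_1,j_2,j_3,j_4)$ with many coincidence patterns, and one must carefully verify — using $P_n = Q_n$ and the RKHS representation $\sfK_n(x,\cdot) = \phi_x$ — that the genuinely "off-diagonal" patterns (all four indices distinct, or three distinct) vanish or are lower order, leaving only the $j_1=j_2\ne j_3=j_4$-type and all-equal patterns that give $t_{n,\gamma,2}$ and $t_{n,\gamma,1}$ respectively. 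A secondary subtlety is that Fan's theorem is typically stated with a clean normalized-variation hypothesis, so I need to insert an intermediate step comparing the \emph{empirical} normalizer $\sigma_{n,\gamma}^2 = \sum_i D_i^2$ to the \emph{predictable} one $V_n$ and both to $\E[\sigma_{n,\gamma}^2]$; this comparison is where $R_{n,\gamma,2}$ enters and must be shown to be $o(1)$ (or at least the stated bound) — but note the theorem only asserts the bound $\lesssim (R_{n,\gamma,1}+R_{n,\gamma,2})^{1/5}$, so I do not need $o(1)$, merely the inequality, which is cleaner. Finally, for $\gamma=1$ the simple power-counting breaks (sums of $i^{-2}(i-1)$ converge), which is why all the $\gamma=1$ entries in \eqref{eq:tnalpha} and \eqref{eq:def_Rn2} are replaced by logarithmic expressions; I would handle $\gamma=1$ as a separate case throughout, matching the already-proved Theorem~\ref{thm:null_clt_general}, and present $0\le\gamma<1$ uniformly with the boundary values $\gamma\in\{1/2,3/4\}$ flagged where the harmonic-type sums pick up a $\log n$.
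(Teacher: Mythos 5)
Your proposal follows essentially the same route as the paper's proof: realize $D_i=i^{-\gamma}\sum_{j<i}\sfH_n(\bZ_i,\bZ_j)$ as a martingale difference sequence, apply the self-normalized martingale Berry--Esseen bound of \cite{fan2018berry} with exponent $1/5$, bound the fourth-moment (Lyapunov) term to produce the $t_{n,\gamma,1}$ and $t_{n,\gamma,2}$ pieces (the paper's Lemma~\ref{lemma:4thmmntbdd}), and control the fluctuations of the quadratic variation, where the cross-row quantity $\E\left[\E\left[\bsfK(X_1,X_2)\bsfK(X_1,X_3)\mid X_2,X_3\right]^2\right]$ enters exactly as you anticipate (the paper's Lemma~\ref{lemma:2nd_term_bdd} via the identity \eqref{eq:H4Kbar4}), with $\gamma=1$ handled separately through the logarithmic normalization. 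The structure and all key quantities match the paper's argument.
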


We note that the terms $R_{n,\gamma,1}$ and $R_{n,\gamma,2}$ are well-defined, as established in Lemma~\ref{lemma:E_barK_sq_positive}. The proof of Theorem~\ref{thm:berry_essen_H0} is presented in Section~\ref{sec:proofof_berry_essen}. The key idea is to recognize $T_{n,\gamma}$ as a sum of martingale differences and then apply the Berry–Esseen bounds for self-normalized martingales from \cite{fan2018berry}. To simplify the above result and to obtain conditions under which the asymptotic null distribution of $T_{n,\gamma}$ is standard Gaussian, we introduce the following assumptions.
\begin{assumption}\label{assumption:kernelconvg}
    For the centered kernel $\bar\sfK_n$ and the sequence of distributions $\{P_n: n\geq 1\}$, we assume that
    \begin{align*}
        \lim_{n\ra\infty}\frac{\E\left[\E_{P_n}\left[\bsfK_n(X_1, X_2)\bsfK_n(X_1, X_3)\mid X_2,X_3\right]^2\right]}{\E\left[\bsfK_n(X_1, X_2)^2\right]^2} = 0,
    \end{align*}
    where $X_1, X_2, X_3$ are generated independently from $P_n$.
\end{assumption}

\begin{remark}
    The condition in Assumption~\ref{assumption:kernelconvg} indeed holds true with $L = 0$ when $\sfK_n(x,y) = \exp\left(-\nu_n\left\|x-y\right\|_2^2\right)$ is the Gaussian kernel with bandwidth $\nu_n\rightarrow\infty$. This follows from condition $(9)$ of \cite{li2024optimality}. When the kernel $\sfK_n$ and distribution $P_n$ do not change with $n$, then it is immediately clear that the limit $L$ exists but it might not be $0$.
\end{remark}

\begin{assumption}\label{assumption:4thand22moment}  For a kernel $\sfK_n$ and the sequence of distributions $\{P_n: n\geq 1\}$, assume that there exists some $\gamma \in [0,1]$ such that
    \begin{align*}
        \lim_{n\rightarrow \infty}t_{n,\gamma,1}\frac{\E\left[\bsfK_n(X_1, X_2)^4\right]}{\E\left[\bsfK_n(X_1, X_2)^2\right]^2} = 0\text{ and }\lim_{n\rightarrow \infty}t_{n,\gamma,2}\frac{\E\left[\bsfK_n(X_1, X_2)^2\bsfK_n(X_1, X_3)^2\right]}{\E\left[\bsfK_n(X_1, X_2)^2\right]^2} = 0 ,
    \end{align*}
    where $X_1,X_2,X_3$ are generated independently from $P_n$.
\end{assumption}
One can easily verify that this condition is satisfied whenever the kernel $\mathsf{K} = \mathsf{K}_n$ and the null distribution $P = P_n$ remain fixed as $n$ increases. Under the assumptions stated above, we now obtain the following corollary to Theorem~\ref{thm:berry_essen_H0}, which establishes the convergence of the null distribution of $T_{n,\gamma}$ to the standard Gaussian.

\begin{corollary}\label{cor:NullCLT}
  Let $\E\left[\bsfK_n(X_1, X_2)^4\right]<\infty$ and consider some $\gamma\in [0,1]$ such that Assumption~\ref{assumption:4thand22moment} holds. Furthermore, adopt Assumption~\ref{assumption:kernelconvg} if $\gamma\in [0,1)$. Then under $\bm H_0$,
    \begin{align*}
        \sup_{x\in \R}\left|\P\left(n^{2-\gamma}\frac{T_{n,\gamma}}{\sigma_{n,\gamma}}\leq x\right) - \Phi(x)\right|\ra 0.
    \end{align*}
\end{corollary}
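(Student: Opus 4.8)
\textbf{Proof proposal for Corollary~\ref{cor:NullCLT}.}

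The plan is to deduce the corollary directly from Theorem~\ref{thm:berry_essen_H0} by showing that, under the stated assumptions, the right-hand side bound $(R_{n,\gamma,1} + R_{n,\gamma,2})^{1/5}$ converges to zero. Since the $x$-th power is continuous and vanishes at zero for $x = 1/5 > 0$, it suffices to show $R_{n,\gamma,1} \to 0$ and $R_{n,\gamma,2} \to 0$ separately.

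First I would handle $R_{n,\gamma,1}$, which by \eqref{eq:def_Rn1} is a sum of three pieces: $t_{n,\gamma,1}\frac{\E[\bsfK_n(X_1,X_2)^4]}{\E[\bsfK_n(X_1,X_2)^2]^2}$, $t_{n,\gamma,2}\frac{\E[\bsfK_n(X_1,X_2)^2\bsfK_n(X_1,X_3)^2]}{\E[\bsfK_n(X_1,X_2)^2]^2}$, and $t_{n,\gamma,2} \cdot 1$. The first two terms vanish precisely by the two limits in Assumption~\ref{assumption:4thand22moment}. For the third term $t_{n,\gamma,2}$, one checks from the definition in \eqref{eq:tnalpha} that $t_{n,\gamma,2} \to 0$ as $n \to \infty$ for every $\gamma \in [0,1]$: indeed $t_{n,\gamma,2}$ equals $n^{-1}$, $\frac{\log n}{n}$, $n^{-4(1-\gamma)}$, or $(\log n)^{-2}$ according to the value of $\gamma$, and each of these tends to zero (note $4(1-\gamma) > 0$ when $\gamma < 1$). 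Hence $R_{n,\gamma,1} \to 0$.

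Next I would handle $R_{n,\gamma,2}$, splitting into the two cases of its definition \eqref{eq:def_Rn2}. If $\gamma \in [0,1)$, then $R_{n,\gamma,2} = \frac{\E[\E[\bsfK_n(X_1,X_2)\bsfK_n(X_1,X_3)\mid X_2,X_3]^2]}{\E[\bsfK_n(X_1,X_2)^2]^2}$, which converges to zero exactly by Assumption~\ref{assumption:kernelconvg}, which we have adopted in this regime. If $\gamma = 1$, then $R_{n,\gamma,2} = 1/\log n \to 0$ trivially. In both cases $R_{n,\gamma,2} \to 0$. Combining with the previous paragraph, $R_{n,\gamma,1} + R_{n,\gamma,2} \to 0$, so $(R_{n,\gamma,1} + R_{n,\gamma,2})^{1/5} \to 0$, and Theorem~\ref{thm:berry_essen_H0} yields $\sup_{x \in \R}|\P(n^{2-\gamma} T_{n,\gamma}/\sigma_{n,\gamma} \le x) - \Phi(x)| \to 0$.

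The argument is essentially bookkeeping: the assumptions were evidently designed to make exactly the terms in the Berry--Esseen bound of Theorem~\ref{thm:berry_essen_H0} vanish. There is no real obstacle here; the only mild care needed is to confirm that the ``extra'' $t_{n,\gamma,2}$ term (not covered by an assumption) decays for all $\gamma \in [0,1]$, which is immediate from inspecting \eqref{eq:tnalpha}, and to note that well-definedness of all the ratios involved (nonvanishing denominators $\E[\bsfK_n(X_1,X_2)^2]$) is guaranteed by Lemma~\ref{lemma:E_barK_sq_positive}, as already remarked after the statement of Theorem~\ref{thm:berry_essen_H0}.
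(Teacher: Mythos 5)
Your proposal is correct and follows exactly the route the paper intends: the paper omits the proof, stating that the corollary "follows immediately from Theorem~\ref{thm:berry_essen_H0}," and your bookkeeping — the two limits in Assumption~\ref{assumption:4thand22moment} kill the first two pieces of $R_{n,\gamma,1}$, the standalone $t_{n,\gamma,2}$ term vanishes for every $\gamma\in[0,1]$ by inspection of \eqref{eq:tnalpha}, and $R_{n,\gamma,2}\to 0$ either by Assumption~\ref{assumption:kernelconvg} (for $\gamma<1$) or as $1/\log n$ (for $\gamma=1$) — is precisely that verification. Nothing is missing.
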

Note that when $\gamma=1$ the Assumption~\ref{assumption:kernelconvg} is not required. Corollary~\ref{cor:NullCLT} follows immediately from Theorem~\ref{thm:berry_essen_H0}, and its proof is therefore omitted. With the result from Corollary~\ref{cor:NullCLT} we can now define the test $\phi_{n,\gamma}$ for $\bm H_0: P_n = Q_n$ as,
\begin{align}\label{eq:def_test}
    \phi_{n,\gamma} = \one\left\{\frac{\sum_{i=2}^{n}\frac{1}{i^{\gamma}}\sum_{j=1}^{i-1}\sfH(\bZ_i, \bZ_j)}{\sqrt{\sum_{i=2}^{n}\left(\frac{1}{i^\gamma}\sum_{j=1}^{i-1}\sfH(\bZ_i, \bZ_j)\right)^2}}>z_{\gamma}\right\}\text{ for all }0\leq \gamma\leq 1.
\end{align}
Then under the assumptions of Corollary~\ref{cor:NullCLT}, the test $\phi_{n,\gamma}$ is asymptotically valid at level $\gamma$.

\subsection{Consistency against fixed alternatives}\label{sec:consistency_gen}
In this section we establish the consistency of the test $\phi_{n,\gamma}$ for any $0\leq \gamma\leq 1$. Recall the general setting where we allow the kernel $\sfK_n$ and the sample space $\cX_n = \R^{d_n}$ to change with $n$ and also consider $(P_n, Q_n)\in \cP_n\subseteq \cM_{\sfK_n}^{1/2}\times \cM_{\sfK_n}^{1/2}$ which can also change with $n$. In this setting we identify sufficient conditions under which the test $\phi_{n,\gamma}$ is consistent uniformly over $\cP_n$. Let $\delta_n = \MMD[P_n, Q_n, \sfK_n]$ which can approach $0$ in the limit.

\begin{theorem}\label{thm:general_consistency}
    Define $s_{n,\gamma} = \sum_{i=2}^{n}(i-1)/i^\gamma$. If $\delta_n>0$ and,
    \small
    \begin{align}\label{eq:genconsistent_assumption}
        \lim_{n\ra\infty} \sup_{(P_n, Q_n)\in \cP_n} \frac{\E\left[\frac{1}{s_{n,\gamma}^2}\sum_{i=2}^n\left(\frac{1}{i^\gamma}\sum_{j=1}^{i-1}\sfH(\bZ_i,\bZ_j)\right)^2\right]}{\delta_n^4} + \frac{\Var\left(\frac{1}{s_{n,\gamma}}\sum_{i=2}^{n}\frac{1}{i^\gamma}\sum_{j=1}^{i-1}\sfH(\bZ_i, \bZ_j)\right)}{\delta_n^4} = 0,
    \end{align}
    \normalsize
    then $\lim_{n\ra\infty}\sup_{(P_n, Q_n)\in \cP_n}\E\left[1-\phi_{n,\gamma}\right] = 0$.
\end{theorem}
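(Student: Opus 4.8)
The plan is to center the (unnormalized) numerator of the test statistic at its mean under the alternative and then control the self‑normalized ratio by elementary moment inequalities. Write $U_{n,\gamma} := \sum_{i=2}^{n}\frac{1}{i^{\gamma}}\sum_{j=1}^{i-1}\sfH_n(\bZ_i,\bZ_j)$ for the numerator, so that $\phi_{n,\gamma} = \one\{U_{n,\gamma} > z_\gamma\,\sigma_{n,\gamma}\}$ with $\sigma_{n,\gamma} = \big(\sum_{i=2}^{n}(i^{-\gamma}\sum_{j<i}\sfH_n(\bZ_i,\bZ_j))^2\big)^{1/2}$, and hence $\E[1-\phi_{n,\gamma}] = \P(U_{n,\gamma} \le z_\gamma\,\sigma_{n,\gamma})$. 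The first step is to compute $\E[U_{n,\gamma}]$: for $i\neq j$ the pairs $\bZ_i$ and $\bZ_j$ are independent and each of the four kernel evaluations in $\sfH_n(\bZ_i,\bZ_j)$ has independent arguments, so $\E[\sfH_n(\bZ_i,\bZ_j)] = \langle\nu_{P_n},\nu_{P_n}\rangle - 2\langle\nu_{P_n},\nu_{Q_n}\rangle + \langle\nu_{Q_n},\nu_{Q_n}\rangle = \|\nu_{P_n}-\nu_{Q_n}\|_{\cK_n}^2 = \delta_n^2$; therefore $\E[U_{n,\gamma}] = \delta_n^2\sum_{i=2}^{n}(i-1)/i^\gamma = \delta_n^2\,s_{n,\gamma}$. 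Dividing by the deterministic quantity $s_{n,\gamma}>0$, introduce $\bar U_{n,\gamma} := U_{n,\gamma}/s_{n,\gamma}$ (so $\E[\bar U_{n,\gamma}] = \delta_n^2$) and $W_{n,\gamma} := \sigma_{n,\gamma}^2/s_{n,\gamma}^2 \ge 0$. With this notation the two ratios in hypothesis \eqref{eq:genconsistent_assumption} are exactly $\E[W_{n,\gamma}]/\delta_n^4$ and $\Var(\bar U_{n,\gamma})/\delta_n^4$, both of which tend to $0$ uniformly over $\cP_n$ by assumption.

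Next I would peel the self‑normalized ``do not reject'' event into two events with deterministic thresholds. Since $\sigma_{n,\gamma}\ge 0$, on $\{U_{n,\gamma}\le z_\gamma\sigma_{n,\gamma}\}$ one has $\bar U_{n,\gamma} \le z_\gamma\sqrt{W_{n,\gamma}} \le |z_\gamma|\sqrt{W_{n,\gamma}}$, and if moreover $\bar U_{n,\gamma} > \delta_n^2/2$ then necessarily $|z_\gamma|\sqrt{W_{n,\gamma}} > \delta_n^2/2$. A union bound therefore gives
\begin{align*}
\E[1-\phi_{n,\gamma}] &\le \P\!\left(\bar U_{n,\gamma} - \E[\bar U_{n,\gamma}] \le -\tfrac{\delta_n^2}{2}\right) + \P\!\left(|z_\gamma|\sqrt{W_{n,\gamma}} > \tfrac{\delta_n^2}{2}\right) \\
&\le \frac{4\,\Var(\bar U_{n,\gamma})}{\delta_n^4} + \frac{4 z_\gamma^2\,\E[W_{n,\gamma}]}{\delta_n^4},
\end{align*}
by Chebyshev's inequality for the first term and Markov's inequality (applied to $W_{n,\gamma}$) for the second. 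Taking the supremum over $(P_n,Q_n)\in\cP_n$ and letting $n\to\infty$, the right‑hand side vanishes by \eqref{eq:genconsistent_assumption}, which is precisely the claim $\lim_n\sup_{(P_n,Q_n)\in\cP_n}\E[1-\phi_{n,\gamma}] = 0$.

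There is no genuine obstacle here: the argument is a two‑line Chebyshev/Markov estimate. The only points requiring any care are the identity $\E[\sfH_n(\bZ_i,\bZ_j)] = \delta_n^2$ for off‑diagonal pairs (which relies only on independence across the two samples and indices, not on within‑pair independence), and the fact that the normalizer $\sigma_{n,\gamma}$ is random, which is why one first passes from the ratio $U_{n,\gamma}/\sigma_{n,\gamma}$ to a union of two fixed‑threshold events before invoking the moment inequalities. Finally, Theorem~\ref{thm:gen_consistency} is recovered by specializing to $\gamma = 1$: then $s_{n,1} = \sum_{i=2}^{n}(i-1)/i = n - \sum_{k=1}^{n} 1/k$, so $n^2/s_{n,1}^2 \to 1$, and since $\sigma_n^2 = \sigma_{n,1}^2/n^2$ and $T_n = U_{n,1}/n$ (recall \eqref{eq:sigma_n_exp} and \eqref{eq:Tn_alt}) one has $W_{n,1} = (n^2/s_{n,1}^2)\,\sigma_n^2$ and $\bar U_{n,1} = (n/s_{n,1})\,T_n$; hence \eqref{eq:genconsistent_assumption} is equivalent, up to a factor tending to $1$, to the hypothesis $\E[\sigma_n^2]/\delta_n^4 + \Var(T_n)/\delta_n^4 \to 0$ of Theorem~\ref{thm:gen_consistency}.
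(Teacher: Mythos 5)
Your proposal is correct and takes essentially the same route as the paper's proof: the key identity $\E[\bar U_{n,\gamma}]=\delta_n^2$, a Markov bound on the second moment of the self-normalizer, and a Chebyshev bound on the centered statistic, exactly matching the paper's decomposition. The only (cosmetic) difference is that you obtain the explicit bound $4\Var(\bar U_{n,\gamma})/\delta_n^4 + 4z_\gamma^2\,\E[W_{n,\gamma}]/\delta_n^4$ directly via a union bound, whereas the paper routes the same two inequalities through an auxiliary event $\{\hat\sigma_{n,\gamma}^2\leq \E[\hat\sigma_{n,\gamma}^2]/\vep\}$ and a final $\vep\to 0$ limit; your version is, if anything, slightly more streamlined.
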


The proof of Theorem~\ref{thm:general_consistency} is similar to the proof of Theorem 8 from \cite{shekhar2022permutation}; however, for the sake of completeness, we provide a proof in Appendix~\ref{sec:proofofgeneralconsistentcy}. 

\subsection{Minimax optimality against smooth local alternatives}\label{sec:minimax}
In this section, we apply the above general consistency result to distributions $P_n$ and $Q_n$ which admit densities $p_n$ and $q_n$ lying in the order $\beta$ Sobolev ball for some $\beta>0$ denoted as
\small
\begin{align*}
    \cW^{\beta,2}(M) = \left\{f: \cX\ra \R \mid f\text{ is a.s.\ continuous, and }\int (1+\omega^2)^{\beta/2}\left\|\cF(f)(\omega)\right\|^2\d\omega<M<\infty\right\}.
\end{align*}
\normalsize
Take $\cX = \R^d$, that is the sample space (in particular the dimension) does not change with $n$. Formally the null and alternative class of distributions are defined as
\begin{align*}
    \cP_n^{(0)} 
    & := \left\{P_n \text{ with density }p_n: p_n\in \cW^{\beta,2}(M)\right\}\\
    \cP_n^{(1)}
    & := \left\{(P_n, Q_n) \text{ with densities }(p_n, q_n)\in \cW^{\beta,2}(M)\text{ respectively}: \left\|p_n-q_n\right\|_2\geq \Delta_n\right\}
\end{align*}
for some sequence $\Delta_n$ decaying to $0$. In particular, under $\bm H_0$, $P_n=Q_n\in \cP_n^{(0)}$ and under the alternative we assume $(P_n, Q_n)\in \cP_{n}^{(1)}$. The following result shows that the test $\phi_{n,\gamma}$ from \eqref{eq:def_test} with certain choices of $\gamma$ and the Gaussian kernel (with a suitably chosen scale parameter) is minimax rate optimal for the above class of local alternatives.

\begin{theorem}\label{thm:optimality}
    Let $\Delta_n$ be a sequence such that $\lim_{n\ra\infty}\Delta_n n^{\frac{2\beta}{d+4\beta}} = \infty$. Then for the test $\phi_{n,\gamma}$ from \eqref{eq:def_test} with the Gaussian kernel $\sfK_n(x,y) = \exp(-\nu_n\left\|x-y\right\|^2)$ having bandwidth $\nu_n = n^{4/(d+4\beta)}$ we have
    \begin{align*}
        \lim_{n\ra\infty} \P_{(P_n = Q_n)\in \cP_n^{(0)}}\left(\phi_{n,\gamma} = 1\right) \leq \delta\text{ and }\lim_{n\ra\infty}\sup_{(P_n, Q_n)\in \cP_n^{(1)}}\P(\phi_{n,\gamma} = 1) = 1,
    \end{align*}
    whenever $0\leq \gamma< \frac{d+8\beta}{2d+8\beta.}$.
\end{theorem}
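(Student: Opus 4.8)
The plan is to verify the hypotheses of Theorem~\ref{thm:general_consistency} (for the power claim) and of Corollary~\ref{cor:NullCLT} (for the level claim), specializing to the Gaussian kernel $\sfK_n(x,y)=\exp(-\nu_n\|x-y\|^2)$ with $\nu_n = n^{4/(d+4\beta)}$, and then reading off the range of $\gamma$. First I would recall the standard moment estimates for the Gaussian kernel against Sobolev densities from \cite{li2024optimality} (their conditions (9) and the surrounding lemmas): for $p_n,q_n\in\cW^{\beta,2}(M)$ one has $\E[\bsfK_n(X_1,X_2)^2]\asymp \nu_n^{-d/2}$, $\E[\bsfK_n(X_1,X_2)^4]\lesssim \nu_n^{-d/2}$, $\E[\bsfK_n(X_1,X_2)^2\bsfK_n(X_1,X_3)^2]\lesssim \nu_n^{-d}$, and $\E[\E[\bsfK_n(X_1,X_2)\bsfK_n(X_1,X_3)\mid X_2,X_3]^2]=o(\nu_n^{-d})$, so Assumption~\ref{assumption:kernelconvg} holds with $L=0$. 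Also, the bandwidth choice gives the bias–variance relation that $\MMD[P_n,Q_n,\sfK_n]^2\gtrsim \Delta_n^2 - C\nu_n^{-\beta}$ up to constants (this is exactly where $\nu_n=n^{4/(d+4\beta)}$ and the separation rate $\Delta_n n^{2\beta/(d+4\beta)}\to\infty$ enter, forcing $\nu_n^{-\beta}=n^{-4\beta/(d+4\beta)}=o(\Delta_n^2)$), hence $\delta_n^2\asymp\Delta_n^2$.

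Next I would handle the \emph{level} statement. Plugging the above moment bounds into $R_{n,\gamma,1}$ from \eqref{eq:def_Rn1}: the first term is $t_{n,\gamma,1}\cdot O(\nu_n^{d/2})$ and the second is $t_{n,\gamma,2}\cdot O(1)$ (since $\nu_n^{-d}/\nu_n^{-d}=O(1)$), and $R_{n,\gamma,2}\to 0$ by Assumption~\ref{assumption:kernelconvg}. With $\nu_n^{d/2}=n^{2d/(d+4\beta)}$, the binding constraint is $t_{n,\gamma,1}\,n^{2d/(d+4\beta)}\to 0$. Using the definition of $t_{n,\gamma,1}$ in \eqref{eq:tnalpha} in the regime $\tfrac12<\gamma<1$ this reads $n^{-4(1-\gamma)}n^{2d/(d+4\beta)}\to 0$, i.e.\ $4(1-\gamma)>2d/(d+4\beta)$, which rearranges to $\gamma<\tfrac{d+8\beta}{2d+8\beta}$; for $\gamma\le\tfrac12$ the factor $t_{n,\gamma,1}$ is $O(n^{-2}\log n)$ which beats $n^{2d/(d+4\beta)}\le n^{2}$ only marginally, so one checks $2d/(d+4\beta)<2$ holds always and the $\gamma\le 1/2$ case is covered (consistent with $\tfrac12<\tfrac{d+8\beta}{2d+8\beta}$). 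One should also confirm $t_{n,\gamma,2}$ contributes nothing problematic since it multiplies an $O(1)$ quantity and $t_{n,\gamma,2}\to 0$ for all $\gamma<1$. Thus $R_{n,\gamma,1}+R_{n,\gamma,2}\to 0$, Corollary~\ref{cor:NullCLT} applies, and by construction of $\phi_{n,\gamma}$ as comparing to $z_\gamma$ the asymptotic Type-I error is $\le\delta$.

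For the \emph{power} statement I would verify \eqref{eq:genconsistent_assumption}. With $s_{n,\gamma}=\sum_{i\ge 2}(i-1)/i^\gamma\asymp n^{2-\gamma}$, the normalized statistic $\tfrac{1}{s_{n,\gamma}}\sum_i i^{-\gamma}\sum_{j<i}\sfH_n$ has mean $\delta_n^2$ and I need both $\E[\text{(normalized }\sigma_{n,\gamma}^2)]/\delta_n^4\to 0$ and $\Var(\cdot)/\delta_n^4\to 0$ uniformly. A direct second-moment expansion of the double sum (splitting $\sfH_n=$ mean $+$ degenerate parts $h_n,g_n$ as in Theorem~\ref{thm:general_alt_dist}, and using $\E[\sfH_n^2]=O(\nu_n^{-d/2})$, $\Var(h_n)=O(\nu_n^{-d/2})$ for Sobolev densities) gives variance $O(\nu_n^{-d/2}/n) = O(n^{-1-2d/(d+4\beta)})$ and the expected normalized $\sigma_{n,\gamma}^2$ term of the same order or smaller; dividing by $\delta_n^4\asymp\Delta_n^4$ and using $\Delta_n^2\gg n^{-4\beta/(d+4\beta)}$ shows the ratio is $O\big(n^{-1-2d/(d+4\beta)}/\Delta_n^4\big)$, which one checks tends to $0$ precisely in the stated range of $\gamma$ (the $\gamma$-dependence reenters through how the $i^{-\gamma}$ weights redistribute mass in the variance of the degenerate kernel $g_n$ term, which is the place the constraint $\gamma<\tfrac{d+8\beta}{2d+8\beta}$ could bite). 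Then Theorem~\ref{thm:general_consistency} yields power $\to 1$ uniformly over $\cP_n^{(1)}$.

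The main obstacle I anticipate is the careful bookkeeping in the power step: getting \emph{uniform-over-}$\cP_n^{(1)}$ control of $\Var$ of the weighted incomplete U-statistic $\sum_i i^{-\gamma}\sum_{j<i}\sfH_n$, because unlike the unweighted ($\gamma=0$) U-statistic the weights $i^{-\gamma}$ break exchangeability and one must track which index-pairs share a common vertex; the coefficient of the degenerate ($g_n$) contribution scales like $\sum_i i^{-2\gamma}\cdot(\text{size of }i)$ relative to $s_{n,\gamma}^2$, and it is exactly matching this against $\delta_n^4$ that pins down $\gamma<\tfrac{d+8\beta}{2d+8\beta}$. Establishing that the Sobolev moment bounds hold \emph{uniformly} over the ball (rather than for a fixed density) is routine but needs the $M$ and $\|p_n-q_n\|_2\ge\Delta_n$ constraints invoked explicitly, citing \cite{li2024optimality}.
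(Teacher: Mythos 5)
Your overall strategy is the paper's: verify Corollary~\ref{cor:NullCLT} for the level claim and Theorem~\ref{thm:general_consistency} for the power claim, using the Gaussian-kernel moment estimates of \cite{li2024optimality} with $\nu_n=n^{4/(d+4\beta)}$. The Type-I part is essentially right and matches the paper: the binding requirement is $t_{n,\gamma,1}\,\E[\bsfK_n(X_1,X_2)^4]/\E[\bsfK_n(X_1,X_2)^2]^2\lesssim t_{n,\gamma,1}\nu_n^{d/2}=t_{n,\gamma,1}n^{2d/(d+4\beta)}\to0$, which gives $\gamma<\tfrac{d+8\beta}{2(d+4\beta)}$. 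One caveat: your claim $\E[\bsfK_n(X_1,X_2)^2\bsfK_n(X_1,X_3)^2]\lesssim\nu_n^{-d}$ is stronger than what the Sobolev-ball ($L_2$-type) control of the densities supports; the paper uses only $\lesssim\nu_n^{-3d/4}$, which yields the non-binding side constraint $\gamma<\tfrac{3d+16\beta}{4(d+4\beta)}$, so the conclusion is unaffected but the cited bound should be the weaker one.

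The power step has a genuine gap, and it is exactly at the calibration you assert: $\delta_n^2\asymp\Delta_n^2$ is false. For the Gaussian kernel with $\nu_n\to\infty$, Lemma~\ref{lemma:gauss_kernel_prod} gives $\delta_n^2=(\pi/\nu_n)^{d/2}\int e^{-\|\omega\|^2/(4\nu_n)}|\cF(p_n-q_n)(\omega)|^2\d\omega$, and Lemma 15 of \cite{li2024optimality} (applicable because $\nu_n\geq C\|p_n-q_n\|_2^{-2/\beta}$ under the separation condition) gives the matching lower bound, so $\delta_n^2\asymp\nu_n^{-d/2}\|p_n-q_n\|_2^2\gtrsim\nu_n^{-d/2}\Delta_n^2$ — smaller than $\Delta_n^2$ by the factor $\nu_n^{-d/2}=n^{-2d/(d+4\beta)}$. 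Since condition \eqref{eq:genconsistent_assumption} involves the \emph{true} $\delta_n$, your crude bounds $\Var(h_n)\lesssim\nu_n^{-d/2}$ and $\Var\lesssim\nu_n^{-d/2}/n$ are insufficient: dividing by $\delta_n^4\asymp\nu_n^{-d}\|p_n-q_n\|_2^4$ gives $\nu_n^{d/2}/(n\Delta_n^4)=n/(n^{2\beta/(d+4\beta)}\Delta_n)^4$, which diverges near the critical rate. The missing ingredient is the refined estimate $\E[\sfH_n(\bZ_1,\bZ_2)\sfH_n(\bZ_1,\bZ_3)]=\E\big[\E[\sfH_n(\bZ_1,\bZ_2)\mid\bZ_1]^2\big]\lesssim\|p_n-q_n\|_2^2\,\nu_n^{-3d/4}\lesssim\nu_n^{-d/4}\delta_n^2$ (the paper obtains this via Lemma 20 of \cite{shekhar2022permutation} together with the Plancherel identity above), i.e.\ the non-degenerate part must be controlled \emph{proportionally to} $\delta_n^2$. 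With it, the normalized second moment is $\lesssim\nu_n^{-d/2}/n^2+\nu_n^{-d/4}\delta_n^2/n$, and dividing by $\delta_n^4$ yields $(n^{2\beta/(d+4\beta)}\Delta_n)^{-4}+(n^{2\beta/(d+4\beta)}\Delta_n)^{-2}\to0$; the variance term is handled the same way. Finally, the restriction on $\gamma$ does not ``bite'' in the power analysis as you suggest: by Lemma~\ref{lemma:sumip} the weighted sums give $O(1/n^2)$ and $O(1/n)$ uniformly over $\gamma\in[0,1]$, so the range $\gamma<\tfrac{d+8\beta}{2d+8\beta}$ comes solely from the null CLT step.
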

The proof of Theorem~\ref{thm:optimality} is provided in Section~\ref{sec:proof_minimax} and it follows by verifying the conditions of Theorem~\ref{thm:general_consistency}.

\begin{remark}
    We note that Theorem~\ref{thm:optimality} establishes only one direction of the minimax optimality claim for $\phi_{n,\gamma}$. The complementary lower bound is provided by \cite{li2024optimality}, who show that if $\Delta_n n^{\frac{2\beta}{d + 4\beta}} = O(1)$, then there exists some $\alpha \in (0,1)$ such that no asymptotically level-$\alpha$ test can achieve power tending to one. This implies that the sequence $\Delta_n$ in Theorem~\ref{thm:optimality} is indeed the smallest separation between $P_n$ and $Q_n$ (in terms of $L_2$ distance between densities) that any test can reliably detect. Consequently, $\phi_{n,\gamma}$ is minimax optimal for $0 \leq \gamma < \frac{d + 8\beta}{2d + 8\beta}$. The status of $\gamma$ above this range is presently unknown.
\end{remark}


\section{Additional Experiments}
In this section we include additional experiments, including validity of $m\MMD$ tests under the null, comparative performance of $T_{n,\gamma}$ for different values of $\gamma$ and comparative performance of the tests from Section \ref{sec:experiments} in the minimax regime.

\subsection{Comparing performance of $T_{n,\gamma}$ for different $\gamma$}

\begin{figure}[H]
  \centering

  \begin{minipage}[t]{0.55\textwidth}
  
    In this section, we compare the performance of tests based on $T_{n,\gamma}$ for 
$\gamma \in \{0, 0.25, 0.5, 0.75, 1\}$. We recall the experimental setting from Section~\ref{sec:real_data} and apply the $T_{n,\gamma}$-based tests to distinguish between the distributions $P$ and $Q$. To ensure a fair comparison across all values of $\gamma$, we use a permutation-based calibration procedure for every test. The entire experiment is repeated $100$ times, and the resulting empirical performance is summarized in Figure~\ref{fig:mnist_asymptotic_gamma}.\\

As noted previously (see Section~\ref{sec:real_data}), the testing problem becomes increasingly difficult as the group index grows. Consistent with this, the power of the $T_{n,\gamma}$-based tests decreases for all values of $\gamma$. Moreover, the 
  \end{minipage}
  \hfill
  \begin{minipage}[t]{0.4\textwidth}
    \vspace{0pt}
    \centering
    \includegraphics[width=\linewidth]{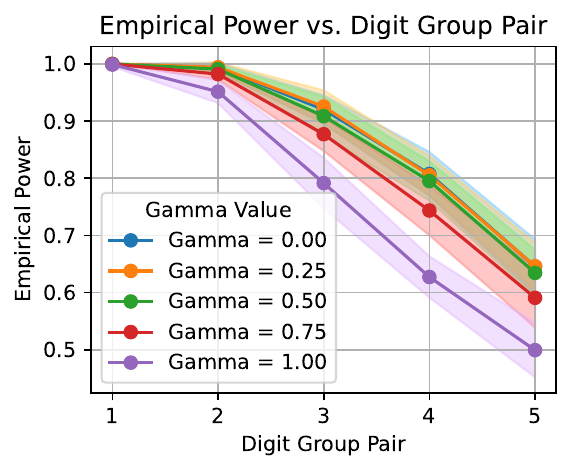}
    \caption{Empirical power comparison.}
    \label{fig:mnist_asymptotic_gamma}
  \end{minipage}
\end{figure}
\vspace{-13pt}
\noindent
rate of decline becomes steeper as $\gamma$ increases. This behavior matches the intuition that $m\MMD$-based tests (corresponding to $\gamma = 1$) trade off statistical power for computational efficiency relative to $\MMD$ (corresponding to $\gamma = 0$), and that the family $\{T_{n,\gamma}\}_{\gamma \in [0,1]}$ provides a continuous interpolation 
between these two extremes.

\subsection{Type-I validity of $m\MMD$}
In this section, we revisit the experimental setup from Section~\ref{sec:power_sims}, fix $d = 10$, and examine Type-I error control for the quadratic-time $\mathrm{MMD}$, 
$m\mathrm{MMD}$, $x\mathrm{MMD}$, the linear-time, and the block-based tests in both small- and large-sample regimes.

\begin{figure}[htbp]
    \centering
    \begin{minipage}[b]{0.48\textwidth}
        \centering
        \includegraphics[width=\textwidth]{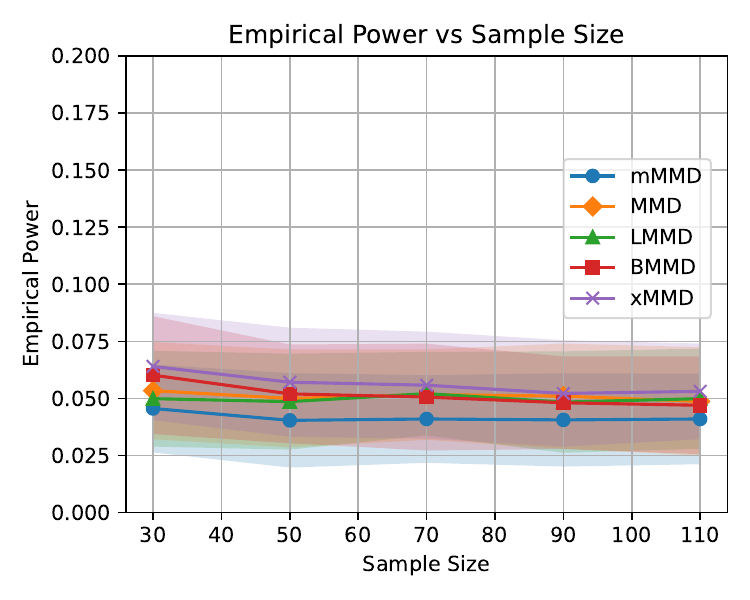}
        \caption*{(a) Type-I error control at small samples.}
    \end{minipage}
    \hfill
    \begin{minipage}[b]{0.48\textwidth}
        \centering
        \includegraphics[width=\textwidth]{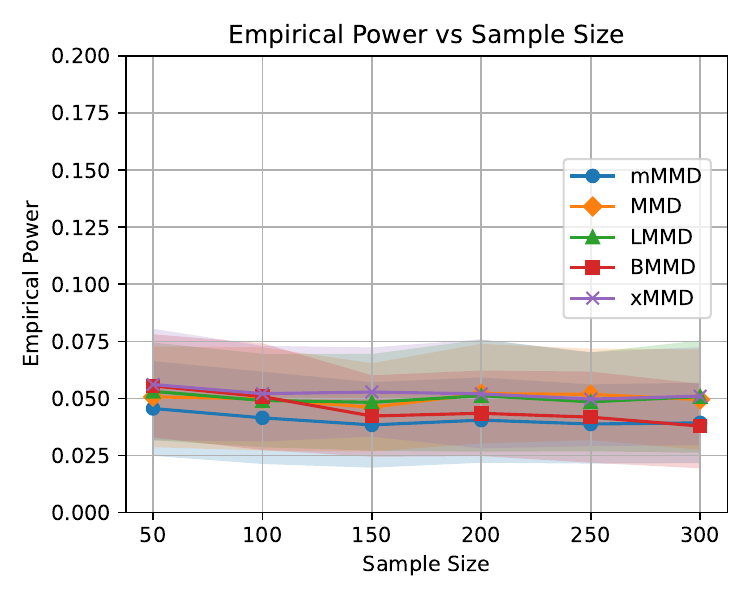}
        \caption*{(b) Type-I error control at large samples.}
    \end{minipage}

    \caption{Type-I error control for the original quadratic time \texttt{MMD}, \texttt{mMMD, xMMD} and the linear time \texttt{LMMD} and block-based \texttt{BMMD}.}
    \label{fig:type-I}
\end{figure}

Figure~\ref{fig:type-I} illustrates the Type-I error control of all the aforementioned tests in both small- and large-sample settings. While all methods maintain the nominal Type-I error level $\alpha = 0.05$ in large-sample scenarios, in the small-sample setting with $n = 10$, the $x\mathrm{MMD}$ and block-based test (\texttt{BMMD}) fail. This is because both rely on sample splitting and block-based approximations, which are less accurate in such small-sample regimes.

\subsection{Performance in minimax regime}\label{sec:minimax_empirical}
Building on Remark~\ref{remark:non_minimax} and Theorem~\ref{thm:optimality}, we observe that our current proof technique does not establish the rate-optimality of the $m\mathrm{MMD}$ test. Nevertheless, the experiments in Section~\ref{sec:experiments} indicate that, empirically, the $m\mathrm{MMD}$ test performs on par with the $x\mathrm{MMD}$ test of \cite{shekhar2022permutation}, 
which is known to be minimax optimal. To determine whether the absence of a rate-optimality guarantee for the $m\mathrm{MMD}$ test reflects a limitation of our analytical approach or an inherent limitation of the test itself, we design in this section a set of experiments that closely mirror the conditions of Theorem~\ref{thm:optimality}. In particular we take
\begin{align*}
    p_n  =\mathrm N(0,1) \text{ and }q_n = \mathrm N\left(2\sqrt{-\log\left(1-\Delta_n^2\sqrt{\pi}\right)}, 1\right).
\end{align*}
With these choices, it immediately follows that $\|p_n - q_n\|_2 = \Delta_n$. In Figure~\ref{fig:minimax_empirical}, we present the performance of the quadratic-time $\mathrm{MMD}$, $m\mathrm{MMD}$, and $x\mathrm{MMD}$ tests, as well as the linear-time and block-based $\mathrm{MMD}$ tests, in detecting the difference between $p_n$ and $q_n$. All tests use a Gaussian kernel with bandwidth $\nu_n = n^{4/5}$ (see the choice of bandwidth in Theorem~\ref{thm:optimality}), we vary $n\in \{100, 200, 300, 400, 500\}$ and we choose $\Delta_n = n^{-1/5}, n^{-1/4}$. Note that this choice of $\Delta_n$ satisfies the constraint from Theorem \ref{thm:optimality} with $\beta=1$.

\begin{figure}[htbp]
    \centering
    \begin{minipage}[b]{0.32\textwidth}
        \centering
        \includegraphics[width=\textwidth]{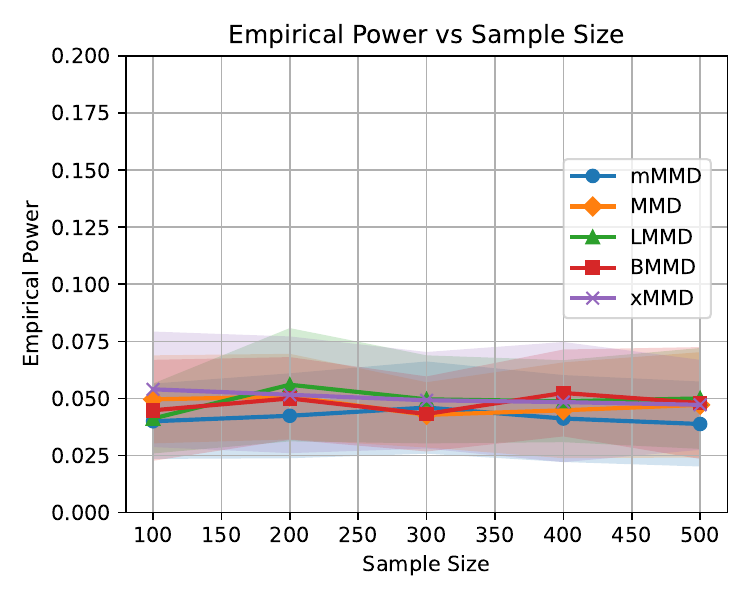}
        \caption*{(a) Type-I Error}
    \end{minipage}
    \hfill
    \begin{minipage}[b]{0.32\textwidth}
        \centering
        \includegraphics[width=\textwidth]{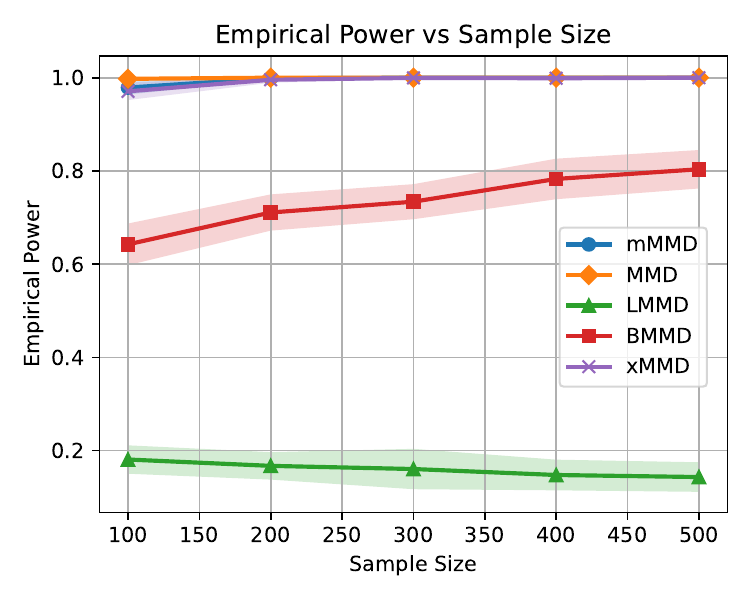}
        \caption*{(b) Power with $\Delta_n = n^{-\frac{1}{5}}$}
    \end{minipage}
    \hfill
    \begin{minipage}[b]{0.32\textwidth}
        \centering
        \includegraphics[width=\textwidth]{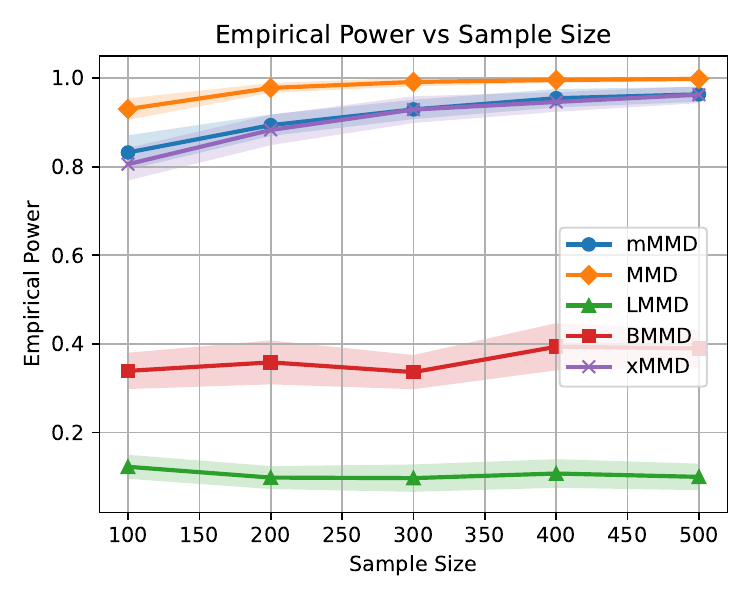}
        \caption*{(c) Power wth $\Delta_n = n^{-\frac{1}{4}}$}
    \end{minipage}

    \caption{Performance of $\MMD,m\MMD$, $x\MMD$ \citep{shekhar2022permutation} and the linear time \texttt{LMMD} and the block-based \texttt{BMMD} in the setting from Section \ref{sec:minimax_empirical}.}
    \label{fig:minimax_empirical}
\end{figure}

Figure~\ref{fig:minimax_empirical} suggests that, in the minimax regime, the $m\mathrm{MMD}$ test performs comparably to both the $x\mathrm{MMD}$ and the quadratic-time $\mathrm{MMD}$ tests in terms of type-I error control 
and statistical power. This indicates that the limitation from Remark \ref{remark:non_minimax} may arise from the proof technique rather than the test itself, and it is possible that $m\mathrm{MMD}$ also achieves rate-optimality.

\section{Proof of results from Section~\ref{sec:Tngamma}}
In this section we collected the deferred proofs of results from Section~\ref{sec:Tngamma}. 

\subsection{Proof of Theorem~\ref{thm:berry_essen_H0}}\label{sec:proofof_berry_essen}
The proof proceeds by recognizing \( T_{n,\gamma} \) as a martingale and applying the Berry--Esseen bounds for self-normalized martingales from \cite{fan2018berry}. To that end, we begin by introducing the necessary notation. Define
\begin{align}\label{eq:def_c_alpha}
    c_{\gamma} := 
    \begin{cases}
        \frac{2}{1-\gamma} & \text{ if }0\leq \gamma<1\\
        4 & \text{ if }\gamma = 1.
    \end{cases}
\end{align}
Take $M_n = \E\left[\bsfK_n(X_1, X_2)^2\right]$ where $X_1,X_2$ are generated independently from $P_n$. To identify $T_{n,\gamma}$ as a martingale, define $W_{n,1,\gamma} = 0, 0\leq \gamma\leq 1$ and
\begin{align*}
    W_{n,i,\gamma} = 
    \begin{cases}
        \frac{1}{n^{1-\gamma}\sqrt{c_\gamma M_n}i^\gamma}\sum_{j=1}^{i-1}\sfH_n(\bZ_i, \bZ_j) & \text{ if }0\leq \gamma < 1\\
        \frac{1}{\sqrt{\log n}\sqrt{c_1 M_n}i}\sum_{j=1}^{i-1}\sfH_n(\bZ_i, \bZ_j) & \text{ if }\gamma = 1.
    \end{cases}
    \text{ for all }2\leq i\leq n,
\end{align*}
where we take $\bZ_i = (X_i, Y_i), 1\leq i\leq n$ and $\sfH_n$ is defined in \eqref{eq:def_H_n}. Moreover, consider the sigma algebras $\cF_{n,1} = \cF_1 = \left\{\emptyset, \Omega\right\} $ and $\cF_{n,i}:=\cF_i = \sigma\left(\bZ_1,\ldots, \bZ_i\right)$ for all $2\leq i\leq n$. Finally, let
\begin{align*}
  S_{n,i,\gamma} = \sum_{j=1}^{i}W_{n,j,\gamma} \text{ for all }0\leq \gamma\leq 1.
\end{align*}
It is easy to see that under the null, for $n\geq 1$, $\{(S_{n,i,\gamma}, \cF_{n,i}): 1\leq i\leq n\}$ forms a zero-mean, square integrable martingale. Finally, notice that,
\begin{align*}
    n^{2-\gamma}\frac{T_{n,\gamma}}{\sigma_{n,\gamma}} = \frac{\sum_{i=2}^{n}\frac{1}{i^{\gamma}}\sum_{j=1}^{i-1}\sfH(\bZ_i, \bZ_j)}{\sqrt{\sum_{i=2}^{n}\left(\frac{1}{i^\gamma}\sum_{j=1}^{i-1}\sfH(\bZ_i, \bZ_j)\right)^2}} = \frac{S_{n,n,\gamma}}{\sqrt{\sum_{i=2}^{n}W_{n,i,\gamma}^2}}.
\end{align*}
Then by Theorem 2.1 from \cite{fan2018berry} (the required moment condition is satisfied by following computations similar to \eqref{eq:4thmomentbdd}, \eqref{eq:Hn22bddKn22} and the moment assumption on $\bsfK_n$) we get,
\begin{align*}
    \sup_{x\in \R}\left|\P\left(n^{2-\gamma}\frac{T_{n,\gamma}}{\sigma_{n,\gamma}}\leq x\right) - \Phi(x)\right|\lesssim N_{n,\gamma}^{1/5},
\end{align*}
where
\begin{align}\label{eq:N_n_alpha}
    N_{n,\gamma} = \sum_{i=2}^{n}\E\left[W_{n,i,\gamma}^4\right] + \E\left[\left|\sum_{i=2}^{n}\E\left[W_{n,i,\gamma}^2\mid \cF_{n,i-1}\right] - 1\right|^2\right].
\end{align}
To complete the proof it is now enough to upper bound $N_{n,\gamma}$. We begin by a simple upper bound on $N_{n,\gamma}$. Note that
\small
\begin{align}\label{eq:replace_W_EWcF}
    \E\left[\left(\sum_{i=2}^{n}W_{n,i,\gamma}^2 - \E\left[W_{n,i,\gamma}^2\mid\cF_{n,i-1}\right]\right)^2\right] = \sum_{i=2}^{n}\E\left[\left(W_{n,i,\gamma}^2 - \E\left[W_{n,i,\gamma}^2\right]\right)^2\right]\lesssim\sum_{i=2}^{n}\E\left[W_{n,i,\gamma}^4\right].
\end{align}
\normalsize
Then recalling definition of $N_{n,\gamma}$ from \eqref{eq:N_n_alpha} and by \eqref{eq:replace_W_EWcF} we get,
\begin{align}\label{eq:bdd_N_alpha}
    N_{n,\gamma}\lesssim \sum_{i=2}^{n}\E\left[W_{n,i,\gamma}^4\right] + \E\left[\left|\sum_{i=2}^{n}W_{n,i,\gamma}^2 - 1\right|^2\right]
\end{align}
To complete the proof, we proceed by deriving upper bounds for the terms on the right-hand side of~\eqref{eq:bdd_N_alpha}. A key step in this analysis is the following identity, which holds under the null hypothesis for all \(1 \leq i \neq j \leq n\):
\begin{align}\label{eq:HbarK}
  \sfH_n(\bZ_i,\bZ_j) = \bsfK_n(X_i, X_j) - \bsfK_n(X_i, Y_j) - \bsfK_n(Y_i, X_j) + \bsfK_n(Y_i, Y_j).
\end{align}
This expression will serve as a crucial tool in establishing the bounds from Theorem~\ref{thm:berry_essen_H0}. In the following lemma we first establish the upper bound on the first term from \eqref{eq:bdd_N_alpha}.

\begin{lemma}\label{lemma:4thmmntbdd}
    Under the assumptions of Theorem~\ref{thm:berry_essen_H0},
    \small
    \begin{align}\label{eq:bdd_4th_moment}
    \sum_{i=2}^{n}\E\left[W_{n,i,\gamma}^4\right]\lesssim t_{n,\gamma,1}\frac{\E\left[\bsfK_n(X_1, X_2)^4\right]}{M_n^2} + t_{n,\gamma,2}\frac{\E\left[\bsfK_n(X_1, X_2)^2\bsfK_n(X_1, X_3)^2\right]}{M_n^2} + t_{n,\gamma,2}.
\end{align}
\normalsize
\end{lemma}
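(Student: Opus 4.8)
The plan is to recognize $T_{n,\gamma}$ (equivalently the martingale $S_{n,n,\gamma}$) as a sum over $i$ of conditionally centered blocks and to compute $\E[W_{n,i,\gamma}^4]$ block by block. Write $S_i := \sum_{j=1}^{i-1}\sfH_n(\bZ_i,\bZ_j)$. The key point is that under $\bH_0$ all of $X_\ell,Y_\ell$ are i.i.d.\ $P_n$ and, by \eqref{eq:HbarK}, $\sfH_n(\bZ_i,\bZ_j) = \langle \varphi_i-\tilde\varphi_i,\ \varphi_j-\tilde\varphi_j\rangle_{\cK_n}$ where $\varphi_\ell := \sfK_n(X_\ell,\cdot)-\nu_{P_n}$ and $\tilde\varphi_\ell := \sfK_n(Y_\ell,\cdot)-\nu_{P_n}$ are centered elements of $\cK_n$. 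Hence, conditionally on $\bZ_i$, the summands $\{\sfH_n(\bZ_i,\bZ_j)\}_{j<i}$ are i.i.d.\ with mean zero, so the elementary fourth-moment identity for i.i.d.\ centered sums gives
\begin{align*}
  \E[S_i^4\mid\bZ_i] = (i-1)\,\E[\sfH_n(\bZ_i,\bZ_1)^4\mid\bZ_i] + 3(i-1)(i-2)\,\E[\sfH_n(\bZ_i,\bZ_1)^2\mid\bZ_i]^2 .
\end{align*}
Taking expectations, $\E[S_i^4]\le i\,A_n + 3i^2\,B_n$ with $A_n := \E[\sfH_n(\bZ_1,\bZ_2)^4]$ and $B_n := \E\big[\E[\sfH_n(\bZ_1,\bZ_2)^2\mid\bZ_1]^2\big]$.

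Next I would bound $A_n$ and $B_n$ by the moments appearing in \eqref{eq:bdd_4th_moment}. Expanding $\sfH_n(\bZ_1,\bZ_2)$ into its four $\bsfK_n$-terms via \eqref{eq:HbarK}, the power-mean inequality and the symmetry of the i.i.d.\ sample give $A_n\lesssim\E[\bsfK_n(X_1,X_2)^4]$. For $B_n$, a short covariance-operator computation (using that $\bsfK_n(x,\cdot)$ integrates to $0$ and $X_2\perp Y_2$, so all cross terms mixing an $X$- and a $Y$-coordinate of $\bZ_2$ vanish) yields $\E[\sfH_n(\bZ_1,\bZ_2)^2\mid\bZ_1] = 2g(X_1)+2g(Y_1)-4\psi(X_1,Y_1)$, where $g(x):=\E_{X'\sim P_n}[\bsfK_n(x,X')^2]$ and $\psi(x,y):=\E_{X'\sim P_n}[\bsfK_n(x,X')\bsfK_n(y,X')]$. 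By the power-mean inequality, $B_n\lesssim\E[g(X_1)^2]+\E[\psi(X_1,Y_1)^2]$; conditioning on $X_1$ gives $\E[g(X_1)^2]=\E[\bsfK_n(X_1,X_2)^2\bsfK_n(X_1,X_3)^2]$, while the conditional Cauchy--Schwarz inequality gives $\psi(X_1,Y_1)^2\le g(X_1)g(Y_1)$, hence $\E[\psi(X_1,Y_1)^2]\le\E[g(X_1)]\E[g(Y_1)]=M_n^2$. Thus $B_n\lesssim\E[\bsfK_n(X_1,X_2)^2\bsfK_n(X_1,X_3)^2]+M_n^2$.

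Finally, I would plug these bounds into the definition of $W_{n,i,\gamma}$ and sum over $i$, absorbing the $\gamma$-dependent factor $c_\gamma^{-2}$ into the constant. For $0\le\gamma<1$ this gives
\begin{align*}
  \sum_{i=2}^n\E[W_{n,i,\gamma}^4]\lesssim\frac{1}{n^{4(1-\gamma)}M_n^2}\left(\E[\bsfK_n(X_1,X_2)^4]\sum_{i=2}^n i^{1-4\gamma} + \big(\E[\bsfK_n(X_1,X_2)^2\bsfK_n(X_1,X_3)^2]+M_n^2\big)\sum_{i=2}^n i^{2-4\gamma}\right),
\end{align*}
and for $\gamma=1$ the analogous bound with $n^{4(1-\gamma)}$ replaced by $(\log n)^2$ and the exponents $1-4\gamma,\,2-4\gamma$ replaced by $-3,\,-2$. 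The proof then closes using $\sum_{i=2}^n i^p\lesssim n^{p+1}$ if $p>-1$, $\lesssim\log n$ if $p=-1$, and $\lesssim 1$ if $p<-1$: the breakpoint $1-4\gamma=-1$ (i.e.\ $\gamma=\tfrac12$) reproduces exactly the three regimes defining $t_{n,\gamma,1}$ after cancelling the $n^{4(1-\gamma)}$ prefactor, and the breakpoint $2-4\gamma=-1$ (i.e.\ $\gamma=\tfrac34$) reproduces those defining $t_{n,\gamma,2}$, with the $\gamma=1$ case giving the stated $(\log n)^{-2}$ rate for both pieces.

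I expect the main obstacle to be the bookkeeping in the second step: correctly identifying which cross terms in $\E[\sfH_n(\bZ_1,\bZ_2)^2\mid\bZ_1]$ survive and matching the survivors precisely to $\E[\bsfK_n(X_1,X_2)^2\bsfK_n(X_1,X_3)^2]$ and $M_n^2$ — the covariance-operator reformulation makes this transparent but it must be carried out carefully. The power-sum case analysis in the last step is routine but has to be done separately in each sub-range of $\gamma$ so that the piecewise constants $t_{n,\gamma,1}$ and $t_{n,\gamma,2}$ are recovered exactly.
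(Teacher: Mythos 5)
Your proposal is correct and follows essentially the same route as the paper's proof: expand $\E[S_i^4]$ using conditional centering of $\sfH_n(\bZ_i,\bZ_j)$ given $\bZ_i$ to reduce to $\E[\sfH_n(\bZ_1,\bZ_2)^4]$ and $\E[\sfH_n(\bZ_1,\bZ_2)^2\sfH_n(\bZ_1,\bZ_3)^2]$, bound these by the $\bsfK_n$ moments, and conclude with the power-sum estimates (Lemma~\ref{lemma:sumip}) in the two $\gamma$ regimes. Your explicit covariance-style derivation of $\E[\sfH_n(\bZ_1,\bZ_2)^2\mid\bZ_1]=2g(X_1)+2g(Y_1)-4\psi(X_1,Y_1)$ is a correct fleshing-out of the bound \eqref{eq:Hn22bddKn22}, which the paper states without proof.
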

We postpone the proof Lemma~\ref{lemma:4thmmntbdd} to Section~\ref{sec:proof_of_4thmmntbdd}. In the following lemma we establish an upper bound on the second term in \eqref{eq:bdd_N_alpha}.

\begin{lemma}\label{lemma:2nd_term_bdd}
  Under the assumptions on Theorem~\ref{thm:berry_essen_H0},
  \begin{align*}
    \E\left[\left|\sum_{i=2}^{n}W_{n,i,\gamma}^2 - 1\right|^2\right]\lesssim R_{n,\gamma,1} + R_{n,\gamma,2},
  \end{align*}
  where $R_{n,\gamma,1}$ and $R_{n,\gamma,2}$ are defined in \eqref{eq:def_Rn1} and \eqref{eq:def_Rn2} respectively.
\end{lemma}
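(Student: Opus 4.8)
\noindent
The plan is to pass from $\sum_{i}W_{n,i,\gamma}^2$ to its predictable compensator, absorb the compensation error into the already-controlled fourth-moment sum, and then analyze the compensator itself. Write $M_n=\E[\bsfK_n(X_1,X_2)^2]$ for $X_1,X_2$ i.i.d.\ $\sim P_n$, and let $\kappa_{n,i,\gamma}$ be the deterministic prefactor with $W_{n,i,\gamma}^2=\kappa_{n,i,\gamma}\big(\sum_{j=1}^{i-1}\sfH_n(\bZ_i,\bZ_j)\big)^2$ (so $\kappa_{n,i,\gamma}=(n^{2(1-\gamma)}c_\gamma M_n i^{2\gamma})^{-1}$ for $\gamma<1$ and $\kappa_{n,i,1}=(c_1 M_n i^2\log n)^{-1}$). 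Since $\{W_{n,i,\gamma}^2-\E[W_{n,i,\gamma}^2\mid\cF_{n,i-1}]\}_{i\ge 2}$ is an $L^2$ martingale difference sequence, I would first write
\begin{align*}
\E\Big[\Big|\sum_{i=2}^n W_{n,i,\gamma}^2-1\Big|^2\Big]\le 2\sum_{i=2}^n\E\big[W_{n,i,\gamma}^4\big]+2\,\E\Big[\Big|\sum_{i=2}^n\E\big[W_{n,i,\gamma}^2\mid\cF_{n,i-1}\big]-1\Big|^2\Big],
\end{align*}
where the first sum is $\lesssim R_{n,\gamma,1}$ by Lemma~\ref{lemma:4thmmntbdd} (its bound coincides with $R_{n,\gamma,1}$ of \eqref{eq:def_Rn1}). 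Everything then reduces to the compensator term.

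Next, using $\bZ_i\perp\cF_{n,i-1}$ and setting $G_n(\bz,\bz'):=\E_{\bZ\sim P_n\times P_n}[\sfH_n(\bZ,\bz)\sfH_n(\bZ,\bz')]$, I would compute $\E[W_{n,i,\gamma}^2\mid\cF_{n,i-1}]=\kappa_{n,i,\gamma}\sum_{j,k=1}^{i-1}G_n(\bZ_j,\bZ_k)$ and swap the order of summation to get
\begin{align*}
\sum_{i=2}^n\E\big[W_{n,i,\gamma}^2\mid\cF_{n,i-1}\big]=\sum_{j=1}^{n-1}G_n(\bZ_j,\bZ_j)\,\Lambda_j+\sum_{1\le j\neq k\le n-1}G_n(\bZ_j,\bZ_k)\,\Lambda_{j\vee k},\qquad \Lambda_m:=\sum_{i=m+1}^n\kappa_{n,i,\gamma}.
\end{align*}
Under $\bH_0$, identity \eqref{eq:HbarK} gives $\E_{\bZ_j}[\sfH_n(\bz,\bZ_j)]=0$, hence $\E[G_n(\bZ_1,\bZ_1)]=\E[\sfH_n(\bZ_1,\bZ_2)^2]=4M_n$, $\E[G_n(\bZ_j,\bZ_k)]=0$ for $j\neq k$, and $\E_{\bZ_k}[G_n(\bz,\bZ_k)]=0$ for every $\bz$. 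Since $4M_n\sum_j\Lambda_j=\E[\sum_iW_{n,i,\gamma}^2]$, the compensator minus $1$ splits as $b_n+A_n+B_n$ with $b_n:=\E[\sum_iW_{n,i,\gamma}^2]-1$ deterministic, $A_n:=\sum_j(G_n(\bZ_j,\bZ_j)-4M_n)\Lambda_j$, and $B_n:=\sum_{j\neq k}G_n(\bZ_j,\bZ_k)\Lambda_{j\vee k}$. The centering identities make $\E[A_n]=\E[B_n]=\E[A_nB_n]=0$ and render distinct pairs in $B_n$ uncorrelated, so the compensator term equals $b_n^2+\E[A_n^2]+\E[B_n^2]$, with $\E[A_n^2]=\Var(G_n(\bZ_1,\bZ_1))\sum_j\Lambda_j^2$ and $\E[B_n^2]=4\,\E[G_n(\bZ_1,\bZ_2)^2]\sum_j(j-1)\Lambda_j^2$.

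It then remains to bound the deterministic sums and the two kernel moments. Euler--Maclaurin/Riemann-sum estimates in each $\gamma$-regime give $b_n^2\lesssim t_{n,\gamma,2}$ (and $b_n^2\lesssim(\log n)^{-2}$ at $\gamma=1$), $\sum_j\Lambda_j^2\lesssim t_{n,\gamma,2}/M_n^2$ uniformly in $\gamma$, and $\sum_j(j-1)\Lambda_j^2\lesssim 1/M_n^2$ for $\gamma<1$ (with an extra $\log n$ in the denominator at $\gamma=1$). Expanding $G_n$ via \eqref{eq:HbarK} and integrating out the independent coordinates of $\bZ$ under $\bH_0$ yields $\E[G_n(\bZ_1,\bZ_2)^2]=16\,\E[\E[\bsfK_n(X_1,X_2)\bsfK_n(X_1,X_3)\mid X_2,X_3]^2]$, hence $\E[B_n^2]\lesssim R_{n,\gamma,2}$ for $\gamma<1$ (and, using only $\E[G_n(\bZ_1,\bZ_2)^2]\le 16M_n^2$, $\E[B_n^2]\lesssim(\log n)^{-1}=R_{n,1,2}$ at $\gamma=1$). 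Together with the fourth-moment term and \eqref{eq:def_Rn2} this accounts for everything except $\E[A_n^2]$, which needs a bound on $\Var(G_n(\bZ_1,\bZ_1))$.

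The hard part will be getting this last bound sharp enough. The naive estimate $\Var(G_n(\bZ_1,\bZ_1))\le\E[\sfH_n(\bZ_1,\bZ_2)^4]\lesssim\E[\bsfK_n(X_1,X_2)^4]+\E[\bsfK_n(X_1,X_2)^2\bsfK_n(X_1,X_3)^2]$ is too weak: combined with $\sum_j\Lambda_j^2\lesssim t_{n,\gamma,2}/M_n^2$ it contributes $t_{n,\gamma,2}\,\E[\bsfK_n(X_1,X_2)^4]/M_n^2$, which $R_{n,\gamma,1}$ does not control (there $\E[\bsfK_n(X_1,X_2)^4]$ carries only the smaller weight $t_{n,\gamma,1}$) once the kernel varies with $n$. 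Instead I would exploit the Hoeffding-type structure of $G_n$: under $\bH_0$, \eqref{eq:HbarK} gives $G_n(\bz,\bz)=2\big(\psi_n(x)+\psi_n(y)-2\Psi_n(x,y)\big)$ with $\psi_n(u):=\E[\bsfK_n(X_1,u)^2]$ and $\Psi_n(u,v):=\E[\bsfK_n(X_1,u)\bsfK_n(X_1,v)]$, and then conditional independence of $X_2,X_3$ given $X_1$ yields $\E[\psi_n(X_1)^2]=\E[\bsfK_n(X_1,X_2)^2\bsfK_n(X_1,X_3)^2]$ and $\E[\Psi_n(X_1,X_2)^2]\le M_n^2$ (improving to $\le M_n^2R_{n,\gamma,2}$ for $\gamma<1$). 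This gives $\Var(G_n(\bZ_1,\bZ_1))\lesssim\E[\bsfK_n(X_1,X_2)^2\bsfK_n(X_1,X_3)^2]+M_n^2\min(1,R_{n,\gamma,2})$, so that $\E[A_n^2]\lesssim t_{n,\gamma,2}\,\E[\bsfK_n(X_1,X_2)^2\bsfK_n(X_1,X_3)^2]/M_n^2+R_{n,\gamma,2}\lesssim R_{n,\gamma,1}+R_{n,\gamma,2}$, exactly the bound needed. The rest is routine regime-by-regime Riemann-sum bookkeeping, together with the separate, logarithmic treatment of the boundary $\gamma=1$ where $\kappa_{n,i,1}\propto i^{-2}$ makes the relevant sums involve $\log n$.
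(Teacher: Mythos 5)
Your argument is correct, and it reaches the stated bound by a different decomposition than the paper's. The paper proves the lemma by brute force: it expands $\E[(\sum_i W_{n,i,\gamma}^2)^2]-2\sum_i\E[W_{n,i,\gamma}^2]+1$, classifies the inner index tuples by their number of distinct values (the terms $L_{1,n,\gamma}$, $L_{2,n,\gamma}$, $L_{2,n,\gamma}^\circ$ and the deterministic remainder $\zeta_{n,\gamma}$), and discards the three- and four-distinct-index classes using $\E[\sfH_n(\bZ_1,\bZ_2)\mid\bZ_1]=0$. You instead pass to the predictable compensator (paying $\sum_i\E[W_{n,i,\gamma}^4]$, exactly as in \eqref{eq:replace_W_EWcF}), write it as a quadratic form in $G_n(\bZ_j,\bZ_k)$ with weights $\Lambda_{j\vee k}$, and split it into the scalar centering $b_n$, the independent mean-zero diagonal part $A_n$, and the degenerate off-diagonal part $B_n$; the centering identities then make these three pieces mutually orthogonal, which replaces the paper's tuple-counting. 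The underlying computations coincide: your identity $\E[G_n(\bZ_1,\bZ_2)^2]=16\,\E\big[\E[\bsfK_n(X_1,X_2)\bsfK_n(X_1,X_3)\mid X_2,X_3]^2\big]$ is the paper's \eqref{eq:H4Kbar4} after integrating out $\bZ_2,\bZ_4$ from the cyclic product, your refined bound on $\Var(G_n(\bZ_1,\bZ_1))$ via $\psi_n$ and $\Psi_n$ plays the role of $L_{1,n,\gamma}$ combined with \eqref{eq:Hn22bddKn22}, your Cauchy--Schwarz step at $\gamma=1$ is the final step of \eqref{eq:L2_diff_bdd}, and your deterministic estimates $b_n^2\lesssim t_{n,\gamma,2}$, $\sum_j\Lambda_j^2\lesssim t_{n,\gamma,2}/M_n^2$ and $\sum_j(j-1)\Lambda_j^2\lesssim M_n^{-2}$ (with the extra $\log n$ at $\gamma=1$), though only asserted, do hold in every regime by the same integral bounds as Lemma~\ref{lemma:sumip} and Lemma~\ref{lemma:i_2alpha_convg_rate}, taking over the role of $\zeta_{n,\gamma}$. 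What your organization buys is structural clarity: orthogonality kills all cross terms at once, the deterministic part collapses to the single scalar $b_n$, and your explicit warning that the naive bound $\Var(G_n(\bZ_1,\bZ_1))\le\E[\sfH_n(\bZ_1,\bZ_2)^4]$ would charge $\E[\bsfK_n(X_1,X_2)^4]$ with weight $t_{n,\gamma,2}$ instead of the smaller $t_{n,\gamma,1}$ allowed by $R_{n,\gamma,1}$ is exactly the right caution and is resolved correctly by the $\psi_n/\Psi_n$ split. To turn the proposal into a complete proof you only need to write out the regime-by-regime weight estimates, which are routine.
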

The proof of Lemma~\ref{lemma:2nd_term_bdd} is postponed to Section~\ref{sec:proofof_second_term}. The proof of Theorem~\ref{thm:berry_essen_H0} is now completed by recalling the bound on $N_{n,\gamma}$ from \eqref{eq:bdd_N_alpha} and substituting the upper bounds on the individual terms from Lemma~\ref{lemma:4thmmntbdd} and Lemma~\ref{lemma:2nd_term_bdd}.

\subsubsection{Proof of Lemma~\ref{lemma:4thmmntbdd}}\label{sec:proof_of_4thmmntbdd}
By definition, if $0\leq \gamma< 1$,
\small
\begin{align*}
    \sum_{i=2}^{n}\E\left[W_{n,i,\gamma}^4\right]
    & = \sum_{i=2}^{n}\frac{1}{n^{4(1-\gamma)}M_n^2i^{4\gamma}c_\gamma^2}\sum_{j_1,j_2,j_3,j_4=1}^{i-1}\E\left[\prod_{\ell=1}^{4}\sfH_n(\bZ_{i}, \bZ_{j_{\ell}})\right]\\
    & = \sum_{i=2}^{n}\frac{1}{n^{4(1-\gamma)}M_n^2i^{4\gamma}c_\gamma^2}\left[\sum_{j=1}^{i-1}\E\left[\sfH_n(\bZ_i, \bZ_j)^4\right] + \sum_{j_1\neq j_2 = 1}^{i-1}\E\left[\sfH(\bZ_i, \bZ_{j_1})^2\sfH(\bZ_i, \bZ_{j_2})^2\right]\right],
\end{align*}
\normalsize
where the last equality follows by recalling that under null $\E\left[\sfH(\bZ_1,\bZ_2)\mid \bZ_1\right] = 0$. Continuing with the chain of equalities
\small
\begin{align}\label{eq:4thmomentbdd}
    \sum_{i=2}^{n}\E\left[W_{n,i,\gamma}^4\right]
    & = \sum_{i=2}^{n}\frac{1}{n^{4(1-\gamma)}M_n^2i^{4\gamma}c_\gamma^2}\bigg[(i-1)\E\left[\sfH_n(\bZ_1, \bZ_2)^4\right] \nonumber\\
    &\hspace{130pt} + (i-1)(i-2)\E\left[\sfH(\bZ_1, \bZ_2)^2\sfH(\bZ_1, \bZ_3)^2\right]\bigg]\nonumber\\
    &\lesssim \frac{\E\left[\sfH_n(\bZ_1, \bZ_2)^4\right]}{M_n^2}\sum_{i=2}^{n}\frac{i^{1-4\gamma}}{n^{4(1-\gamma)}} + \frac{\E\left[\sfH(\bZ_1, \bZ_2)^2\sfH(\bZ_1, \bZ_3)^2\right]}{M_n^2}\sum_{i=2}^{n}\frac{i^{2-4\gamma}}{n^{4(1-\gamma)}}.
\end{align}
\normalsize
Now by Lemma~\ref{lemma:sumip} we get
\small
\begin{align*}
    \sum_{i=2}^{n}\frac{i^{1-4\gamma}}{n^{4(1-\gamma)}} = 
    \begin{cases}
        \Theta(n^{-2}) & \text{ if }0\leq \gamma<\frac{1}{2}\\
        \Theta\left(\frac{\log n}{n^2}\right) & \text{ if }\gamma = \frac{1}{2}\\
        \Theta\left(n^{-4(1-\gamma)}\right) & \text{ if }\frac{1}{2}<\gamma< 1.
    \end{cases}
    \text{ and }
    \sum_{i=2}^{n}\frac{i^{2-4\gamma}}{n^{4(1-\gamma)}} = 
    \begin{cases}
        \Theta(n^{-1}) & \text{ if }0\leq \gamma<\frac{3}{4}\\
        \Theta\left(\frac{\log n}{n}\right) & \text{ if }\gamma = \frac{3}{4}\\
        \Theta\left(n^{-4(1-\gamma)}\right) & \text{ if }\frac{3}{4}<\gamma< 1.
    \end{cases}.
\end{align*}
\normalsize
Hence recalling the bound from \eqref{eq:4thmomentbdd} and notation from \eqref{eq:tnalpha} we get
\begin{align}\label{eq:4thbddless1}
    \sum_{i=2}^{n}\E\left[W_{n,i,\gamma}^4\right] \lesssim t_{n,\gamma,1}\frac{\E\left[\sfH_n(\bZ_1, \bZ_2)^4\right]}{M_n^2} + t_{n,\gamma,2}\frac{\E\left[\sfH(\bZ_1, \bZ_2)^2\sfH(\bZ_1, \bZ_3)^2\right]}{ M_n^2}.
\end{align}
Furthermore, if $\gamma = 1$, then similar computations as above shows
\begin{align}\label{eq:4thbdd1}
    \sum_{i=2}^{n}\E\left[W_{n,i,\gamma}^4\right]  
    & \lesssim \frac{\E\left[\sfH_n(\bZ_1, \bZ_2)^4\right]}{M_n^2}\sum_{i=2}^{n}\frac{i^{-3}}{(\log n)^2} + \frac{\E\left[\sfH(\bZ_1, \bZ_2)^2\sfH(\bZ_1, \bZ_3)^2\right]}{M_n^2}\sum_{i=2}^{n}\frac{i^{-2}}{(\log n)^2}\nonumber\\
    & \lesssim\frac{\E\left[\sfH_n(\bZ_1, \bZ_2)^4\right]}{(\log n)^2M_n^2} + \frac{\E\left[\sfH(\bZ_1, \bZ_2)^2\sfH(\bZ_1, \bZ_3)^2\right]}{(\log n)^2 M_n^2}.
\end{align}
The proof is now completed by noting that under null, $\E\left[\sfH_n(\bZ_1, \bZ_2)^4\right]\lesssim\E\left[\bsfK_n(X_1,X_2)^4\right]$ and 
\begin{align}\label{eq:Hn22bddKn22}
    \E\left[\sfH_n(\bZ_1, \bZ_2)^2\sfH_n(\bZ_1, \bZ_3)^2\right]\lesssim \E\left[\bsfK_n(X_1, X_2)^2\bsfK_n(X_1, X_3)^2\right] + M_n^2.
\end{align}

\subsubsection{Proof of Lemma~\ref{lemma:2nd_term_bdd}}\label{sec:proofof_second_term}
By expanding the square,
\begin{align}\label{eq:quad_form_1}
    \E\left[\left|\sum_{i=2}^{n}W_{n,i,\gamma}^2 - 1\right|^2\right] = \E\left[\left(\sum_{i=2}^{n}W_{n,i,\gamma}^2\right)^2\right] - 2\sum_{i=2}^{n}\E\left[W_{n,i,\gamma}^2\right] + 1.
\end{align}
Note that if $0\leq\gamma<1$, then
\begin{align}\label{eq:expand2ndmmnt}
    \sum_{i=2}^{n}\E\left[W_{n,i,\gamma}^2\right] 
    & = \sum_{i=2}^{n}\frac{1}{n^{2(1-\gamma)}M_ni^{2\gamma}c_\gamma}\E\left[\sum_{j_1,j_2 = 1}^{i-1}\sfH_n(\bZ_i,\bZ_{j_1})\sfH_n(\bZ_i, \bZ_{j_2})\right]\nonumber\\
    & = \frac{\E\left[\sfH_n(\bZ_1,\bZ_2)^2\right]}{M_nc_\gamma}\frac{1}{n^{2(1-\gamma)}}\sum_{i=2}^{n}\frac{i-1}{i^{2\gamma}}.
\end{align}
Moreover, expanding the term $\E\left[\sfH_n(\bZ_1, \bZ_2)^2\right]$ note that
\begin{align}\label{eq:evalH2}
    \E\left[\sfH_n(\bZ_1, \bZ_2)^2\right] 
    & = 4\E\left[\bsfK(X_1, X_2)^2\right] - 8\E\left[\bsfK(X_1, X_2)\bsfK(X_1, X_3)\right] = 4\E\left[\bsfK(X_1, X_2)^2\right]
\end{align}
where the last equality follows by noticing that $\E\left[\bsfK_n(X_1, X_2)\mid X_1\right] = 0$.
Hence for $0\leq \gamma<1$ from \eqref{eq:expand2ndmmnt} and \eqref{eq:evalH2} we conclude
\begin{align*}
    \sum_{i=2}^{n}\E\left[W_{n,i,\gamma}^2\right] = \frac{4}{c_\gamma}\frac{1}{n^{2(1-\gamma)}}\sum_{i=2}^{n}\frac{i-1}{i^{2\gamma}}.
\end{align*}
Similar computations for $\gamma = 1$ shows
\begin{align*}
    \sum_{i=2}^{n}\E\left[W_{n,i,1}^2\right] = \frac{1}{c_1\log n}\sum_{i=2}^{n}\frac{i-1}{i^2}.
\end{align*}
Then substituting the values in \eqref{eq:quad_form_1} we get
\small
\begin{align}\label{eq:Wnialpha_1_sq_decomp}
    \E\left[\left|\sum_{i=2}^{n}W_{n,i,\gamma}^2 - 1\right|^2\right] = 
    \begin{cases}
        \E\left[\left(\sum_{i=2}^{n}W_{n,i,\gamma}^2\right)^2\right] - \frac{8}{c_\gamma}\frac{1}{n^{2(1-\gamma)}}\sum_{i=2}^{n}\frac{i-1}{i^{2\gamma}} + 1 & \text{ if }0\leq \gamma<1\\
        \E\left[\left(\sum_{i=2}^{n}W_{n,i,\gamma}^2\right)^2\right] - \frac{2}{c_1\log n}\sum_{i=2}^{n}\frac{i-1}{i^2} + 1 & \text{ if }\gamma = 1.
    \end{cases}
\end{align}
\normalsize
Expanding the first term shows
\begin{align}\label{eq:Wnialpha_1_sq_expand}
    \E\left[\left(\sum_{i=2}^{n}W_{n,i,\gamma}^2\right)^2\right] = \sum_{i=2}^{n}\E\left[W_{n,i,\gamma}^4\right] + 2\sum_{i_1<i_2}\E\left[W_{n,i_1,\gamma}^2W_{n,i_2,\gamma}^2\right].
\end{align}
In the above expansion, note that applying Lemma~\ref{lemma:4thmmntbdd}, the first term can be upper bounded by $R_{n,\gamma,1}$. Consequently in the following we further expand the second term $\sum_{i_1<i_2}\E\left[W_{n,i_1,\gamma}^2W_{n,i_2,\gamma}^2\right]$. To that end for $1\leq m\leq 4$ we begin by defining the set of ordered tuples,
\begin{align*}
    \cS_{i_1,i_2,m} = \{(k_1,k_2, \ell_1,\ell_2): \left|\{j_1, j_2, \ell_1, \ell_2\}\right| = m, 1\leq j_1, j_2\leq i_1-1, 1\leq \ell_1, \ell_2\leq i_2-1\},
\end{align*}
where $\cS_{i_1,i_2,m}$ collects all the ordered four-tuples such that the first two coordinated can take values at most $i_1-1$, the second two coordinates can take values at most $i_2-1$ and there are exactly $m$ many distinct values in the tuple. Finally for $1\leq m\leq 4$ define,
\begin{align}\label{eq:def_L_mnalpha}
    L_{m,n,\gamma} = 
    \begin{cases}
        \frac{1}{n^4(1-\gamma)M_n^2}\displaystyle{\sum_{i_1<i_2}}\frac{1}{i_1^{2\gamma}i_2^{2\gamma}}\sum_{\cS_{i_1,i_2, m}}\E\left[\prod_{t=1}^2\sfH_n(\bZ_{i_1}, \bZ_{j_t})\sfH_n(\bZ_{i_2}, \bZ_{\ell_t})\right]& \text{ if }0\leq \gamma<1\\
        \frac{1}{(M_n\log n)^2}\displaystyle{\sum_{i_1<i_2}}\frac{1}{i_1^{2}i_2^{2}}\sum_{\cS_{i_1,i_2, m}}\E\left[\prod_{t=1}^2\sfH_n(\bZ_{i_1}, \bZ_{j_t})\sfH_n(\bZ_{i_2}, \bZ_{\ell_t})\right]& \text{ if }\gamma=1.
    \end{cases}
\end{align}
Then, by definition we have the following expansion.
\begin{align}\label{eq:Wnproddecomp}
    \sum_{i_1< i_2 = 2}^{n}\E\left[W_{n,i_1,\gamma}^2W_{n,i_2,\gamma}^2\right] = \frac{1}{c_\gamma^2}\sum_{m=1}^4L_{m,n,\gamma} = \frac{1}{c_\gamma^2}\sum_{m=1}^2L_{m,n,\gamma},
\end{align}
where the final equality follows by noticing that under $\bH_0$, $\E\left[\sfH_n(\bZ_1,\bZ_2)\mid \bZ_1\right] = 0$ almost surely. Next, we further decompose the term $L_{2,n,\gamma}$. Define
\begin{align*}
    \cS_{i_1,i_2,2}^\circ = \left\{(j_1, j_2,\ell_1, \ell_2): j_1 = j_2 = j, \ell_1=\ell_2=\ell, 1\leq k\leq i_1-1, 1\leq \ell\leq i_2-1, k\neq \ell\right\},
\end{align*}
Then, by construction $\cS_{i_1,i_2,2}^\circ\subseteq\cS_{i_1,i_2,2}$. Let
\begin{align}\label{eq:def_L2_circ}
    L_{2,n,\gamma}^\circ = 
    \begin{cases}
        \frac{1}{n^4(1-\gamma)M_n^2}\displaystyle{\sum_{i_1<i_2}}\frac{1}{i_1^{2\gamma}i_2^{2\gamma}}\sum_{\cS_{i_1,i_2, 2}^\circ}\E\left[\prod_{t=1}^2\sfH_n(\bZ_{i_1}, \bZ_{j_t})\sfH_n(\bZ_{i_2}, \bZ_{\ell_t})\right]& \text{ if }0\leq \gamma<1\\
        \frac{1}{(M_n\log n)^2}\displaystyle{\sum_{i_1<i_2}}\frac{1}{i_1^{2}i_2^{2}}\sum_{\cS_{i_1,i_2, 2}^\circ}\E\left[\prod_{t=1}^2\sfH_n(\bZ_{i_1}, \bZ_{j_t})\sfH_n(\bZ_{i_2}, \bZ_{\ell_t})\right]& \text{ if }\gamma=1.
    \end{cases}
\end{align}
We now decompose $L_{2,n,\gamma}$ as $L_{2,n,\gamma} = L_{2,n,\gamma}^\circ + (L_{2,n,\gamma} - L_{2,n,\gamma}^\circ)$. With the notations in place, combining \eqref{eq:Wnialpha_1_sq_decomp}, \eqref{eq:Wnialpha_1_sq_expand} and \eqref{eq:Wnproddecomp} we have
\begin{align}\label{eq:Wnialpha_1_sq_terms}
    \E\left[\left|\sum_{i=2}^{n}W_{n,i,\gamma}^2 - 1\right|^2\right] = \sum_{i=2}^{n}\E\left[W_{n,i,\gamma}^4\right] + \frac{2}{c_\gamma^2}L_{1,n,\gamma} + \frac{2}{c_\gamma^2}(L_{2,n,\gamma} - L_{2,n,\gamma}^\circ) + \zeta_{n,\gamma},
\end{align}
where
\begin{align*}
    \zeta_{n,\gamma} = 
    \begin{cases}
        \frac{2}{c_\gamma^2}L_{2,n,\gamma}^\circ - \frac{8}{c_\gamma}\frac{1}{n^{2(1-\gamma)}}\sum_{i=2}^{n}\frac{i-1}{i^{2\gamma}} + 1 & \text{ if }0\leq \gamma<1\\
        \frac{2}{c_\gamma^2}L_{2,n,\gamma}^\circ - \frac{2}{c_1\log n}\sum_{i=2}^{n}\frac{i-1}{i^2} + 1 & \text{ if }\gamma = 1.
    \end{cases}
\end{align*}
In the following we complete the proof by establishing bound on the individual terms in \eqref{eq:Wnialpha_1_sq_terms}. Note that by Lemma~\ref{lemma:4thmmntbdd} we already know that $\sum_{i=2}^{n}\E\left[W_{n,i,\gamma}^4\right]\lesssim R_{n,\gamma,1}$. Next, we bound the term $L_{1,n,\gamma}$. 

\paragraph{Bound on $L_{1,n,\gamma}$:}

For $L_{1,n,\gamma}$, note that for any $i_1<i_2$ with $(j_1, j_2, \ell_1,\ell_2)\in \cS_{i_1,i_2,1}$,
\begin{align*}
    \E\left[\prod_{t=1}^2\sfH_n(\bZ_{i_1}, \bZ_{j_t})\sfH_n(\bZ_{i_2}, \bZ_{\ell_t})\right] = \E\left[\sfH_n(\bZ_1,\bZ_2)^2\sfH_n(\bZ_1,\bZ_3)^2\right],
\end{align*}
and $|\cS_{i_1, i_2, 1}| = i_1-1$. Then for $0\leq \gamma<1$,
\begin{align}\label{eq:L1alpha0}
    L_{1,n,\gamma} 
    & \lesssim \frac{\E\left[\sfH_n(\bZ_1,\bZ_2)^2\sfH_n(\bZ_1,\bZ_3)^2\right]}{M_n^2}\frac{1}{n^{4(1-\gamma)}}\sum_{i_1<i_2}\frac{i_1-1}{i_1^{2\gamma}i_2^{2\gamma}}\nonumber\\
    &\lesssim\frac{\E\left[\sfH_n(\bZ_1,\bZ_2)^2\sfH_n(\bZ_1,\bZ_3)^2\right]}{M_n^2}\frac{1}{n^{4(1-\gamma)}}\sum_{i_2 = 2}^{n}\frac{1}{i_2^{2\gamma}}\sum_{i_1=1}^{i_2-1}i_1^{1-2\gamma}\nonumber\\
    &\lesssim\frac{\E\left[\sfH_n(\bZ_1,\bZ_2)^2\sfH_n(\bZ_1,\bZ_3)^2\right]}{M_n^2}t_{n,\gamma,2},
\end{align}
where the final bound follows from Lemma~\ref{lemma:sumip}. Similarly for $\gamma = 1$ we get
\begin{align}\label{eq:L110}
    L_{1,n,\gamma} 
    & \lesssim \frac{\E\left[\sfH_n(\bZ_1,\bZ_2)^2\sfH_n(\bZ_1,\bZ_3)^2\right]}{M_n^2}\frac{1}{(\log n)^2}\sum_{i_2 = 2}^{n}\frac{1}{i_2^{2}}\sum_{i_1=1}^{i_2-1}i_1^{-1}\nonumber\\
    & \lesssim\frac{\E\left[\sfH_n(\bZ_1,\bZ_2)^2\sfH_n(\bZ_1,\bZ_3)^2\right]}{M_n^2}t_{n,1,2}.
\end{align}
Hence recalling \eqref{eq:Hn22bddKn22} and using the bounds \eqref{eq:L1alpha0} and \eqref{eq:L110} we conclude
\small
\begin{align}\label{eq:L_1_n_alpha_bdd}
    L_{1,n,\gamma}\lesssim \frac{\E\left[\sfH_n(\bZ_1,\bZ_2)^2\sfH_n(\bZ_1,\bZ_3)^2\right]}{M_n^2}t_{n,\gamma,2}\lesssim \frac{\E\left[\bsfK_n(X_1,X_2)^2\bsfK_n(X_1, X_3)^2\right]}{M_n^2}t_{n,\gamma,2} + t_{n,\gamma,2}.
\end{align}
\normalsize
In the following we bound the term $\zeta_{n,\gamma}$.

\paragraph{Bound on $\zeta_{n,\gamma}$:} To bound $\zeta_{n,\gamma}$ we first simplify the term $L_{2,n,\gamma}^\circ$. If $0\leq \gamma<1$ then by construction,
\begin{align}\label{eq:L2_circ_alpha_equal}
    L_{2,n,\gamma}^\circ 
    & = \frac{1}{n^{4(1-\gamma)}M_n^2}\sum_{i_1<i_2}\frac{1}{i_1^{2\gamma}i_2^{2\gamma}}\sum_{j=1}^{i_1-1}\sum_{\ell=1}^{i_2-1}\E\left[\sfH_n(\bZ_j,\bZ_\ell)^2\right]^2\one\left\{j\neq \ell\right\}\nonumber\\
    & = \frac{16}{n^{4(1-\gamma)}}\sum_{i_1<i_2}\frac{(i_1-1)(i_2-2)}{i_1^{2\gamma}i_2^{2\gamma}},
\end{align}
where the last equality follows from \eqref{eq:evalH2}. Similarly if $\gamma = 1$, then
\begin{align}\label{eq:L2_circ_1_equal}
    L_{2,n,\gamma}^\circ 
    & = \frac{16}{(\log n)^2}\sum_{i_1<i_2}\frac{(i_1-1)(i_2-2)}{i_1^{2}i_2^{2}}.
\end{align}
Now for $0\leq \gamma<1$ with the identity from \eqref{eq:L2_circ_alpha_equal} notice,
\begin{align*}
    \zeta_{n,\gamma} 
    & = \frac{2}{c_\gamma^2}L_{2,n,\gamma}^\circ - \frac{8}{c_\gamma}\frac{1}{n^{2(1-\gamma)}}\sum_{i=2}^{n}\frac{i-1}{i^{2\gamma}} + 1\\
    & = \left(\frac{4}{c_\gamma n^{2(1-\gamma)}}\sum_{i=2}^{n}\frac{i-1}{i^{2\gamma}} - 1\right)^2 - \frac{32}{c_\gamma^2n^{4(1-\gamma)}}\sum_{i<j}\frac{(i-1)}{i^{2\gamma}j^{2\gamma}} - \frac{16}{c_\gamma^2n^{4(1-\gamma)}}\sum_{i=2}^{n}\left(\frac{i-1}{i^{2\gamma}}\right)^2.
\end{align*}
By Lemma~\ref{lemma:sumip} we get
\begin{align*}
    \frac{1}{n^{4(1-\gamma)}}\sum_{i=2}^{n}\left(\frac{i-1}{i^{2\gamma}}\right)^2 \lesssim 
    \begin{cases}
        1/n & \text{ if }0\leq \gamma<3/4\\
        \log n/n & \text{ if }\gamma = 3/4\\
        1/n^{4(1-\gamma)} & \text{ if }3/4<\gamma<1.
    \end{cases}
\end{align*}
Similarly applying Lemma~\ref{lemma:sumip} repeatedly we find
\begin{align*}
    \frac{1}{n^{4(1-\gamma)}}\sum_{i<j}\frac{(i-1)}{i^{2\gamma}j^{2\gamma}}\lesssim
    \begin{cases}
        1/n & \text{ if }0\leq \gamma<3/4\\
        \log n/n & \text{ if }\gamma = 3/4\\
        1/n^{4(1-\gamma)} & \text{ if }3/4<\gamma<1.
    \end{cases}
\end{align*}
Furthermore, from Lemma~\ref{lemma:i_2alpha_convg_rate} and recalling definition of $c_\gamma$ from \eqref{eq:def_c_alpha} we know
\begin{align*}
        \left|\frac{4}{c_\gamma n^{2(1-\gamma)}}\sum_{i=2}^{n}\frac{i-1}{i^{2\gamma}} - 1\right|\lesssim
        \begin{cases}
            \frac{1}{n} & \text{ if }0\leq \gamma<1/2\\
            \frac{\log n}{n} & \text{ if }\gamma = 1/2\\
            \frac{1}{n^{2(1-\gamma)}} & \text{ if }1/2<\gamma<1
        \end{cases}.
\end{align*}
Combining we conclude
\begin{align*}
    \left|\frac{2}{c_\gamma^2}L_{2,n,\gamma}^\circ - \frac{8}{c_\gamma}\frac{1}{n^{2(1-\gamma)}}\sum_{i=2}^{n}\frac{i-1}{i^{2\gamma}} + 1\right|\lesssim t_{n,\gamma,2} = 
    \begin{cases}
        1/n & \text{ if }0\leq \gamma<3/4\\
        \log n/n & \text{ if }\gamma = 3/4\\
        1/n^{4(1-\gamma)} & \text{ if }3/4<\gamma<1
    \end{cases}.
\end{align*}
Similarly for $\gamma = 1$, using the fact $\sum_{i=1}^{n}\frac{1}{i} = \log n + O(1)$ (see Theorem 1 from \cite{apostol1999elementary}) and using the identity from \eqref{eq:L2_circ_1_equal} we get,
\begin{align*}
    \left|\frac{2}{c_\gamma^2}L_{2,n,\gamma}^\circ - \frac{8}{c_\gamma}\frac{1}{\log n}\sum_{i=2}^{n}\frac{i-1}{i^{2}} + 1\right|\lesssim\frac{1}{(\log n)^2} = t_{n,1,2}.
\end{align*}
Hence we conclude that for all $0\leq \gamma\leq 1$,
\begin{align}\label{eq:zeta_bdd}
    \zeta_{n,\gamma}\lesssim t_{n,\gamma,2}.
\end{align}
Finally in the following we bound the difference $L_{2,n,\gamma} - L_{2,n,\gamma}^\circ$.

\paragraph{Bound on $L_{2,n,\gamma} - L_{2,n,\gamma}^\circ$}
To bound the difference $L_{2,n,\gamma} - L_{2,n,\gamma}^\circ$ note that,
\small 
\begin{align*}
    & |L_{2,n,\gamma} - L_{2,n,\gamma}^\circ|\\
    & \lesssim 
    \begin{cases}
        \frac{1}{n^{4(1-\gamma)}M_n^2}\displaystyle{\sum_{i_1<i_2}}\frac{1}{i_1^{2\gamma}i_2^{2\gamma}}\displaystyle{\sum_{j\neq\ell=1}^{i_1-1}}\left|\E\left[\sfH_n(\bZ_1,\bZ_2)\sfH_n(\bZ_2,\bZ_3)\sfH_n(\bZ_3,\bZ_4)\sfH_n(\bZ_4,\bZ_1)\right]\right| & 0\leq \gamma<1\\
        \frac{1}{(\log n)^2M_n^2}\sum_{i_1<i_2}\frac{1}{i_1^{2}i_2^{2}}\displaystyle{\sum_{j\neq\ell=1}^{i_1-1}}\left|\E\left[\sfH_n(\bZ_1,\bZ_2)\sfH_n(\bZ_2,\bZ_3)\sfH_n(\bZ_3,\bZ_4)\sfH_n(\bZ_4,\bZ_1)\right]\right| & \gamma = 1.
    \end{cases}
\end{align*}
\normalsize
The inner expectation evaluates to
\small
\begin{align}\label{eq:H4Kbar4}
    \E\bigg[\sfH(\bZ_1,\bZ_2)
    &\sfH(\bZ_2,\bZ_3)\sfH(\bZ_3,\bZ_4)\sfH(\bZ_4,\bZ_1)\bigg] = 16\E\left[\E\left[\bsfK(X_1, X_2)\bsfK(X_1,X_3)\mid X_2, X_3\right]^2\right].
\end{align}
\normalsize
The proof of \eqref{eq:H4Kbar4} is postponed to Section~\ref{sec:proofofH4Kbar4}. Then,
\small
\begin{align*}
    |L_{2,n,\gamma} - L_{2,n,\gamma}^\circ|
    & \lesssim 
    \begin{cases}
        \frac{1}{n^{4(1-\gamma)}M_n^2}\displaystyle{\sum_{i_1<i_2}}\frac{i_1^2}{i_1^{2\gamma}i_2^{2\gamma}}\E\left[\E\left[\bsfK(X_1, X_2)\bsfK(X_1,X_3)\mid X_2, X_3\right]^2\right] & 0\leq \gamma<1\\
        \frac{1}{(\log n)^2M_n^2}\sum_{i_1<i_2}\frac{1}{i_2^{2}}\E\left[\E\left[\bsfK(X_1, X_2)\bsfK(X_1,X_3)\mid X_2, X_3\right]^2\right] & \gamma = 1.
    \end{cases}
\end{align*}
\normalsize
Applying Lemma~\ref{lemma:sumip} we can simplify the above bound as,
\begin{align}\label{eq:L2_diff_bdd}
    |L_{2,n,\gamma} - L_{2,n,\gamma}^\circ|
    &\lesssim
    \begin{cases}
        \frac{\E\left[\E\left[\bsfK(X_1, X_2)\bsfK(X_1,X_3)\mid X_2, X_3\right]^2\right]}{M_n^2} & \text{ if }0\leq \gamma <1 \\
        \frac{\E\left[\E\left[\bsfK(X_1, X_2)\bsfK(X_1,X_3)\mid X_2, X_3\right]^2\right]}{M_n^2\log n} & \text{ if }\gamma = 1
    \end{cases}\nonumber\\
    &\lesssim
    \begin{cases}
        \frac{\E\left[\E\left[\bsfK(X_1, X_2)\bsfK(X_1,X_3)\mid X_2, X_3\right]^2\right]}{M_n^2} & \text{ if }0\leq \gamma <1\\
        \frac{1}{\log n} & \text{ if }\gamma = 1,
    \end{cases}
\end{align}
where the final bound follows from Cauchy-Schwartz inequality and recalling that $M_n = \E\left[\bsfK(X_1, X_2)^2\right]$. The proof of Lemma~\ref{lemma:2nd_term_bdd} is now completed by recalling the decomposition from \eqref{eq:Wnialpha_1_sq_terms} and combining the individual bounds from Lemma~\ref{lemma:4thmmntbdd}, \eqref{eq:L_1_n_alpha_bdd}, \eqref{eq:zeta_bdd} and \eqref{eq:L2_diff_bdd}.

\subsubsection{Proof of (\ref{eq:H4Kbar4}):}\label{sec:proofofH4Kbar4}
Let
\begin{align*}
    \bZ_i^{(0)} = X_i\text{ and }\bZ_i^{(1)} = Y_i\text{ for all }1\leq i\leq 4.
\end{align*}
Recalling \eqref{eq:HbarK} we get
\begin{align*}
    \sfH_n(\bZ_i,\bZ_j) = \sum_{\vep_i,\vep_j\in \{0,1\}}(-1)^{\vep_i+\vep_j}\bsfK_n\left(\bZ_i^{(\vep_i)},\bZ_j^{(\vep_j)}\right).
\end{align*}
Then,
\footnotesize
\begin{align*}
    &\sfH_n(\bZ_1,\bZ_2)\sfH_n(\bZ_2,\bZ_3)\sfH_n(\bZ_3,\bZ_4)\sfH_n(\bZ_4,\bZ_1)\\
    & = \sum_{\substack{\vep_1,\vep_2, \vep_3,\vep_4\\\vep_1^\prime,\vep_2^\prime, \vep_3^\prime,\vep_4^\prime}\in \{0,1\}}(-1)^{\sum_{\ell=1}^{4}\vep_\ell+\vep_\ell^\prime}\bsfK_n\left(\bZ_1^{(\vep_1)},\bZ_2^{(\vep_2)}\right)\bsfK_n\left(\bZ_2^{(\vep_2^\prime)},\bZ_3^{(\vep_3)}\right)\bsfK_n\left(\bZ_3^{(\vep_3^\prime)},\bZ_4^{(\vep_4)}\right)\bsfK_n\left(\bZ_4^{(\vep_4^\prime)},\bZ_1^{(\vep_1^\prime)}\right).
\end{align*}
\normalsize
Now recall that under null $\E\left[\bsfK_n\left(X_1,X_2\right)\mid X_1\right] = 0$. Hence it follows that
\begin{align*}
    &\E\bigg[\sfH_n(\bZ_1,\bZ_2)\sfH_n(\bZ_2,\bZ_3)\sfH_n(\bZ_3,\bZ_4)\sfH_n(\bZ_4,\bZ_1)\bigg]\\
    & = \sum_{\vep_1,\vep_2,\vep_3,\vep_4\in \{0,1\}}\E\left[\bsfK_n\left(\bZ_1^{(\vep_1)},\bZ_2^{(\vep_2)}\right)\bsfK_n\left(\bZ_2^{(\vep_2)},\bZ_3^{(\vep_3)}\right)\bsfK_n\left(\bZ_3^{(\vep_3)},\bZ_4^{(\vep_4)}\right)\bsfK_n\left(\bZ_4^{(\vep_4)},\bZ_1^{(\vep_1^\prime)}\right)\right]\\
    & = 16\E\left[\bsfK_n(X_1,X_2)\bsfK_n(X_2, X_3)\bsfK_n(X_3, X_4)\bsfK_n(X_4,X_1)\right]\\
    & = 16\E\left[\E\left[\bsfK_n(X_1, X_2)\bsfK_n(X_1,X_3)\mid X_2, X_3\right]^2\right].
\end{align*}

\subsection{Proof of Theorem~\ref{thm:general_consistency}}\label{sec:proofofgeneralconsistentcy}
For notational convenience define $s_n = \sum_{i=2}^{n}(i-1)/i^\gamma$ and
\begin{align*}
    R_{n,\gamma} = \frac{1}{s_n}\sum_{i=2}^{n}\frac{1}{i^\gamma}\sum_{j=1}^{i-1}\sfH(\bZ_i, \bZ_j),\  \hat{\sigma}_{n,\gamma}^2 = \frac{1}{s_n^2}\sum_{i=2}^{n}\left(\frac{1}{i^\gamma}\sum_{j=1}^{i-1}\sfH(\bZ_i, \bZ_j)\right)^2.
\end{align*}
Now we begin by noting that,
\begin{align*}
    \E\left[1-\phi_{n,\gamma}\right] = \P\left(R_{n,\gamma}\leq z_{\delta}\hat\sigma_{n,\gamma}\right).
\end{align*}
Fix $\vep>0$ and consider the event $\cE = \left\{\hat\sigma_{n,\gamma}^2\leq \E\left[\hat\sigma_{n,\gamma}^2\right]/\vep\right\}$. A simple application of Markov inequality shows that $\P(\cE^c)\leq \vep$. Hence, by decomposing the event $\{R_{n,\gamma}\leq z_\delta\hat\sigma_{n,\gamma}\}$ into $\cE$ and $\cE^c$ we get,
\begin{align*}
    \P\left(R_{n,\gamma}\leq z_{\delta}\hat\sigma_{n,\gamma}\right)\leq \P\left(R_{n,\gamma}\leq z_{\delta}\sqrt{\E\left[\hat\sigma_{n,\gamma}^2\right]/\vep}\right) + \vep.
\end{align*}
Now using the assumption \eqref{eq:genconsistent_assumption} from we know that for large enough $n$,
\begin{align*}
    z_{\delta}\sqrt{\E\left[\hat\sigma_{n,\gamma}^2\right]/\vep}\leq \frac{\delta_n^2}{2}.
\end{align*}
Furthermore by definition,
\begin{align*}
    \E\left[R_{n,\gamma}\right] = \MMD^2[P_n, Q_n, \sfK_n] = \delta_n^2.
\end{align*}
Combining these observations, for large enough $n$ we now get
\begin{align*}
    \P\left(R_{n,\gamma}\leq z_{\delta}\sqrt{\E\left[\hat\sigma_{n,\gamma}^2\right]/\vep}\right)
    &\leq \P\left(R_{n,\gamma} - \E\left[R_{n,\gamma}\right]\leq \delta_n^2/2 -\delta_n^2\right)\leq 4\frac{\Var\left(R_{n,\gamma}\right)}{\delta_n^4},
\end{align*}
where the final inequality follows from Chebyshev's inequality. This implies that
\begin{align*}
    \sup_{(P_n, Q_n)\in \cP_n} \P\left(R_{n,\gamma}\leq z_{\delta}\hat\sigma_{n,\gamma}\right)\leq \sup_{(P_n, Q_n)\in \cP_n}4\frac{\Var\left(R_{n,\gamma}\right)}{\delta_n^4} + \vep.
\end{align*}
The proof is now completed by recalling the assumption \eqref{eq:genconsistent_assumption} and then taking $n\ra\infty$ and $\vep\ra0$ respectively.

\subsection{Proof of Theorem~\ref{thm:optimality}}\label{sec:proof_minimax}
We divide the proof in two parts. In Section~\ref{sec:Type_I_minimax} we verify the type-I error control and in Section~\ref{sec:consistency_minimax} we verify the uniform consistency of the tests $\phi_{n,\gamma}$. 
\subsubsection{Type-I error: }\label{sec:Type_I_minimax}
To verify the Type-I error control we verify the conditions of Corollary~\ref{cor:NullCLT}. Recall that here the scale parameter is given by $\nu_n = n^\frac{4}{d+4\beta}$. Now recalling the expression of moments of Gaussian kernel derived by \citet{li2024optimality} we know
\begin{align}\label{eq:gausskernel_1}
    \E_{P_n}\left[\bsfK_n(X_1,X_2)^2\right] = \Omega(\nu_n^{-d/2}), \ \E_{P_n}\left[\bsfK_n(X_1,X_2)^4\right] \lesssim \nu_n^{-d/2},
\end{align}
and
\begin{align}\label{eq:gausskernel_2}
    \E_{P_n}\left[\bsfK_n(X_1,X_2)^2\bsfK_n(X_1,X_3)^2\right] \lesssim \nu_n^{-3d/4}.
\end{align}
By condition $(9)$ in \citet{li2024optimality} one can immediately verify that Assumption~\ref{assumption:kernelconvg} is satisfied. Now we need to verify that Assumption~\ref{assumption:4thand22moment} also holds true. We divide this into two parts. First note that
\begin{align*}
    t_{n,\gamma,1}\frac{\E_{P_n}\left[\bsfK_n(X_1, X_2)^4\right]}{\E_{P_n}\left[\bsfK_n(X_1, X_2)^2\right]^2}\lesssim t_{n,\gamma,1}\nu_n^{d/2} = t_{n,\gamma,1}n^{\frac{2d}{d + 4\beta}},
\end{align*}
where the upper bound follows from  \eqref{eq:gausskernel_1}. Now recalling the expression of $t_{n,\gamma,1}$ from \eqref{eq:tnalpha} we get
\begin{align*}
    t_{n,\gamma,1}n^{\frac{2d}{d + 4\beta}} = o(1) \text{ for all }0\leq \gamma < \frac{d + 8\beta}{2(d + 4\beta)}.
\end{align*}
Now for the second condition in \eqref{eq:tnalpha} note that
\begin{align*}
    t_{n,\gamma,2}\frac{\E_{P_n}\left[\bsfK_n(X_1, X_2)^2\bsfK_n(X_1, X_3)^2\right]}{\E_{P_n}\left[\bsfK_n(X_1, X_2)^2\right]^2}\lesssim t_{n,\gamma,2}\nu_n^{d/4} = t_{n,\gamma,2}n^{\frac{d}{d + 4\beta}}.
\end{align*}
where the upper bound now follows from \eqref{eq:gausskernel_2}. Now recalling the expression of $t_{n,\gamma,2}$ from \eqref{eq:tnalpha} we get
\begin{align*}
    t_{n,\gamma,2}n^{\frac{d}{d + 4\beta}} = o(1) \text{ for all }0\leq \gamma < \frac{3d+16\beta}{4(d + 4\beta)}.
\end{align*}
Combining we find that Assumption~\ref{assumption:4thand22moment} is satisfied whenever $0\leq \gamma <\frac{d + 8\beta}{2(d + 4\beta)}$. Hence, by applying Corollary~\ref{cor:NullCLT} we conclude that $\phi_{n,\gamma}$ has asymptotic type-I error control for $0\leq \gamma <\frac{d + 8\beta}{2(d + 4\beta)}$.

\subsubsection{Consistency}\label{sec:consistency_minimax} To verify consistency we need to show that conditions of Theorem~\ref{thm:general_consistency} are satisfied. Recalling \eqref{eq:genconsistent_assumption} we need to show that,
\small
\begin{align*}
    \lim_{n\ra\infty} \sup_{(P_n, Q_n)\in \cP_n} \frac{\E\left[\frac{1}{s_{n,\gamma}^2}\sum_{i=2}^n\left(\frac{1}{i^\gamma}\sum_{j=1}^{i-1}\sfH(\bZ_i,\bZ_j)\right)^2\right]}{\delta_n^4} + \frac{\Var\left(\frac{1}{s_{n,\gamma}}\sum_{i=2}^{n}\frac{1}{i^\gamma}\sum_{j=1}^{i-1}\sfH(\bZ_i, \bZ_j)\right)}{\delta_n^4} = 0
\end{align*}
\normalsize
where $s_{n,\gamma} = \sum_{i=2}^{n}\frac{i-1}{i^\gamma}$. We divide the proof into two parts to show separately,
\begin{align}\label{eq:consistency_1}
    \lim_{n\ra\infty} \sup_{(P_n, Q_n)\in \cP_n} \frac{\E\left[\frac{1}{s_{n,\gamma}^2}\sum_{i=2}^n\left(\frac{1}{i^\gamma}\sum_{j=1}^{i-1}\sfH(\bZ_i,\bZ_j)\right)^2\right]}{\delta_n^4} = 0
\end{align}
and,
\begin{align}\label{eq:consistency_2}
    \lim_{n\ra\infty} \sup_{(P_n, Q_n)\in \cP_n} \frac{\Var\left(\frac{1}{s_{n,\gamma}}\sum_{i=2}^{n}\frac{1}{i^\gamma}\sum_{j=1}^{i-1}\sfH(\bZ_i, \bZ_j)\right)}{\delta_n^4} = 0.
\end{align}
We begin with the proof of \eqref{eq:consistency_1}.
\paragraph{Proof of (\ref{eq:consistency_1}): }
We begin by expanding the numerator from \eqref{eq:consistency_1}.
\small
\begin{align}\label{eq:con1_num_expand}
    \E
    &\left[\frac{1}{s_{n,\gamma}^2}\sum_{i=2}^n\left(\frac{1}{i^\gamma}\sum_{j=1}^{i-1}\sfH(\bZ_i,\bZ_j)\right)^2\right]\nonumber\\
    & = \frac{1}{s_{n,\gamma}^2}\sum_{i=2}^{n}\frac{1}{i^{2\gamma}}\sum_{j=1}^{i-1}\E\left[\sfH_n^2(\bZ_i, \bZ_j)\right] + \frac{1}{s_{n,\gamma}^2}\sum_{i=2}^{n}\frac{1}{i^{2\gamma}}\sum_{j_1\neq j_2}\E\left[\sfH_n(\bZ_i, \bZ_{j_1})\sfH_n(\bZ_i, \bZ_{j_2})\right]\nonumber\\
    & = \frac{1}{s_{n,\gamma}^2}\sum_{i=2}^{n}\frac{i-1}{i^{2\gamma}}\E\left[\sfH_n(\bZ_1, \bZ_2)^2\right] + \frac{1}{s_{n,\gamma}^2}\sum_{i=2}^{n}\frac{(i-1)(i-2)}{i^{2\gamma}}\E\left[\sfH_n(\bZ_1,\bZ_2)\sfH_n(\bZ_1,\bZ_3)\right].
\end{align}
\normalsize
Now by definition,
\begin{align*}
    \E\left[\sfH_n(\bZ_1, \bZ_2)^2\right]\lesssim\E\left[\sfK_n(X_1, X_2)^2\right] + 2\E\left[\sfK_n(X_1, Y_2)^2\right] + \E\left[\sfK_n(Y_1, Y_2)^2\right].
\end{align*}
Recall that $\sfK_n$ is a Gaussian kernel and $\|p\|, \|q\|\leq M$. Hence applying Cauchy-Schwarz inequality on Lemma~\ref{lemma:gauss_kernel_prod} we get,
\begin{align*}
    \E\left[\sfH_n(\bZ_1, \bZ_2)^2\right]\lesssim \nu_n^{-d/2}.
\end{align*}
Furthermore,
\begin{align*}
    \E
    &\left[\sfH_n(\bZ_1,\bZ_2)\sfH_n(\bZ_1,\bZ_3)\right] = \E\left[\E\left[\sfH_n(\bZ_1,\bZ_2)\mid \bZ_1\right]^2\right]\\
    & \lesssim \E\left[\left(\E\left[\sfK_n(X_1, X_2)\mid X_1\right] - \E\left[\sfK_n(X_1, Y_2)\mid X_1\right]\right)^2\right]\\
    &\hspace{100pt} + \E\left[\left(\E\left[\sfK_n(X_2, Y_1)\mid Y_1\right] - \E\left[\sfK_n(Y_1, Y_2)\mid Y_1\right]\right)^2\right]\\
    & = \E\left[f(X)^2\right] + \E\left[f(Y)^2\right],
\end{align*}
where $f(\cdot) = \E\left[\sfK_n(X, \cdot)\right] - \E\left[\sfK_n(Y, \cdot)\right]$ with $X\sim p$ and $Y\sim q$. Then quoting Lemma 20 from \cite{shekhar2022permutation} we get
\begin{align*}
    \E&\left[\sfH_n(\bZ_1,\bZ_2)\sfH_n(\bZ_1,\bZ_3)\right] \lesssim \left\|p-q\right\|_{L_2}^2\nu_n^{-3d/4}.
\end{align*}
Now from Lemma 15 in \cite{li2024optimality} there exists $C>0$ such that for $g = p-q\in \cW^{\beta,2}(M)$,
\begin{align*}
    \left\|g\right\|_{L_2}^2\lesssim\int \exp\left(-\frac{\|\omega\|^2}{4\nu_n}\right)\left|\cF g(\omega)\right|^2\d \omega,
\end{align*}
given that $\nu_n\geq C\|g\|_{L_2}^{-2/\beta}$. Now note that by Lemma~\ref{lemma:gauss_kernel_prod} and that $\delta_n = \MMD[P_n, Q_n, \sfK_n]$,
\begin{align*}
    \int \exp\left(-\frac{\|\omega\|^2}{4\nu_n}\right)\left|\cF g(\omega)\right|^2\d \omega = \left(\frac{\nu_n}{\pi}\right)^{d/2}\delta_n^2,
\end{align*}
implying $\|p-q\|_{L_2}^2\lesssim\nu_n^{d/2}\delta_n^2$. Then,
\begin{align}\label{eq:bdd_two_star_H}
    \E\left[\sfH_n(\bZ_1,\bZ_2)\sfH_n(\bZ_1,\bZ_3)\right] \lesssim \left\|p-q\right\|_{L_2}^2\nu_n^{-3d/4}\lesssim\nu_n^{-d/4}\delta_n^2.
\end{align}
Substituting the bounds back in \eqref{eq:con1_num_expand}
\begin{align}\label{eq:bdd_t1_alt}
    \E\left[\frac{1}{s_{n,\gamma}^2}\sum_{i=2}^n\left(\frac{1}{i^\gamma}\sum_{j=1}^{i-1}\sfH(\bZ_i,\bZ_j)\right)^2\right]
    & \lesssim\frac{1}{s_{n,\gamma}^2}\sum_{i=2}^{n}\frac{i-1}{i^{2\gamma}}\nu_n^{-d/2} + \frac{1}{s_{n,\gamma}^2}\sum_{i=2}^{n}\frac{(i-1)(i-2)}{i^{2\gamma}}\nu_n^{-d/4}\delta_n^2\nonumber\\
    & \lesssim\frac{1}{n^2}\nu_n^{-d/2} + \frac{1}{n}\nu_n^{-d/4}\delta_n^2,
\end{align}
where the final bound follows from Lemma~\ref{lemma:sumip}. Then,
\small
\begin{align}\label{eq:bdd_2nd_moment}
    \frac{\E\left[\frac{1}{s_{n,\gamma}^2}\sum_{i=2}^n\left(\frac{1}{i^\gamma}\sum_{j=1}^{i-1}\sfH(\bZ_i,\bZ_j)\right)^2\right]}{\delta_n^4}\lesssim\frac{\nu_n^{-d/2}}{n^2\delta_n^4} + \frac{\nu_n^{-d/4}}{n\delta_n^2}\lesssim\frac{1}{\left(n^{2\beta/(d+4\beta)}\Delta_n\right)^4} + \frac{1}{\left(n^{2\beta/(d+4\beta)}\Delta_n\right)^2},
\end{align}
\normalsize
where the final bound follows by recalling that in $\cP_n^{(1)}$, $\Delta_n\leq \|p-q\|_2\leq \nu_n^{d/4}\delta_n$. The proof is now completed by recalling that $n^{2\beta/(d+4\beta)}\Delta_n\ra\infty$.

\paragraph{Proof of (\ref{eq:consistency_2}):}
Once again we begin by expanding the numerator from \eqref{eq:consistency_2}.
\small
\begin{align*}
    & \Var\left(\frac{1}{s_{n,\gamma}}\sum_{i=2}^{n}\frac{1}{i^\gamma}\sum_{j=1}^{i-1}\sfH(\bZ_i, \bZ_j)\right)\\
    & = \E\left[\frac{1}{s_{n,\gamma}^2}\sum_{i=2}^n\left(\frac{1}{i^\gamma}\sum_{j=1}^{i-1}\sfH_n(\bZ_i,\bZ_j)\right)^2\right] + 2\E\left[\frac{1}{s_{n,\gamma}^2}\sum_{i< j}\frac{1}{i^\gamma j^\gamma}\sum_{a=1}^{i-1}\sum_{b=1}^{j-1}\E\left[\sfH_n(\bZ_i, \bZ_a)\sfH_{n}(\bZ_j, \bZ_b)\right]\right] - \delta_n^4.
\end{align*}
\normalsize
Notice that we already know the bound on the first term from \eqref{eq:bdd_t1_alt}. In the following we analyse the second term only.
\small
\begin{align*}
    & \E\left[\frac{1}{s_{n,\gamma}^2}\sum_{i< j}\frac{1}{i^\gamma j^\gamma}\sum_{a=1}^{i-1}\sum_{b=1}^{j-1}\E\left[\sfH_n(\bZ_i, \bZ_a)\sfH_{n}(\bZ_j, \bZ_b)\right]\right]\\
    & = \frac{1}{s_{n,\gamma}^2}\sum_{i< j}\frac{1}{i^\gamma j^\gamma}\left[\sum_{a=1}^{i-1}\sum_{b=1}^{j-1}\E\left[\sfH_n(\bZ_i, \bZ_a)\sfH_{n}(\bZ_j, \bZ_b)\right]\one\left\{a\neq b\right\} + \sum_{a=1}^{i-1}\E\left[\sfH_n(\bZ_i, \bZ_a)\sfH_{n}(\bZ_j, \bZ_a)\right]\right]\\
    & = \frac{1}{s_{n,\gamma}^2}\sum_{i< j}\frac{(i-1)(j-2)}{i^\gamma j^\gamma}\delta_n^4 + \frac{1}{s_{n,\gamma}^2}\sum_{i<j}\frac{i-1}{i^\gamma j^\gamma}\E\left[\sfH_n(\bZ_1, \bZ_2)\sfH_n(\bZ_1, \bZ_3)\right].
\end{align*}
\normalsize
Now by Lemma~\ref{lemma:var_lim_half} we know that,
\footnotesize
\begin{align}\label{eq:bdd_var_2}
    &\lim_{n\ra\infty}\sup_{(P_n, Q_n)\in \cP_n^{(1)}}
    \frac{\Var\left(\frac{1}{s_{n,\gamma}}\sum_{i=2}^{n}\frac{1}{i^\gamma}\sum_{j=1}^{i-1}\sfH(\bZ_i, \bZ_j)\right)}{\delta_n^4}\nonumber\\
    & = \lim_{n\ra\infty}\sup_{(P_n, Q_n)\in \cP_n^{(1)}}\frac{\E\left[\frac{1}{s_{n,\gamma}^2}\sum_{i=2}^n\left(\frac{1}{i^\gamma}\sum_{j=1}^{i-1}\sfH(\bZ_i,\bZ_j)\right)^2\right] + \frac{2}{s_{n,\gamma}^2}\sum_{i<j}\frac{i-1}{i^\gamma j^\gamma}\E\left[\sfH_n(\bZ_1, \bZ_2)\sfH_n(\bZ_1, \bZ_3)\right]}{\delta_n^4}.
\end{align}
\normalsize
Now to complete the proof note that by Lemma~\ref{lemma:sumip},
\begin{align*}
    \frac{2}{s_{n,\gamma}^2}\sum_{i<j}\frac{i-1}{i^\gamma j^\gamma} \lesssim \frac{1}{n}.
\end{align*}
Hence continuing the trail of equalities from \eqref{eq:bdd_var_2} and recalling bounds from \eqref{eq:bdd_2nd_moment} and \eqref{eq:bdd_two_star_H} we get,
\begin{align*}
    \lim_{n\ra\infty}
    &\sup_{(P_n, Q_n)\in \cP_n^{(1)}}
    \frac{\Var\left(\frac{1}{s_{n,\gamma}}\sum_{i=2}^{n}\frac{1}{i^\gamma}\sum_{j=1}^{i-1}\sfH(\bZ_i, \bZ_j)\right)}{\delta_n^4}\\
    & \lesssim \frac{1}{\left(n^{2\beta/(d+4\beta)}\Delta_n\right)^4} + \frac{1}{\left(n^{2\beta/(d+4\beta)}\Delta_n\right)^2} + \frac{\nu_n^{-d/4}}{n\delta_n^2}\ra 0,
\end{align*}
which completes the proof.


\section{Proof of Theorem~\ref{thm:fixed_consistency} and Theorem~\ref{thm:general_alt_dist}}

\subsection{Proof of Theorem~\ref{thm:fixed_consistency}}\label{sec:proofof_fixed_consistency}
To prove Theorem~\ref{thm:fixed_consistency} we need only verify the conditions of Theorem~\ref{thm:gen_consistency} in this setting. Since $P,Q\in \cM_{\sfK}^{1}$ by a simple application of Cauchy-Schwartz ineqaulity all expectations in the following exists and are finite. Note that in this setting $\delta_n = \delta = \MMD[P,Q,\sfK]>0$. Then following computations similar to \eqref{eq:con1_num_expand} we get,
\begin{align}\label{eq:sigma_n_exp_0}
    \E\left[\sigma_n^2\right]\lesssim \frac{1}{n^2}\sum_{i=1}^{n}\frac{i-1}{i^2} + \frac{1}{n^2}\sum_{i=2}^{n}\frac{(i-1)(i-2)}{i^2} = O(1/n)\ra 0.
\end{align}
Moreover, similar to \eqref{eq:bdd_var_2} we get the following decompositions.
\small
\begin{align*}
    \Var(T_n) = \E\left[\frac{1}{n^2}\sum_{i=2}^n\left(\frac{1}{i}\sum_{j=1}^{i-1}\sfH(\bZ_i,\bZ_j)\right)^2\right] + 2\E\left[\frac{1}{n^2}\sum_{i< j}\frac{1}{ij}\sum_{a=1}^{i-1}\sum_{b=1}^{j-1}\E\left[\sfH_n(\bZ_i, \bZ_a)\sfH_{n}(\bZ_j, \bZ_b)\right]\right] - \delta^4.
\end{align*}
\normalsize
Furthermore,
\begin{align*}
    \E
    &\left[\frac{1}{n^2}\sum_{i< j}\frac{1}{ij}\sum_{a=1}^{i-1}\sum_{b=1}^{j-1}\E\left[\sfH(\bZ_i, \bZ_a)\sfH(\bZ_j, \bZ_b)\right]\right]\\
    & = \frac{1}{n^2}\sum_{i< j}\frac{(i-1)(j-2)}{i j}\delta^4 + \frac{1}{n^2}\sum_{i<j}\frac{i-1}{ij}\E\left[\sfH(\bZ_1, \bZ_2)\sfH(\bZ_1, \bZ_3)\right]\\
    & = \delta^4/2(1 + o(1)).
\end{align*}
Combined with \eqref{eq:sigma_n_exp_0} we can now easily conclude that $\Var(T_n)\ra 0$. This completes checking the conditions of Theorem~\ref{thm:gen_consistency} and subsequenlty proves Theorem~\ref{thm:fixed_consistency}.

\subsection{Proof of Theorem~\ref{thm:general_alt_dist}}\label{sec:proofof_thm_alt}
The statistic $T_n$ from \eqref{eq:Tn_alt} can be rewritten as $T_n = \frac{1}{2n}\sum_{i\neq j}w_{ij}\sfH_n(\bZ_i,\bZ_j)$ where $\sfH_n$ is defined in \eqref{eq:def_Hn_main} $w_{ij} = 1/\max\{i,j\}, 1\leq i\neq j\leq n$ and $\bZ_i = (X_i, Y_i), 1\leq i\leq n$. For notational simplicity take $\mu_n = \MMD[P_n, Q_n, \sfK_n]^2$ and write,
\begin{align*}
    \sfH_n(\bz_1, \bz_2) = \mu_n + h_n(\bz_1) + h_n(\bz_2) + g_n(\bz_1,\bz_2),
\end{align*}
where $g_n(\bz_1,\bz_2) = \sfH_n(\bz_1,\bz_2) - h_n(\bz_1) - h_n(\bz_2) + \mu_n$. Substituing back this expression in $T_n$ we get,
\begin{align*}
    \frac{T_n}{\mu_n} = 1 + \sum_{i=1}^{n}\underbrace{w_i^{(n)}h_n(\bZ_i)}_{W_{i,n}} + \frac{1}{2n\mu_n}\sum_{i\neq j}w_{ij}g(\bZ_i, \bZ_j) + O(\log n/n),
\end{align*}
where $w_i^{(n)} = \frac{i-1}{n\mu_ni} + \frac{1}{n\mu_n}\sum_{j=i+1}\frac{1}{j}$. Define $L_n = \sum_{i=1}^{n}W_{i,n}$. Moreover we can show,
\begin{align}\label{eq:w_insq_sum_lim}
    S_n := n\mu_n^2\sum_{i=1}^{n}\left(w_{i}^{(n)}\right)^2 \ra 5.
\end{align}
We postpone the proof of \eqref{eq:w_insq_sum_lim} to Section~\ref{sec:proof_of_S_n_convvg}. Now note that $w_{i}^{(n)}\lesssim H_n/n\mu_n$ for all $1\leq i\leq n$ and hence $\sum_{i=1}^{n}\left(w_i^{(n)}\right)^3\lesssim \frac{\left(\log n\right)^3}{n^3\mu_n^3}$. Now to prove CLT, we will invoke Lyapunov's version on $L_n$. To that end note that $\E\left[h_n(\bZ_1)\right] = 0$ which implies, $\E\left[W_{n,i}\right] = 0$ for all $1\leq i\leq n$ and,
\begin{align*}
    \frac{\sum_{i=1}^{n}\E\left[\left|W_{n,i}\right|^3\right]}{\left(\sum_{i=1}^{n}\E\left[W_{n,i}^2\right]\right)^{3/2}} = \frac{\E\left[\left|h_{n}(\bZ_1)\right|^3\right]}{\E\left[h_{n}(\bZ_1)^2\right]^{3/2}}\frac{\sum_{i=1}^{n}\left(w_i^{(n)}\right)^3}{\left[\sum_{i=1}^{n}\left(w_i^{(n)}\right)^2\right]^{3/2}} = O\left(\frac{(\log n)^3\E\left[\left|h_{n}(\bZ_1)\right|^3\right]}{n^{3/2}\Var\left(h_n(\bZ_1)\right)^{3/2}}\right).
\end{align*}
Hence by Lyapunov CLT we have $\frac{L_n}{\sqrt{\Var(L_n)}}\dto N(0,1)$. Now recalling the definition of $g_{n}$ note that $\E\left[g_n(\bZ_1,\bZ_2)\right] =0$ and
\begin{align*}
    \Var\left(\frac{1}{2n\mu_n}\sum_{i\neq j}w_{ij}g_n(\bZ_i,\bZ_j)\right) \lesssim\frac{\E\left[g_n(\bZ_1,\bZ_2)^2\right]}{\mu_n^2}\frac{H_n}{n^2}.
\end{align*}
Hence we get the bound,
\small
\begin{align*}
    \E\left[\frac{\left(\frac{1}{2n\mu_n}\sum_{i\neq j}w_{ij}g_n(\bZ_i,\bZ_j)\right)^2}{\Var(L_n)}\right]\lesssim \frac{\E\left[g_n(\bZ_1,\bZ_2)^2\right]}{\mu_n^2}\frac{H_n}{n^2} \frac{n\mu_n^2}{\E\left[h_{n}(\bZ_1)^2\right]} 
    & = O\left(\frac{\log n\E\left[g_n(\bZ_1,\bZ_2)^2\right]}{n\Var\left(h_{n}(\bZ_1)\right)}\right).
\end{align*}
\normalsize
Moreover,
\begin{align*}
    \frac{\log n}{n\sqrt{\Var(L_n)}} \lesssim \frac{\mu_n\log n}{\sqrt{n\Var\left(h_{n}(\bZ_1)^2\right)}}.
\end{align*}
Combining all the above and recalling the assumptions of Theorem~\ref{thm:general_alt_dist} we conclude,
\begin{align*}
    \frac{1}{\sqrt{\Var(L_n)}}\left(\frac{T_{n}}{\mu_n}-1\right)\dto \rm N(0,1).
\end{align*}
Moreover using \eqref{eq:w_insq_sum_lim} we know $n\mu_n^2\frac{\Var(L_n)}{\Var\left(h_{n}(\bZ_1)^2\right)}\ra 5$. Hence,
\begin{align*}
    \sqrt{\frac{n\mu_n^2}{\Var\left(h_{n}(\bZ_1)\right)}}\left(\frac{T_{n}}{\mu_n}-1\right)\dto \mathrm{N}(0,5)\implies \sqrt{\frac{n}{\Var\left(h_{n}(\bZ_1)\right)}}\left(T_{n}-\mu_n\right)\dto \rm N(0,5).
\end{align*}

\subsubsection{Proof of (\ref{eq:w_insq_sum_lim}):}\label{sec:proof_of_S_n_convvg}
By a simple computation,
\begin{align*}
    S_n = n\mu_n^2\sum_{i=1}^{n}\left(w_{i}^{(n)}\right)^2 = \frac{1}{n}\left[\sum_{i=1}^{n}\left(\frac{i-1}{i}\right)^2 + 2\sum_{i=1}^{n}\frac{i-1}{i}\sum_{j=i+1}^n\frac{1}{j} + \sum_{i=1}^{n}\left(\sum_{j=i+1}^{n}\frac{1}{j}\right)^2\right].
\end{align*}
For the last term note that
\begin{align*}
    \frac{1}{n}\sum_{i=1}^{n}\left(\sum_{j=i+1}^{n}\frac{1}{j}\right)^2 = \frac{1}{n}\sum_{j_1,j_2}\frac{\min\{j_1,j_2\}-1}{j_1j_2} = \frac{1}{n}\sum_{j_1,j_2}\frac{\min\{j_1,j_2\}}{j_1j_2} - \frac{H_n^2}{n},
\end{align*}
where $H_n$ is the harmonic sum till $n$. Now note that,
\begin{align*}
    \frac{1}{n}\sum_{j_1,j_2}\frac{\min\{j_1,j_2\}}{j_1j_2} = \frac{2}{n}\sum_{j_1\leq j_2}\frac{1}{j_2} - \frac{1}{n}\sum_{j}\frac{1}{j} = 2 - \frac{H_n}{n}.
\end{align*}
Hence,
\begin{align*}
    \frac{1}{n}\sum_{i=1}^{n}\left(\sum_{j=i+1}^{n}\frac{1}{j}\right)^2 = 2 - \frac{H_n}{n} + \frac{H_n^2}{n} \ra 2.
\end{align*}
For the first term we have,
\begin{align*}
    \frac{1}{n}\sum_{i=1}^{n}\left(\frac{i-1}{i}\right)^2 = 1 - 2\frac{H_n}{n} + \frac{1}{n}\sum_{i=1}^{n}\frac{1}{i^2} \ra 1.
\end{align*}
Finally for the second term,
\begin{align*}
    \frac{1}{n}\sum_{i=1}^{n}\frac{i-1}{i}\sum_{j=i+1}^n\frac{1}{j} = \frac{1}{n}\sum_{j=2}^{n}\frac{1}{j}\sum_{i=1}^{j-1}\frac{i-1}{i} = \frac{1}{n}\sum_{j=2}^{n}\frac{j-1-H_{j-1}}{j} = 1 - \frac{H_n}{n} - \frac{1}{n}\sum_{j=2}^{n}\frac{H_{j-1}}{j}\ra 1.
\end{align*}
Combining we can conclude that $S_n\ra 5$.


\section{Technical Results}

\begin{lemma}\label{lemma:E_barK_sq_positive}
  Let $\mathsf{K}$ be a kernel satisfying Assumption~\ref{assumption:K} and suppose that $X \sim P$ is not almost surely constant. Then, for independent samples $X_1, X_2 \overset{\text{i.i.d.}}{\sim} P$ with $P \in \mathcal{M}_{\mathsf{K}}^1(\mathcal{X})$, 
  \begin{align*}
    \E\left[\bsfK(X_1, X_2)^2\right]>0.
  \end{align*}
\end{lemma}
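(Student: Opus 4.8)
The plan is to show $\E[\bsfK(X_1,X_2)^2] = 0$ forces $X$ to be almost surely constant, which contradicts the hypothesis. First I would unpack the definition of the centered kernel: writing $\Phi(x) := \sfK(x,\cdot) - \nu_P \in \cK$, we have $\bsfK(x,y) = \langle \Phi(x),\Phi(y)\rangle_{\cK}$, and by Fubini (justified since $P \in \cM_{\sfK}^1$ guarantees $\E[\sfK(X,X)] < \infty$, hence $\E\|\Phi(X)\|_{\cK}^2 < \infty$) one gets the identity
\begin{align*}
\E\left[\bsfK(X_1,X_2)^2\right] = \E\left[\langle \Phi(X_1),\Phi(X_2)\rangle_{\cK}^2\right] = \left\|\E\left[\Phi(X)\otimes\Phi(X)\right]\right\|_{\mathrm{HS}}^2,
\end{align*}
where $\Sigma := \E[\Phi(X)\otimes\Phi(X)]$ is the (centered) covariance operator on $\cK$, a positive trace-class self-adjoint operator. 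Thus $\E[\bsfK(X_1,X_2)^2] = \|\Sigma\|_{\mathrm{HS}}^2 = 0$ iff $\Sigma = 0$.

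Next I would argue $\Sigma = 0$ implies $\Phi(X) = 0$ $P$-almost surely. Indeed for any $f \in \cK$, $\langle f, \Sigma f\rangle_{\cK} = \E[\langle f,\Phi(X)\rangle_{\cK}^2] \geq 0$, so $\Sigma = 0$ gives $\langle f,\Phi(X)\rangle_{\cK} = 0$ $P$-a.s. for each fixed $f$. Taking $f = \sfK(t,\cdot)$ for $t$ ranging over a countable dense subset $D \subseteq \cX$ (here using that $\cX$ is Polish, hence separable, and that $x \mapsto \sfK(x,\cdot)$ continuous in an appropriate sense, or more simply working directly with evaluation) yields, on a single $P$-full-measure event, $\Phi(X)(t) = \sfK(X,t) - \nu_P(t) = 0$ for all $t \in D$ simultaneously. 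Equivalently $\sfK(X,\cdot) = \nu_P$ in $\cK$ for $P$-almost every realization of $X$, since two RKHS elements agreeing on a dense set (combined with boundedness of point evaluation on $\cK$) must coincide; alternatively one avoids density issues by noting $\|\Phi(X)\|_{\cK}^2 = \langle\Phi(X),\Phi(X)\rangle$ and $\E\|\Phi(X)\|^2 = \mathrm{tr}(\Sigma) = 0$ directly gives $\Phi(X) = 0$ a.s.

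Finally, $\sfK(X,\cdot) = \nu_P$ $P$-a.s. means the kernel mean embedding of $\delta_X$ equals that of $P$ for $P$-almost every $X$. Here I would invoke Assumption~\ref{assumption:K}: $\nu$ is injective on $\cM_{\sfK}^{1/2}$, and since $\delta_x \in \cM_{\sfK}^{1/2}$ (as $\sfK(x,x) < \infty$ pointwise) and $P \in \cM_{\sfK}^1 \subseteq \cM_{\sfK}^{1/2}$, injectivity forces $\delta_X = P$ $P$-a.s. This is only possible if $P$ is a point mass, i.e.\ $X$ is almost surely constant, contradicting the hypothesis. The main obstacle is the measure-theoretic bookkeeping in the middle step — making sure the $P$-null sets can be chosen uniformly (which the trace argument $\E\|\Phi(X)\|_{\cK}^2 = 0$ handles cleanly) and correctly citing that $\delta_x$ lies in the domain of the characteristic map so Assumption~\ref{assumption:K} applies.
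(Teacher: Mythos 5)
Your proposal is correct and follows essentially the same route as the paper: the identity $\E[\bsfK(X_1,X_2)^2]=\|\E[\Phi(X)\otimes\Phi(X)]\|^2$ in the tensor-product (Hilbert--Schmidt) sense, the deduction that $\sfK(X,\cdot)=\nu_P$ almost surely, and a contradiction via the injectivity in Assumption~\ref{assumption:K} applied to Dirac measures. The only cosmetic differences are that your trace argument $\E\|\Phi(X)\|_{\cK}^2=\mathrm{tr}(\Sigma)=0$ makes the middle step slightly cleaner than the paper's, and you conclude $\delta_X=P$ (forcing a point mass) rather than the paper's $X_1=X_2$ a.s.\ for independent copies — the same use of the characteristic property.
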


\begin{proof}
  Let $\cK\otimes \cK$ be the tensor product space (see \citet[Section 4.6]{berlinet2011reproducing}). Let $h(X_i) = \sfK(X_i,\cdot) - \nu_P, i=1,2$. With this notation we have,
  \begin{align*}
    \E\left[\bsfK(X_1, X_2)^2\right] = \E\left[\left\langle h(X_1)\otimes h(X_1),h(X_2)\otimes h(X_2)\right\rangle_{\cK\otimes \cK}\right] = \left\|\E\left[h(X)\otimes h(X)\right]\right\|_{\cK\otimes \cK}^2,
  \end{align*}
  where $X\sim P$. Now for the sake of contradiction assume that $\E\left[\bsfK(X_1, X_2)^2\right] = 0$. Then for any $r\in \cK$,
  \begin{align*}
    \E\left[\left\langle r, h(X)\right\rangle_{\cK}^2\right] = \E\left[\left\langle r\otimes r, h(X)\otimes h(X)\right\rangle\right] = 0.
  \end{align*}
  This implies that $\sfK(X_1,\cdot) = \sfK(X_2,\cdot)$ almost surely. Now recalling the characteristic property of $\sfK$ and considering the Dirac measures $\delta_x, x\in \cX$ we can immediately conclude that $x\to \sfK(x,\cdot)$ is an injective function. Hence $\sfK(X_1,\cdot) = \sfK(X_2,\cdot)$ almost surely implies that $X_1 = X_2$ almost surely, which leads to a contradiction since $X_1,X_2$ are generated independently from $P$.
\end{proof}

\begin{lemma}\label{lemma:sumip}
    For any $p$,
    \begin{align*}
        \sum_{i=1}^{n}i^p = 
        \begin{cases}
            \Theta\left(n^{p+1}\right) & \text{ if } p>-1\\
            \Theta\left(\log n\right) & \text{ if }p = -1\\
            \Theta(1) & \text{ if }p<-1
        \end{cases}.
    \end{align*}
\end{lemma}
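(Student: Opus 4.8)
The plan is to establish each of the three regimes by comparing the sum $\sum_{i=1}^n i^p$ with the integral $\int x^p\,\d x$, exploiting that $x\mapsto x^p$ is monotone on $(0,\infty)$.

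First I would treat the case $p>-1$, splitting according to the sign of $p$. If $p\geq 0$, then $x\mapsto x^p$ is nondecreasing, so $\int_0^n x^p\,\d x\leq\sum_{i=1}^n i^p\leq\int_1^{n+1}x^p\,\d x$; if $-1<p<0$, then it is nonincreasing, so $\int_1^{n+1}x^p\,\d x\leq\sum_{i=1}^n i^p\leq 1+\int_1^{n}x^p\,\d x$. Since $\int x^p\,\d x = x^{p+1}/(p+1)$ with $p+1>0$, both ends of each chain are $\Theta(n^{p+1})$, which yields $\sum_{i=1}^n i^p=\Theta(n^{p+1})$; the finitely many small values of $n$ are harmless because every summand is positive.

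Next, for $p=-1$ the sum is the harmonic number $H_n=\sum_{i=1}^n 1/i$, and the elementary integral comparison $\log(n+1)=\int_1^{n+1}\frac{\d x}{x}\leq H_n\leq 1+\int_1^{n}\frac{\d x}{x}=1+\log n$ (see \cite{apostol1999elementary}) gives $H_n=\Theta(\log n)$. Finally, for $p<-1$ we have $-p>1$, so $\sum_{i\geq 1}i^p$ converges to some finite constant $C_p$; then $1=1^p\leq\sum_{i=1}^n i^p\leq C_p$ for every $n\geq 1$, i.e. $\sum_{i=1}^n i^p=\Theta(1)$.

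I do not anticipate any genuine obstacle, as the statement is classical. The only points requiring minor care are selecting the correct direction of the monotone integral comparison in the sub-cases $p\geq 0$ versus $-1<p<0$, and noting that the $\Theta(\cdot)$ estimates need only hold for large $n$, with the finitely many initial terms absorbed by positivity of the summands.
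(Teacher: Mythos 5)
Your proposal is correct and follows essentially the same route as the paper: monotone integral comparison in the two sub-cases of $p>-1$, the standard harmonic-number bounds for $p=-1$, and boundedness of the convergent series (first term as lower bound) for $p<-1$. The only cosmetic differences are your use of $\int_0^n x^p\,\d x$ as the lower bound when $p\geq 0$ and invoking convergence of $\sum_i i^p$ directly instead of evaluating $\int_1^\infty x^p\,\d x$, neither of which changes the argument.
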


\begin{proof}
We use the method of integral bounds by considering the function $f(x)=x^p$.

\paragraph{Case 1: $p > -1$}
Let $f(x) = x^p$.
If $p>0$, $f(x)$ is increasing, so
\[
    \int_1^n x^p \,dx \le \sum_{i=1}^n i^p \le \int_1^{n+1} x^p \,dx,
\]
which implies
\[
    \frac{n^{p+1}-1}{p+1} \le \sum_{i=1}^n i^p \le \frac{(n+1)^{p+1}-1}{p+1}.
\]
If $-1 < p < 0$, $f(x)$ is decreasing, so
\[
    \int_1^{n+1} x^p \,dx \le \sum_{i=1}^n i^p \le 1^p + \int_1^n x^p \,dx,
\]
which implies
\[
    \frac{(n+1)^{p+1}-1}{p+1} \le \sum_{i=1}^n i^p \le 1 + \frac{n^{p+1}-1}{p+1}.
\]
In both subcases, the lower and upper bounds are $\Theta(n^{p+1})$. Thus, $\sum_{i=1}^{n}i^p = \Theta(n^{p+1})$.

\paragraph{Case 2: $p = -1$}
The sum is the $n$-th harmonic number, $H_n = \sum_{i=1}^n \frac{1}{i}$. The function $f(x)=1/x$ is decreasing for $x>0$, so we apply the integral bounds:
\[
    \int_1^{n+1} \frac{1}{x} \,dx \le H_n \le 1 + \int_1^n \frac{1}{x} \,dx ~,
\]
\[
    \ln(n+1) \le H_n \le 1 + \ln n.
\]
Since $\ln(n+1)$ and $1+\ln n$ are both $\Theta(\ln n)$, it follows that $H_n = \Theta(\ln n)$.

\paragraph{Case 3: $p < -1$}
The sum's terms are positive, so it is bounded below by its first term:
\[
    \sum_{i=1}^n i^p \ge 1^p = 1 \implies \sum_{i=1}^n i^p = \Omega(1).
\]
Since $f(x)=x^p$ is decreasing, the sum is bounded above by the convergent integral:
\[
    \sum_{i=1}^n i^p \le 1 + \int_1^n x^p \,dx < 1 + \int_1^\infty x^p \,dx.
\]
We evaluate the integral, noting $p+1 < 0$:
\[
    \int_1^\infty x^p \,dx = \left[ \frac{x^{p+1}}{p+1} \right]_1^\infty = \lim_{b\to\infty}\frac{b^{p+1}}{p+1} - \frac{1}{p+1} = 0 - \frac{1}{p+1} = \frac{-1}{p+1}.
\]
The sum is bounded above by the constant $1 - \frac{1}{p+1}$, so $\sum_{i=1}^n i^p = O(1)$.
Combining the bounds gives $\sum_{i=1}^n i^p = \Theta(1)$.
\end{proof}

\begin{lemma}\label{lemma:i_2alpha_convg_rate}
    For every $n\geq 2$,
    \begin{align*}
        \left|\frac{1}{n^{2(1-\gamma)}}\sum_{i=2}^{n}\frac{i-1}{i^{2\gamma}} - \frac{1}{2(1-\gamma)}\right|\lesssim
        \begin{cases}
            \frac{1}{n} & \text{ if }0\leq \gamma<1/2\\
            \frac{\log n}{n} & \text{ if }\gamma = 1/2\\
            \frac{1}{n^{2(1-\gamma)}} & \text{ if }1/2<\gamma<1
        \end{cases}.
    \end{align*}
\end{lemma}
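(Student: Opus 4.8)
The plan is to isolate the leading-order term by writing $\frac{i-1}{i^{2\gamma}} = i^{1-2\gamma} - i^{-2\gamma}$, so that $\sum_{i=2}^{n}\frac{i-1}{i^{2\gamma}} = \sum_{i=2}^{n} i^{1-2\gamma} - \sum_{i=2}^{n} i^{-2\gamma}$. The second sum is only a lower-order correction and is handled directly by Lemma~\ref{lemma:sumip}: it is $\Theta(n^{1-2\gamma})$ when $\gamma<1/2$, $\Theta(\log n)$ when $\gamma=1/2$, and $\Theta(1)$ when $\gamma>1/2$. The first sum is where the claimed main term $\frac{n^{2(1-\gamma)}}{2(1-\gamma)}$ originates, via the integral comparison $\sum_{i=2}^{n} i^{1-2\gamma} \approx \int_1^{n} x^{1-2\gamma}\,\d x = \frac{n^{2-2\gamma}-1}{2-2\gamma}$.

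For the first sum I would split according to monotonicity of $x\mapsto x^{1-2\gamma}$. When $0\le\gamma<1/2$ the function is nondecreasing, so $\int_1^{n} x^{1-2\gamma}\,\d x \le \sum_{i=2}^{n} i^{1-2\gamma} \le \int_2^{n+1} x^{1-2\gamma}\,\d x$; when $1/2<\gamma<1$ it is nonincreasing and the two inequalities reverse. In either regime the gap between $\sum_{i=2}^{n} i^{1-2\gamma}$ and $\frac{n^{2-2\gamma}}{2-2\gamma}$ is controlled by $(n+1)^{2-2\gamma}-n^{2-2\gamma}$ together with the bounded additive constants $1$ and $2^{2-2\gamma}$, and a mean-value estimate gives $(n+1)^{2-2\gamma}-n^{2-2\gamma}\lesssim n^{1-2\gamma}$. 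Hence $\bigl|\sum_{i=2}^{n} i^{1-2\gamma} - \tfrac{n^{2-2\gamma}}{2-2\gamma}\bigr| \lesssim n^{1-2\gamma}+1$, which is $\lesssim n^{1-2\gamma}$ for $\gamma<1/2$ and $\lesssim 1$ for $\gamma>1/2$. Combining this with the Lemma~\ref{lemma:sumip} bound on $\sum_{i=2}^{n} i^{-2\gamma}$ and dividing through by $n^{2(1-\gamma)}$ produces the claimed $1/n$ bound when $0\le\gamma<1/2$ (using $n^{1-2\gamma}/n^{2-2\gamma}=1/n$ and $1/n^{2-2\gamma}\le 1/n$), and the claimed $n^{-2(1-\gamma)}$ bound when $1/2<\gamma<1$.

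The borderline case $\gamma=1/2$ is cleanest to treat exactly rather than through the integral bound: there $\frac{i-1}{i^{2\gamma}} = 1 - \tfrac1i$, so $\sum_{i=2}^{n}\frac{i-1}{i} = (n-1) - (H_n-1) = n - H_n$ with $H_n$ the $n$-th harmonic number, while $n^{2(1-\gamma)}=n$ and $\frac{1}{2(1-\gamma)}=1$; thus the quantity of interest equals $-H_n/n$, of absolute value $\Theta(\log n/n)$ by Lemma~\ref{lemma:sumip}. Assembling the three regimes gives the lemma. The computation is routine; the only point requiring care is bookkeeping the error order across the transition at $\gamma=1/2$, where the dominant contribution switches from the genuine $\Theta(n^{1-2\gamma})$ Euler--Maclaurin-type term to the logarithmic tail of $\sum i^{-2\gamma}$, and this is precisely why $\gamma=1/2$ is peeled off and handled via the exact harmonic identity.
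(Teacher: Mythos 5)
Your proof is correct and follows essentially the same route as the paper's: the same splitting $\tfrac{i-1}{i^{2\gamma}}=i^{1-2\gamma}-i^{-2\gamma}$, an integral comparison for the $\sum i^{1-2\gamma}$ term (the paper cites an Euler--Maclaurin-type result of Apostol where you argue directly from monotonicity plus a mean-value estimate), and Lemma~\ref{lemma:sumip} for the $\sum i^{-2\gamma}$ tail. The only cosmetic difference is that you peel off $\gamma=1/2$ and handle it exactly via the harmonic identity, whereas the paper's general argument already covers that case because the integral-comparison error vanishes there and the $\log n/n$ contribution comes from the tail sum.
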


\begin{proof}
    Notice that
    \begin{align}\label{eq:i_1_2alpha_decomp}
        \frac{1}{n^{2(1-\gamma)}}\sum_{i=2}^{n}\frac{i-1}{i^{2\gamma}} = \frac{1}{n^{2(1-\gamma)}}\sum_{i=1}^{n}\frac{1}{i^{2\gamma-1}} - \frac{1}{n^{2(1-\gamma)}}\sum_{i=1}^{n}\frac{1}{i^{2\gamma}}.
    \end{align}
    Invoking Theorem 2 from \cite{apostol1999elementary} with $f(x) = x^{1-2\gamma}$ we get
    \begin{align*}
        \left|\sum_{i=1}^{n}\frac{1}{i^{2\gamma-1}} - \int_{1}^{n}x^{1-2\gamma}\d x\right|
        &\lesssim \int_{1}^{n}(1-2\gamma)x^{-2\gamma}\d x + 1\lesssim \frac{1}{n^{2\gamma-1}} + 1.
    \end{align*}
    Evaluating the integral gives
    \begin{align*}
        \left|\sum_{i=1}^{n}\frac{1}{i^{2\gamma-1}} - \frac{n^{2(1-\gamma)}}{2(1-\gamma)}\right|\lesssim \frac{1}{n^{2\gamma-1}} + 1 + \frac{1}{2(1-\gamma)}.
    \end{align*}
    Hence,
    \begin{align}\label{eq:i_2alpha_1_bdd}
        \left|\frac{1}{n^{2(1-\gamma)}}\sum_{i=1}^{n}\frac{1}{i^{2\gamma-1}} - \frac{1}{2(1-\gamma)}\right|\lesssim \frac{1}{n} + \frac{1}{n^{2(1-\gamma)}}.
    \end{align}
    Also, by Lemma~\ref{lemma:sumip} we know,
    \begin{align}\label{eq:i_2alpha_bdd}
        \frac{1}{n^{2(1-\gamma)}}\sum_{i=1}^{n}\frac{1}{i^{2\gamma}} = 
        \begin{cases}
            \Theta\left(1/n\right) & \text{ if }0\leq \gamma<1/2\\
            \Theta\left(\log n/n\right) & \text{ if }\gamma=1/2\\
            \Theta\left(1/n^{2(1-\gamma)}\right) & \text{ if }1/2<\gamma<1.
        \end{cases}
    \end{align}
    Now combining \eqref{eq:i_1_2alpha_decomp}, \eqref{eq:i_2alpha_1_bdd} and \eqref{eq:i_2alpha_bdd} we get,
    \begin{align*}
        \bigg|\frac{1}{n^{2(1-\gamma)}}
        &\sum_{i=2}^{n}\frac{i-1}{i^{2\gamma}} - \frac{1}{2(1-\gamma)}\bigg|\\
        &\lesssim  \left|\frac{1}{n^{2(1-\gamma)}}\sum_{i=1}^{n}\frac{1}{i^{2\gamma-1}} - \frac{1}{2(1-\gamma)}\right| +  \frac{1}{n^{2(1-\gamma)}}\sum_{i=1}^{n}\frac{1}{i^{2\gamma}}\lesssim
        \begin{cases}
            \frac{1}{n} & \text{ if }0\leq \gamma<1/2\\
            \frac{\log n}{n} & \text{ if }\gamma = 1/2\\
            \frac{1}{n^{2(1-\gamma)}} & \text{ if }1/2<\gamma<1
        \end{cases}
    \end{align*}
    which completes the proof.
\end{proof}

\begin{lemma}\label{lemma:var_lim_half}
Let $s_{n,\gamma} = \sum_{i=2}^{n}\frac{i-1}{i^\gamma}$. Then for $\gamma < 1$,
\begin{align*}
    \lim_{n\ra\infty}\frac{1}{s_{n,\gamma}^2}\sum_{i< j}\frac{(i-1)(j-2)}{i^\gamma j^\gamma} = \frac{1}{2}.
\end{align*}
\end{lemma}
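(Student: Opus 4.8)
The plan is to invoke the Stolz--Ces\`{a}ro theorem (the $\infty/\infty$ form) on the ratio of partial sums, rather than to track the asymptotics of numerator and denominator separately. Set
\[
    N_n := \sum_{2\le i<j\le n}\frac{(i-1)(j-2)}{i^\gamma j^\gamma},\qquad D_n := s_{n,\gamma}^2 .
\]
For $\gamma<1$ each increment $s_{n,\gamma}-s_{n-1,\gamma}=\frac{n-1}{n^\gamma}$ is positive and of order $n^{1-\gamma}\to\infty$, so $s_{n,\gamma}$, and hence $D_n$, is strictly increasing and diverges to $+\infty$ (and $N_n\to\infty$ as well, since it dominates its $j=n$ term). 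By Stolz--Ces\`{a}ro it therefore suffices to show $(N_n-N_{n-1})/(D_n-D_{n-1})\to\tfrac12$.

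First I would compute the two increments explicitly. Passing from $n-1$ to $n$ adds to $N_n$ exactly the summands with $j=n$, and since the $i=1$ summand vanishes,
\[
    N_n-N_{n-1}=\frac{n-2}{n^\gamma}\sum_{i=1}^{n-1}\frac{i-1}{i^\gamma}=\frac{n-2}{n^\gamma}\,s_{n-1,\gamma},
\]
while the difference of squares gives
\[
    D_n-D_{n-1}=\bigl(s_{n,\gamma}-s_{n-1,\gamma}\bigr)\bigl(s_{n,\gamma}+s_{n-1,\gamma}\bigr)=\frac{n-1}{n^\gamma}\bigl(s_{n,\gamma}+s_{n-1,\gamma}\bigr).
\]
Dividing and cancelling the common factor $n^{-\gamma}$ leaves
\[
    \frac{N_n-N_{n-1}}{D_n-D_{n-1}}=\frac{n-2}{n-1}\cdot\frac{s_{n-1,\gamma}}{s_{n,\gamma}+s_{n-1,\gamma}} .
\]

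It then remains to show $s_{n-1,\gamma}/s_{n,\gamma}\to 1$, which forces the second factor to tend to $\tfrac12$ (and clearly $\tfrac{n-2}{n-1}\to1$). This is where Lemma~\ref{lemma:sumip} enters: writing $\frac{i-1}{i^\gamma}=i^{1-\gamma}-i^{-\gamma}$ and applying the lemma with exponents $1-\gamma>0$ and $-\gamma\in(-1,0]$ gives $s_{n,\gamma}=\Theta(n^{2-\gamma})$; in particular $s_{n,\gamma}\ge c\,n^{2-\gamma}$ for some $c>0$ and all large $n$. Hence
\[
    0\le 1-\frac{s_{n-1,\gamma}}{s_{n,\gamma}}=\frac{s_{n,\gamma}-s_{n-1,\gamma}}{s_{n,\gamma}}=\frac{(n-1)/n^\gamma}{s_{n,\gamma}}\lesssim\frac{n^{1-\gamma}}{n^{2-\gamma}}=\frac1n\longrightarrow 0,
\]
so $s_{n-1,\gamma}/s_{n,\gamma}\to1$ and therefore $\frac{s_{n-1,\gamma}}{s_{n,\gamma}+s_{n-1,\gamma}}\to\tfrac12$. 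Combining with the display above yields $(N_n-N_{n-1})/(D_n-D_{n-1})\to\tfrac12$, and Stolz--Ces\`{a}ro finishes the proof. (Alternatively one could invoke Lemma~\ref{lemma:i_2alpha_convg_rate} with parameter $\gamma/2$ to get the sharp asymptotic $s_{n,\gamma}\sim n^{2-\gamma}/(2-\gamma)$, which also gives $s_{n-1,\gamma}/s_{n,\gamma}\to1$.)

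I do not anticipate a serious obstacle: the only points requiring care are verifying the Stolz--Ces\`{a}ro hypotheses ($D_n$ strictly increasing and unbounded, immediate once $\gamma<1$) and correctly handling the boundary indices $i=1$, $j=2$ in $N_n$, whose summands vanish so that the increment is exactly $\frac{n-2}{n^\gamma}s_{n-1,\gamma}$. The hypothesis $\gamma<1$ is used precisely to ensure $1-\gamma>0$, so that $s_{n,\gamma}$ grows polynomially in $n$ while any single-term increment is of strictly lower order.
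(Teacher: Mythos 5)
Your proof is correct, and it takes a genuinely different route from the paper. The paper computes sharp asymptotics for numerator and denominator separately: it writes $\frac{i-1}{i^\gamma}=i^{1-\gamma}-i^{-\gamma}$, uses an integral-comparison result to get $s_{n,\gamma}/n^{2-\gamma}\to 1/(2-\gamma)$, and then symmetrizes the double sum, expressing it as $\tfrac12\bigl[\bigl(\sum_i\tfrac{i-1}{i^\gamma}\bigr)^2-\text{diagonal}\bigr]$ minus a lower-order cross term, to conclude that the numerator is asymptotic to $n^{4-2\gamma}/\bigl(2(2-\gamma)^2\bigr)$. You instead apply Stolz--Ces\`aro to $N_n/s_{n,\gamma}^2$: the increments are computed exactly ($N_n-N_{n-1}=\tfrac{n-2}{n^\gamma}s_{n-1,\gamma}$, $s_{n,\gamma}^2-s_{n-1,\gamma}^2=\tfrac{n-1}{n^\gamma}(s_{n,\gamma}+s_{n-1,\gamma})$), and the limit $\tfrac12$ then only requires $s_{n-1,\gamma}/s_{n,\gamma}\to1$, for which the crude bound $s_{n,\gamma}=\Theta(n^{2-\gamma})$ from Lemma~\ref{lemma:sumip} suffices. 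What your approach buys is that no sharp constants and no symmetrization identity are needed (you never need the value $1/(2-\gamma)$, only the growth order), making the argument shorter and more elementary; the paper's approach, in exchange, delivers the explicit asymptotic constants for $s_{n,\gamma}$ and the double sum. One minor remark: like the paper, you tacitly work with $\gamma\in[0,1)$ (you write $-\gamma\in(-1,0]$), which is the regime actually used, and your verification of the Stolz--Ces\`aro hypotheses ($s_{n,\gamma}^2$ strictly increasing and unbounded) and of the vanishing boundary terms $i=1$, $j=2$ is accurate.
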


\begin{proof}
We will analyze the denominator $s_{n,\gamma}$ and the numerator $\sum_{i< j}\frac{(i-1)(j-2)}{i^\gamma j^\gamma}$ separately. We begin with the denominator given by $s_{n,\gamma} = \sum_{i=2}^{n}\frac{i-1}{i^\gamma}$. We can split this sum:
$$s_{n,\gamma} = \sum_{i=2}^{n} \left( \frac{i}{i^\gamma} - \frac{1}{i^\gamma} \right) = \sum_{i=2}^{n} i^{1-\gamma} - \sum_{i=2}^{n} i^{-\gamma}.$$
For $\gamma < 1$ we consider the two terms separately. For the first term, once again applying the integral approximation from Theorem 2 of \cite{apostol1999elementary} we can easily show,
\begin{align*}
    \frac{1}{n^{2-\gamma}}\left|\sum_{i=2}^{n}i^{1-\gamma} - \int_{2}^{n}x^{1-\gamma}\d x\right| = o(1).
\end{align*}
Evaluating the inner integral we then conclude
\begin{align}\label{eq:i_1_alpha_lim}
    \frac{1}{n^{2-\gamma}}\sum_{i=2}^{n}i^{1-\gamma} - \frac{1}{2-\gamma} = o(1).
\end{align}
By a similar argument we can show $\frac{1}{n^{2-\gamma}}\sum_{i=2}^{n}i^{-\gamma} = o(1)$ and hence $s_{n,\gamma}/n^{2-\gamma} \ra 1/(2-\gamma)$.\\

To find the limit of the numerator, note that we can rewrite the numerator as
\begin{align*}
    N := \sum_{i<j}\frac{(i-1)(j-2)}{i^{\gamma}j^{\gamma}} = \frac{1}{2}\left[\left(\sum_{i=1}^{n}\frac{i-1}{i^\gamma}\right)^2 - \sum_{i=1}^{n}\left(\frac{i-1}{i}\right)^2\right] - \sum_{i<j}\frac{i-1}{i^\gamma j^\gamma}.
\end{align*}
Then applying Lemma~\ref{lemma:sumip} a simple computation yields
\begin{align*}
    \frac{N}{n^{4-2\gamma}} = \frac{1}{2}\left(\frac{1}{n^{2-\gamma}}\sum_{i=1}^{n}\frac{i-1}{i^\gamma}\right)^2 + o(1) = \frac{1}{2}\left(\frac{1}{n^{2-\gamma}}\sum_{i=1}^{n}i^{1-\gamma}\right)^2 + o(1).
\end{align*}
Recalling the limit from \eqref{eq:i_1_alpha_lim} we can conclude that $N/n^{2-4\gamma} \ra 1/2(2-\gamma)^2$. The proof is now completed by recalling the limit of $s_{n,\gamma}$.
\end{proof}

\begin{lemma}\label{lemma:gauss_kernel_prod}
    Consider $f, g\in L_2(\R^d)$. Then 
    \begin{align*}
        \int G_{\nu}(x,y)f(x)g(y)\d x\d y = \left(\frac{\pi}{\nu}\right)^{d/2}\int \exp\left(-\frac{\|\omega\|^2}{4\nu}\right)\overline{\cF f(\omega)}\cF g(\omega)\d\omega,
    \end{align*}
    where $G_\nu(x,y) = \exp(-\nu\left\|x-y\right\|_2^2)$ is the Gaussian kernel with bandwidth $\nu$, $\cF f(\cdot)$ and $\cF g(\cdot)$ are the corresponding Fourier transforms and $\overline{\cF f(\cdot)}$ is the complex conjugate.
\end{lemma}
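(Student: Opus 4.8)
The plan is to reduce the identity to the elementary Fourier representation of the Gaussian. First I would observe that the kernel depends only on the difference of its arguments, $G_\nu(x,y)=g_\nu(x-y)$ with $g_\nu(u)=\exp(-\nu\|u\|^2)$, and recall the standard formula
\begin{align*}
    \exp\left(-\nu\|u\|^2\right) = (4\pi\nu)^{-d/2}\int_{\R^d}\exp\left(-\frac{\|\omega\|^2}{4\nu}\right)e^{i\langle\omega,u\rangle}\,\d\omega ,
\end{align*}
which is the Gaussian integral $\int_{\R^d} e^{-a\|\omega\|^2 + i\langle\omega,u\rangle}\,\d\omega = (\pi/a)^{d/2}e^{-\|u\|^2/(4a)}$ evaluated at $a=1/(4\nu)$ and rearranged.

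Next I would substitute this representation into the left-hand side of the claimed identity and interchange the order of integration. The interchange is legitimate by Fubini's theorem: since $g_\nu\in L^1(\R^d)$, Young's inequality together with Cauchy--Schwarz gives $\iint|g_\nu(x-y)|\,|f(x)|\,|g(y)|\,\d x\,\d y\le\|f\|_{L^2}\,\|g_\nu\|_{L^1}\,\|g\|_{L^2}<\infty$, and after inserting absolute values the triple integral coming from the Gaussian representation factors as $(4\pi\nu)^{-d/2}\left(\int_{\R^d}e^{-\|\omega\|^2/(4\nu)}\,\d\omega\right)\|f\|_{L^1}\|g\|_{L^1}$, which is finite for $f,g\in L^1\cap L^2$ --- in particular in the application where $f=g=p-q$ is a difference of densities.

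After the interchange, writing $e^{i\langle\omega,x-y\rangle}=e^{i\langle\omega,x\rangle}e^{-i\langle\omega,y\rangle}$ decouples the $x$- and $y$-integrals. With the symmetric normalization $\cF h(\omega)=(2\pi)^{-d/2}\int_{\R^d} h(x)e^{-i\langle\omega,x\rangle}\,\d x$, the $x$-integral equals $(2\pi)^{d/2}\cF f(-\omega)=(2\pi)^{d/2}\overline{\cF f(\omega)}$ (the last equality using that $f$ is real-valued), and the $y$-integral equals $(2\pi)^{d/2}\cF g(\omega)$. The remaining constant $(4\pi\nu)^{-d/2}(2\pi)^d$ then collapses to exactly $(\pi/\nu)^{d/2}$, which gives the claimed formula. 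To extend from $f,g\in L^1\cap L^2$ to arbitrary $f,g\in L^2$, one observes that both sides are bilinear and bounded on $L^2\times L^2$ --- the left-hand side by the displayed bound, the right-hand side since $|e^{-\|\omega\|^2/(4\nu)}|\le1$ and $\cF f,\cF g\in L^2$ by Plancherel --- so the identity passes to the $L^2$ limit by density of Schwartz functions.

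The argument is routine and I do not expect any genuine obstacle. The only points requiring care are (i) keeping the Fourier-transform normalization consistent with the convention used implicitly elsewhere in the paper, since it is precisely the symmetric convention that makes the multiplicative constant collapse to $(\pi/\nu)^{d/2}$ with no stray power of $2\pi$; and (ii) the Fubini and density steps indicated above, both of which are standard.
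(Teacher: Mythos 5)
Your proof is correct and follows essentially the same route as the paper: the paper writes $\exp(-\nu\|x-y\|^2)$ as the characteristic function $\E\exp[\iota Z^\top (x-y)]$ of $Z\sim \mathrm N(\bm 0_d, 2\nu \bm I_d)$, which is exactly your Gaussian Fourier representation of the kernel, and then interchanges integrals and decouples the $x$- and $y$-integrals under the symmetric Fourier convention to obtain the constant $(\pi/\nu)^{d/2}$. The only difference is that you make explicit the Fubini justification and the $L^1\cap L^2$-to-$L^2$ density/boundedness step, which the paper leaves implicit.
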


\begin{proof}
    Consider $Z$ to be a Gaussian random vector with mean $\bm 0_d$ and covariance matrix $2\nu \bm I_d$. Then,
    \begin{align*}
        \int 
        & G_{\nu}(x,y)f(x)g(y)\d x\d y = \int \exp\left(-\nu\left\|x-y\right\|^2\right)f(x)g(y)\d x\d y\\
        & = \int \E \exp\left[\iota Z^\top(x-y)\right]f(x)g(y)\d x\d y\\
        & = \E\int\exp(\iota Z^\top x)f(x)\d x \int\exp(-\iota Z^\top y)g(y)\d y\\
        & = \int \frac{1}{(4\pi\nu)^{d/2}}\exp\left(-\frac{\|\omega\|^2}{4\nu}\right)\int\exp(\iota \omega^\top x)f(x)\d x \int\exp(-\iota \omega^\top y)g(y)\d y\d \omega\\
        & = \left(\frac{\pi}{\nu}\right)^{d/2}\int \exp\left(-\frac{\|\omega\|^2}{4\nu}\right)\overline{\cF f(\omega)}\cF g(\omega)\d\omega.
    \end{align*}
\end{proof}

\end{document}